\documentclass[11pt]{article}

\newif\ifcomm
\commfalse

\usepackage{diagbox}
\usepackage{soul} 
\usepackage[english]{babel}
\usepackage[utf8x]{inputenc}
\usepackage[T1]{fontenc}

\usepackage{comment}
\usepackage{amsmath, graphicx, latexsym, amssymb, amsthm, amsfonts}
\usepackage{hyperref}
\usepackage{natbib}
\usepackage[mathscr]{eucal}
\usepackage{graphics}
\usepackage{graphicx}
\usepackage[dvipsnames]{xcolor}
\usepackage{epsfig}
\usepackage{bbm}
\usepackage{xspace}
\usepackage{float}
\usepackage{caption}
\usepackage{subcaption}
\captionsetup[figure]{font=small,labelfont=small}

 \topmargin 0.0truein
 \oddsidemargin 0.0truein
 \evensidemargin 0.0truein
 \textheight 8.5truein
 \textwidth 6.5truein
 \footskip 0.6truein
 \headheight 0.0truein
 \headsep 0.3truein

\ifcomm
  
  \newcommand{\GM}[1]{{({\color{blue}Gal: #1})}}
\else
  
  \newcommand{\GM}[1]{{({\color{blue}Gal: #1})}}
\fi

\begin{document}

\newtheorem{theorem}{Theorem}[section]
\newtheorem{lemma}[theorem]{Lemma}
\newtheorem{corollary}[theorem]{Corollary}
\newtheorem{definition}[theorem]{Definition}
\newtheorem{example}[theorem]{Example}
\newtheorem{proposition}[theorem]{Proposition}
\newtheorem{condition}[theorem]{Condition}
\newtheorem{assumption}[theorem]{Assumption}
\newtheorem{conjecture}[theorem]{Conjecture}
\newtheorem{problem}[theorem]{Problem}
\newtheorem{remark}[theorem]{Remark}
\newtheorem{key observation}[theorem]{Key observation}
\newtheorem{property}[theorem]{Property}


\def\thelemma{\arabic{section}.\arabic{lemma}}
\def\thetheorem{\arabic{section}.\arabic{theorem}}
\def\thecorollary{\arabic{section}.\arabic{corollary}}
\def\thedefinition{\arabic{section}.\arabic{definition}}
\def\theexample{\arabic{section}.\arabic{example}}
\def\theproposition{\arabic{section}.\arabic{proposition}}
\def\thecondition{\arabic{section}.\arabic{condition}}
\def\theassumption{\arabic{section}.\arabic{assumption}}
\def\theconjecture{\arabic{section}.\arabic{conjecture}}
\def\theproblem{\arabic{section}.\arabic{problem}}
\def\theremark{\arabic{section}.\arabic{remark}}

\makeatletter
\newtheorem*{rep@theorem}{\rep@title}
\newcommand{\newreptheorem}[2]{%
\newenvironment{rep#1}[1]{%
 \def\rep@title{#2 \ref{##1}}%
 \begin{rep@theorem}}%
 {\end{rep@theorem}}}
\makeatother

\newreptheorem{theorem}{Theorem}
\newreptheorem{proposition}{Proposition}
\newreptheorem{lemma}{Lemma}
\newcommand{\la}{\lambda}
\newcommand{\eps}{\varepsilon}
\newcommand{\ph}{\varphi}
\newcommand{\vr}{\varrho}
\newcommand{\al}{\alpha}
\newcommand{\bet}{\beta}
\newcommand{\gam}{\gamma}
\newcommand{\kap}{\kappa}
\newcommand{\s}{\sigma}
\newcommand{\sig}{\sigma}
\newcommand{\del}{\delta}
\newcommand{\om}{\omega}
\newcommand{\Gam}{\mathnormal{\Gamma}}
\newcommand{\Del}{\mathnormal{\Delta}}
\newcommand{\Th}{\mathnormal{\Theta}}
\newcommand{\La}{\mathnormal{\Lambda}}
\newcommand{\X}{\mathnormal{\Xi}}
\newcommand{\PI}{\mathnormal{\Pi}}
\newcommand{\Sig}{\mathnormal{\Sigma}}
\newcommand{\Ups}{\mathnormal{\Upsilon}}
\newcommand{\Ph}{\mathnormal{\Phi}}
\newcommand{\Ps}{\mathnormal{\Psi}}
\newcommand{\Om}{\mathnormal{\Omega}}

\newcommand{\C}{{\mathbb C}}
\newcommand{\D}{{\mathbb D}}
\newcommand{\M}{{\mathbb M}}
\newcommand{\N}{{\mathbb N}}
\newcommand{\Q}{{\mathbb Q}}
\newcommand{\R}{{\mathbb R}}
\newcommand{\U}{{\mathbb U}}
\newcommand{\T}{{\mathbb T}}
\newcommand{\Z}{{\mathbb Z}}

\newcommand{\EE}{{\mathbb E}}
\newcommand{\E}{{\mathbb E}}
\newcommand{\FF}{{\mathbb F}}
\newcommand{\I}{{\mathbb I}}
\newcommand{\PP}{{\mathbb P}}
\newcommand{\ONE}{\boldsymbol{1}}

\newcommand{\calA}{{\cal A}}
\newcommand{\calB}{{\cal B}}
\newcommand{\calC}{{\cal C}}
\newcommand{\calD}{{\cal D}}
\newcommand{\calE}{{\cal E}}
\newcommand{\calF}{{\cal F}}
\newcommand{\calG}{{\cal G}}
\newcommand{\calH}{{\cal H}}
\newcommand{\calI}{{\cal I}}
\newcommand{\calJ}{{\cal J}}
\newcommand{\calK}{{\cal K}}
\newcommand{\calL}{{\cal L}}
\newcommand{\calM}{{\cal M}}
\newcommand{\calN}{{\cal N}}
\newcommand{\calP}{{\cal P}}
\newcommand{\calQ}{{\cal Q}}
\newcommand{\calR}{{\cal R}}
\newcommand{\calS}{{\cal S}}
\newcommand{\calT}{{\cal T}}
\newcommand{\calU}{{\cal U}}
\newcommand{\calV}{{\cal V}}
\newcommand{\calX}{{\cal X}}
\newcommand{\calY}{{\cal Y}}

\newcommand{\bB}{{\mathbf B}}
\newcommand{\bI}{{\mathbf I}}
\newcommand{\bJ}{{\mathbf J}}
\newcommand{\bK}{{\mathbf K}}
\newcommand{\bY}{{\mathbf Y}}
\newcommand{\bX}{{\mathbf X}}
\newcommand{\bTh}{{\mathbf \Theta}}

\newcommand{\scrA}{\mathscr{A}}
\newcommand{\scrM}{\mathscr{M}}
\newcommand{\scrS}{\mathscr{S}}

\newcommand{\frA}{\mathfrak{A}}
\newcommand{\frM}{\mathfrak{M}}
\newcommand{\II}{\mathfrak{I}}

\newcommand{\frS}{\mathfrak{S}}

\renewcommand{\proof}{\noindent{\bf Proof.\ }}

\newcommand{\lan}{\langle}
\newcommand{\ran}{\rangle}
\newcommand{\uu}{\underline}
\newcommand{\oo}{\overline}
\newcommand{\skp}{\vspace{\baselineskip}}
\newcommand{\supp}{{\rm supp}}
\newcommand{\diag}{{\rm diag}}
\newcommand{\trace}{{\rm trace}}
\newcommand{\w}{\wedge}
\newcommand{\lt}{\left}
\newcommand{\rt}{\right}
\newcommand{\pl}{\partial}
\newcommand{\abs}[1]{\lvert#1\rvert}
\newcommand{\norm}[1]{\lVert#1\rVert}
\newcommand{\mean}[1]{\langle#1\rangle}
\newcommand{\To}{\Rightarrow}
\newcommand{\til}{\widetilde}
\newcommand{\wh}{\widehat}
\newcommand{\dist}{{\rm dist}}
\newcommand{\grad}{\nabla}
\newcommand{\iy}{\infty}

\newcommand{\be}{\begin{equation}}
\newcommand{\ee}{\end{equation}}


\newcommand{\para}[1]{\left( #1 \right)}        
\newcommand{\brac}[1]{\left\{ #1 \right\}}
\newcommand{\set}[1]{\left\{#1\right\}}         
\newcommand{\sbrac}[1]{\left[ #1 \right]}
\newcommand{\floor}[1]{\left\lfloor #1 \right\rfloor}
\newcommand{\ceil}[1]{\left\lceil #1 \right\rceil}

\newcommand{\tab}{\hspace*{0.3in}}
\newcommand{\Tab}{\hspace*{1.0in}}
\newcommand{\no}{\nonumber}
\newcommand{\noi}{\noindent}
\newcommand{\txt}{\textrm}
\newcommand{\ds}{\displaystyle}
\newcommand{\RR}{\mathbb{R}}
\newcommand{\vf}{\varphi}
\definecolor{co}{rgb}{0.8,0,0.8}
\definecolor{gr}{gray}{0.5}
\newcommand{\gr}{\color{gr}}
\newcommand{\vp}{\varepsilon}

\newcommand{\IA}{\text{\it IA}}

\newcommand{\eqtri}{\stackrel{\triangle}{=}}

\newcommand{\twopartdef}[4]
{
	\left\{
		\begin{array}{ll}
			#1 & \mbox{if } #2 \\
			#3 & \mbox{if } #4
		\end{array}
	\right.
}
\catcode`\@=11

\catcode`\@=12

\providecommand{\vs}{vs. }
\providecommand{\ie}{{i.e.,} }
\providecommand{\eg}{{e.g.,} }
\providecommand{\cf}{{cf.} }
\providecommand{\etc}{{etc.} }
\providecommand{\iid}{{i.i.d.\ }}

\newcommand{\rev}[1]{{\color{blue}#1}}

\newcommand\blfootnote[1]{%
  \begingroup
  \renewcommand\thefootnote{}\footnote{#1}%
  \addtocounter{footnote}{-1}%
  \endgroup
}

\title{Load Balancing Using Sparse Communication\blfootnote{
Manuscript version:  May 2024. Previously circulated under the title ``CARE: Resource Allocation Using Sparse Communication.'' Xu Kuang published under a different full name in earlier versions of this manuscript. Please use ``G.~Mendelson and X.~Kuang" when citing this paper.}} 

\author{Gal Mendelson \\ Technion \and  Xu Kuang \\ Stanford University}
\date{\vspace{-5ex}}
\maketitle

\begin{abstract}
Load balancing across parallel servers is an important class of congestion control problems that arises in service systems. An effective load balancer relies heavily on accurate, real-time congestion information to make routing decisions. However, obtaining such information can impose significant communication overheads, especially in demanding applications like those found in modern data centers.

We introduce a framework for communication-aware load balancing and design new load balancing algorithms that perform exceptionally well even in scenarios with sparse communication patterns. Central to our approach is state approximation, where the load balancer first estimates server states through a communication protocol. Subsequently, it utilizes these approximate states within a load balancing algorithm to determine routing decisions. 

We demonstrate that by using a novel communication protocol, one can achieve accurate  queue length approximation with sparse communication: for a maximal approximation error of $x$, the communication frequency only needs to be $O(1/x^2)$. 
We further show, via a diffusion analysis, that a constant maximal approximation error is sufficient for achieving asymptotically optimal performance. Taken together, these results therefore demonstrate that highly performant load balancing is possible with very little communication.  Through simulations, we observe that the proposed designs match or surpass the performance of state-of-the-art load balancing algorithms while drastically reducing communication rates by up to 90\%. 
\end{abstract}


\section{Introduction}

Load balancing across parallel servers is a fundamental resource allocation problem. It arises naturally when a decision maker has to dynamically route incoming demand  to be served at one of several, congestion-prone processing resources, and as such has found applications in health care, supply chains, and call centers. One prominent example of load balancing is in managing modern computer clusters and data centers, where a collection of computer servers process a stream of incoming jobs and performance critically depends on how well the servers are utilized. 

The system is illustrated in Figure \ref{fig:load_balancing_figure}. A load balancer allocates each incoming job to a resource or server, and the jobs then wait in a local queue until service is rendered by the said server. It is well understood that the key to a load balancer is the ability to steer incoming jobs away from already-congested servers. For this reason, it is paramount that the load balancer have access to each server's congestion states information, i.e., their queue length, in real time. Unfortunately, having perfect state information essentially amounts to the load balancer knowing exactly when and at which server every single job completes their service, which could incur a punishingly costly burden on the underlying communication infrastructure especially for large systems.

\begin{figure}[H]
\centering
\includegraphics[width=0.5\linewidth]{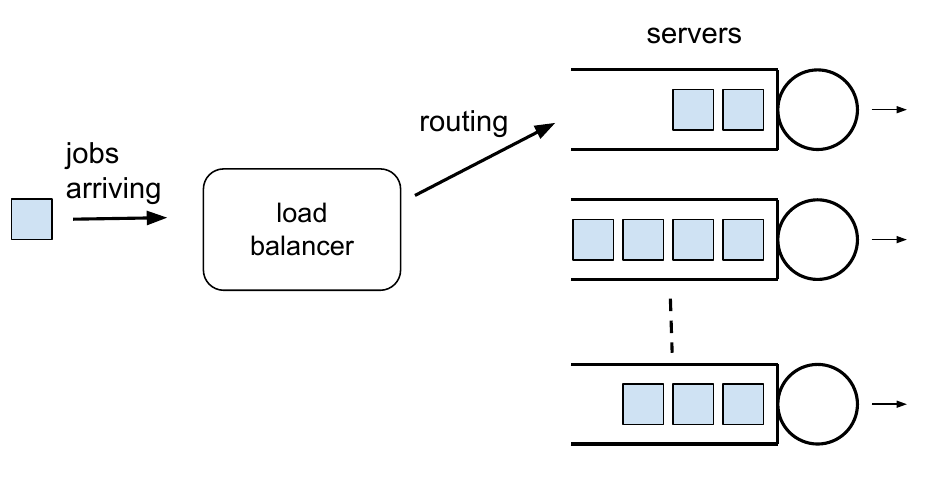}
\caption{A load balancer routes incoming jobs to servers working in parallel.}
\label{fig:load_balancing_figure}
\end{figure}

The goal of the present paper is to understand how to design performant  load balancing systems under a limited communication budget. We are not first to recognize the potential for optimizing information and communication in load balancing. A growing body of literature has identified performant load balancing algorithms that use less communication:  they query only a small number of servers each time a job needs to be routed \citep{vvedenskaya1996queueing, mitzenmacher2001power}, inform the load balancer only when a server becomes empty \citep{lu2011join}, or sample the state information of the servers on an intermittent basis \citep{tsitsiklis2013power}. 

However, while existing designs use less communication than would a naive approach, the resulting communication frequency remains substantial and not sparse. For instance, all three examples given above require communication frequencies that are comparable to exchanging one message for every arriving job, which adds up quickly in large systems (see Section \ref{sec: related work} for a detailed discussion). For context, it is known that a one-message-per-job communication rate is sufficient for maintaining perfect  state information \citep{lu2011join}.  Therefore, the ranges of communication frequency  achieved by existing algorithms do not mark a significant reduction from a full-information setup.  The question remains: 
\begin{center}
    Is performant load balancing possible under sparse communication, and if so, how to design it? 
\end{center}

\paragraph{State Approximation as Intermediary} In this paper, we formulate a model for load balancing with sparse communication. Using it, we will both provide a largely affirmative answer to the above question, and offer concrete design guidelines on how to achieve it.  Core to our approach is the use of state approximation: the load balancer first estimates the state of the queue lengths using a communication protocol,  and subsequently feeds the approximated state into a load balancing algorithm to generate a routing decision. 

Using state approximation explicitly in the design  has several benefits. State approximation is a natural abstraction for aggregate information learned over a sequence of sparse communications. It also allows for the freedom for the designer to choose the load balancing algorithm independently from the communication protocol, leading to greater flexibility in practical implementations. Indeed, one would expect a performant full-information algorithm would also perform reasonably well under  good, if not perfect, state approximation, as is evidenced by recent  empirical successes in algorithms that combine conventional load balancing with state approximation \citep{van2019hyper, vargaftik2020lsq}.  

Perhaps equally important, state approximation is a useful conceptual intermediary that breaks a complex design challenge into more manageable sub-tasks which we can now tackle individually: 
\begin{enumerate}
    \item How to use sparse communication to achieve reasonably accurate state approximations? 
    \item How to leverage reasonably accurate state approximations for good performance?
\end{enumerate}

\paragraph{Preview of Main Contributions} To answer these questions, we propose a model for designing and  analyzing communication-constrained load balancing systems, consisting of modules that represent {Communication, Approximation, Resource allocation, and Environment}, respectively. We will refer to this model by its acronym, {CARE} (Figure \ref{fig:CARE}). The communication and approximation components specify how the servers communicate with the load balancer, and how the load balancer converts the messages into an approximation of the server states. The resource allocation component captures how the load balancer  maps the state approximation into routing actions. Finally, the environment describes the underpinning stochastic mechanisms that drive the arrivals and services. The system operator is entitled to make various design choices in each of the first three components, with only the environment being taken as given. 

Using this model, we make the following contributions: 

\begin{enumerate}
    \item \emph{Approximation with sparse communication.} We propose and analyze three classes of communication-approximation schemes to  study the relationship between communication frequency and the resulting approximation error. A main takeaway is that we can indeed achieve high-quality state approximation under sparse communication, but it would require thoughtfully chosen communication and approximation schemes. We show that, by using a novel server-side-adaptive communication protocol, the load balancer can obtain a queue-length approximation with maximum error $y>0$ (for all time and across all servers), while using a communication frequency that is at most a factor $1/(y^2+y)$ of that required for maintaining perfect state information. To put this in perspective, it implies that we can reduce communication by $83\%$ compared to that of the full-information setting, while ensuring an approximation error that never exceeds $2$. If we are to tolerate an approximation error of $3$, then we can further reduce communication to that of $91\%$ of the full-information setting.

    Our designs build on a novel \emph{information asymmetry} in load balancing which we exploit: while the load balancer does not know the servers' queue lengths exactly, the servers do know exactly the load balancer's approximation about their own state. Using this insight, we design communication protocols where servers initiate communication when they perceive that the load balancer's approximation error concerning their own state crosses a certain threshold. Figure \ref{fig:first_paper_com_fig} offers an example illustrating the resulting drastic reduction in communication frequency.  
    
    \begin{figure}[H]
    \centering
    \includegraphics[width=0.6\linewidth]{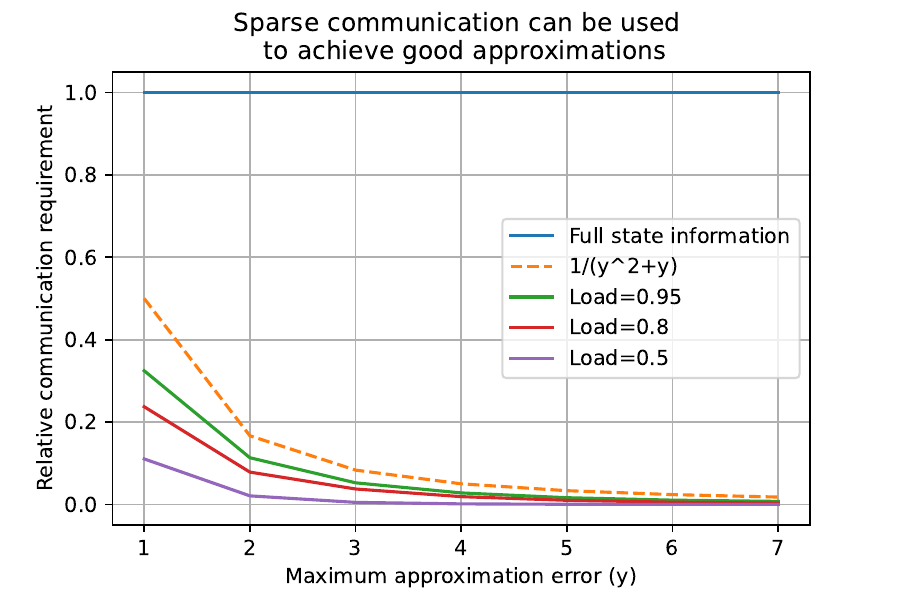}
    \caption{The relative communication requirements for our proposed algorithm in simulation experiments with 30 servers and various loads as a function of the maximum approximation error, $y$. The measured amount of communication is much less than what is required for full state information and the theoretical upper bound of $1/(y^2+y)$.}
    \label{fig:first_paper_com_fig}
    \end{figure}
     \item \emph{Performance characterization under state approximation.} We characterize when using approximate state information can lead to near-optimal performance. We focus on the natural load balancing algorithm where jobs join the server with the smallest approximate queue length. Because system dynamics under adaptive load balancing and approximation schemes are generally not amenable to exact analysis, we develop an asymptotic diffusion-scaled analysis to characterize performance, which also allows for time-varying arrival rates. Our main theorem in this part shows that the load balancer achieves asymptotically optimal performance whenever the approximation error is moderately bounded, which includes as an important special case when the approximation error is bounded by a constant $y$, as discussed previously. 
     
    The proof of the theorem is carried out in two steps. In the first step, we provide a sufficient condition on the approximation errors under which the queue lengths are completely balanced in diffusion scale. Following \cite{atar2019subdiffusive}, we refer to this phenomenon as Sub-Diffusivity of the Deviation Process (SDDP) of the queue lengths. In the second step, we prove that if SDDP of the queue lengths holds, then the workload in the system is minimized, when compared to the workload under any other algorithm (not necessarily approximation based), or even to the workload in a single server queue with a combined rate of the servers. 
    
    \item \emph{Design recommendations.} We conduct extensive simulations to draw performance comparisons between different communication-approximation schemes. The simulation results both corroborate our theoretical findings, and allow us to offer concrete design recommendations under various operating conditions. In particular, we find that using our proposed approximation algorithm and server-side-adaptive communication protocol results in significant performance gains with sparse communication. Figure \ref{fig:second_paper_com_fig} shows the performance of our proposed algorithm for different communication constraints, relative to that required for full state information, compared to the well known Join-the-Shortest-Queue, Power-of-Choice (SQ(2)) and Round-Robin load balancing algorithms. As we will soon discuss, the fundamental communication requirements of JSQ and SQ(2) are at least 1 message per job. Our proposed algorithm outperforms SQ(2) using around $0.1$ messages per job. Remarkably, it also offers a significant improvement over Round Robin while using less than $0.01$ messages per job.  We also demonstrate that the idle-signal based policies Join-the-Idle-Queue and Persistent-Idle cannot be considered as using sparse communication in our setting except for extremely high loads (e.g., above $0.99$), in which case the algorithms we propose perform better.

    \begin{figure}[H]
    \centering
    \includegraphics[width=0.6\linewidth]{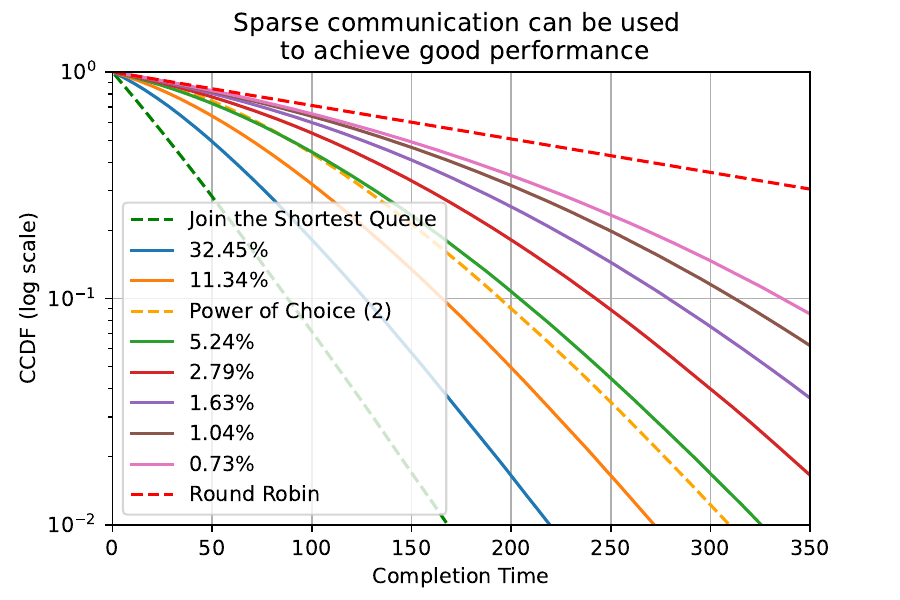}
    \caption{Measured CCDF of job completion times of our proposed algorithm, respecting different relative communication constraints, compared to that of JSQ, SQ(2) and Round Robin, in a simulation experiment with 30 servers and 95$\%$ load. Our algorithm outperforms SQ(2) while using around 10$\%$ of the communication, and offers substantial performance gains over Round Robin even when using less than $1\%$ of the communication.}
    \label{fig:second_paper_com_fig}
    \end{figure}
    
\end{enumerate}

\paragraph{Organization} The rest of the paper is organized as follows. In Section \ref{sec: CARE} we describe the CARE model and present our main results in more detail. In Section \ref{sec: related work} we discuss related work. Sections \ref{sec:The Approximation Component}, \ref{sec:The Communication Component} and \ref{sec: The Communication-Approximation Link} are devoted to the communication-approximation link, where we use the CARE model to study the problem of generating reasonable approximations using sparse communication. We motivate and present the different communication patterns and approximation algorithms and provide theoretical guarantees. In Section \ref{sec:approximation-performance link} we study the approximation-performance link by analyzing a diffusion scale model with a time varying arrival rate. In Section \ref{sec: connecting comm to per} we prove that our sparse communication algorithms lead to reasonable state approximations, which in turn lead to asymptotically optimal performance. Finally, in Section \ref{sec:simulations} we provide simulation results that support our theoretical findings and shed light on which algorithms perform best given a constrained communication budget.

\paragraph{Notation} We denote $a\wedge b=\min\{a,b\}$, $a\vee b=\max\{a,b\}$, $[K]=\{1,...,K\}$, $\mathbbm{N}=\{1,2,\ldots\}$, $\mathbbm{Z}_+=\mathbbm{N}\cup \{0\}$ and  $\mathbbm{R}_+=[0,\infty)$. Denote by $\mathcal{C}$ and $\mathcal{D}$ the classes of continuous, respectively, right continuous with left limits (RCLL), functions mapping $\mathbbm{R}_+$ to $\mathbbm{R}_+$. Denote $\mathcal{D}_0=\{f\in\mathcal{D}:f(0)=0\}$. For $f \in \mathcal{D}$, and a time interval $J=(t_1,t_2]$, denote $f(J]=f(t_1,t_2]=f(t_2)-f(t_1)$. For $x\in \mathbbm{R}^k$ ($k$ is a positive integer), $\norm{x}$ denotes the $\ell_1$ norm. For $f:\mathbbm{R}_+ \rightarrow \mathbbm{R}^k$, denote $\norm{f}_T=\sup_{t\in[0,T]}\norm{f(t)}$, and for $\theta>0$,
\begin{equation*}
    \omega_T(f,\theta)=\sup_{0\leq s<u\leq s+\theta\leq T}\norm{f(u)-f(s)}.
\end{equation*}

A sequence of processes $X_n$ with sample paths in
$\mathcal{D}$ is said to be {\it $\mathcal{C}$-tight} if it is tight and every subsequential
limit has, with probability 1, sample paths in $\mathcal{C}$. We use shorthand notation for integration as follows: $\II f(t)=\int_0^t f(u)du$.

We use the following asymptotic notation. For  deterministic positive sequences $f_n$ and $g_n$ we say that $f_n$ is \textit{of order} $o(g_n)$, and write $f_n\in o(g_n)$, if $\limsup_{n \rightarrow \infty}f_n/g_n=0$. We write $f_n\in O(g_n)$ if $\limsup_{n \rightarrow \infty}f_n/g_n<\infty$, $f_n\in \omega(g_n)$ if $\liminf_{n \rightarrow \infty}f_n/g_n=\infty$, $f_n\in \Omega(g_n)$ if $\liminf_{n \rightarrow \infty}f_n/g_n>0$ and $f_n\in \Theta(g_n)$ if $0<\liminf_{n \rightarrow \infty}f_n/g_n\leq \limsup_{n \rightarrow \infty}f_n/g_n<\infty$. For a non negative stochastic sequence $X_n$, we say that $X_n$ is \textit{of order} $o(g_n)$, and write $X_n \in o(g_n)$, if $X_n/g_n\rightarrow 0$ in probability, and $X_n \in \omega(g_n)$, if $X_n/g_n\rightarrow \infty$ in probability.

\section{The CARE Model and Summary of Contributions}\label{sec: CARE}
\subsection{The CARE Model}\label{subsec:the_care_model}

We present CARE, a model to study and design communication constrained resource allocation systems, as depicted in Figure \ref{fig:CARE}. 

\begin{figure}[H]
\centering
\includegraphics[width=0.9\linewidth]{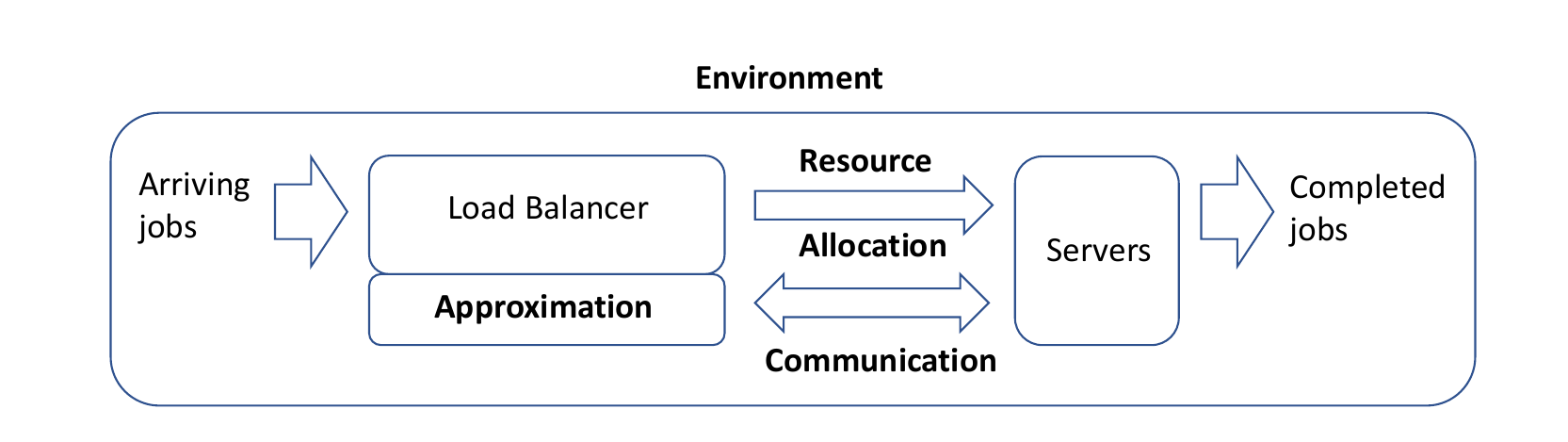}
\caption{The CARE model in the context of load balancing with parallel servers.}
\label{fig:CARE}
\end{figure}

\subsubsection{ The Environment Component}

The environment describes the underlying stochastic mechanism of the system. The environment, depicted in Figure \ref{fig:load_balancing_figure}, evolves in continuous time $t \geq 0$, and consists of a single load balancer and $K$ servers labeled $\{1,\ldots,K\}$. Jobs arrive to the load balancer according to some stochastic counting process, denoted by $A(t)$, which counts the number of job arrivals during $[0,t]$. When a job arrives the load balancer immediately sends it to one of the servers for processing.  Each server has a dedicated infinite capacity buffer in which a queue can form. Servers serve jobs in order of arrival (FIFO) and do not idle when there are jobs in their queue. 

Next, we describe how the queue length variables can be decomposed, a representation that will be used frequently in this paper. For $t_1 <t_2$, denote by $A_i(t_1,t_2]$ the number of jobs that arrived to server $i$ during $(t_1,t_2]$. Denote by $D_i(t_1,t_2]$ the number of jobs server $i$ has completed during $(t_1,t_2]$, and by $D(t)$ the total number of completed jobs in the system during $[0,t]$, namely:
\begin{equation*}
    D(t):=\sum_i D_i(0,t].
\end{equation*}
We assume that no jobs arrived or were completed at time $t=0$. Denote by $Q_i(t)$ the queue length at server $i$ at time $t$ which includes jobs in process if any. We assume the system starts operating at time $t=0$ with initial number of jobs in each queue given by $Q_i(0)$. Thus, the queue lengths satisfy the following basic flow equation:

\begin{equation}\label{eq:ql flow}
    Q_i(t)=Q_i(0)+A_i(0,t]-D_i(0,t].
\end{equation}

A few results will require the following additional assumptions on the model:

\begin{assumption}\label{ass2}
The arrival process $A(t)$ is a renewal process with constant rate $\lambda>0$.  
\end{assumption}

\begin{assumption}\label{ass1}
The service time requirements of jobs processed by each server $i \in \{1,\ldots,K\}$ are i.i.d.~random variables (RVs) with mean $1/\mu_i$ for some $\mu_i>0$.  
\end{assumption}

Assumption \ref{ass1} is incorporated into the model as follows. Fix $K$ infinite sequences $\{b_1^j\}_{j=1}^\infty,\ldots,\{b_K^j\}_{j=1}^\infty$  of i.i.d.~RVs with mean $1$. We assume that $b_i^j/\mu_i$ is the service time requirement of the $j$th job that server $i$ processes. Accordingly, define the potential service process as

\begin{equation}\label{eq:def of potential service}
    S_i(t):=\sup\{l\in \mathbbm{N}: \sum_{j=1}^l b_i^j \leq \mu_i t\}.
\end{equation}
Therefore, $S_i(t)$ is a renewal process that counts the number of jobs that server $i$ would have completed if it were busy $t$ units of time. Denote by $I_i(t)$ the cumulative amount of time server $i$ has been idle during $[0,t]$. We can now write the departure process as:

\begin{equation}\label{eq:departure potential service}
    D_i(0,t]=S_i(t-I_i(t)),
\end{equation}
and, since $S_i(t)$ is a non decreasing process, we have
\begin{equation}\label{eq:departure potential service bound}
    D_i(0,t]=S_i(t-I_i(t))\leq S_i(t).
\end{equation}

\subsubsection{ Communication Component}

At each point in time, each server is allowed to communicate with the load balancer by sending it a message. The content of a message sent by server $i$ at time $t$ is just its current queue length $Q_i(t)$. The \textit{communication pattern} dictates at which times the servers send messages to the load balancer. We assume that messages are sent and received instantly without delay. 

For each server $i$, denote the (possibly random) times after time zero at which server $i$ sends a message to the load balancer by $t_i^1,t_i^2,\ldots$, such that $t_i^m$ is the time at which the $m$th message is sent from server $i$ to the load balancer. Set $t_i^0=0$ for all $i$. Denote by $M_i(t)$ the number of messages server $i$ sent during $[0,t]$, which is given by:
\begin{equation*}
    M_i(t)=\sup\{m \in \mathbbm{N}: t_i^m \leq t\}.
\end{equation*}
Denote by $M(t)$ the total number of messages sent by the servers during $[0,t]$:
\begin{equation*}
    M(t):=\sum_i M_i(t).
\end{equation*}
It will be useful to write the flow equation \eqref{eq:ql flow} in a different way, using the messaging times $t_i^m$. For any messaging time $t_i^m<t$ we can apply \eqref{eq:ql flow} twice (for $t$ and then $t_m^i$) to obtain
\begin{align}\label{eq:actual ql}
    Q_i(t)&=Q_i(0)+A_i(0,t]-D_i(0,t]=Q_i(0)+A_i(0,t]-D_i(0,t]+Q_i(t_i^m)-Q_i(t_i^m)\cr
    &=Q_i(0)+A_i(0,t]-D_i(0,t]+Q_i(t_i^m)-\big(Q_i(0)+A_i(0,t_i^m]-D_i(0,t_i^m]\big)\cr
    &=Q_i(t_i^m)+A_i(t_i^m,t]-D_i(t_i^m,t],
\end{align}
where the last equality is due to the facts that $A_i(0,t]=A_i(0,t_i^m]+A_i(t_i^m,t]$ and 
$D_i(0,t]=D_i(0,t_i^m]+D_i(t_i^m,t]$.

\subsubsection{ Approximation Component}
The load balancer keeps an array of queue length state approximations, $\tilde{Q}(t)$, where  $\tilde{Q}_i(t)$ denotes the approximation for $Q_i(t)$ at time $t$. The \textit{approximation algorithm} is the rule by which the load balancer updates the approximated state using all the information available at its disposal, which comprises the messages it has received and its own routing decisions up until that time. We assume that all of the approximations initially coincide with the real state, namely $\tilde{Q}_i(0)=Q_i(0)$ for all $i$.

\subsubsection{ Resource Allocation Component}
The resource allocation component specifies how the load balancer assigns incoming jobs to servers. The \textit{load balancing algorithm} is the rule by which the load balancer makes these assignments using as input the approximated state, $\tilde{Q}(t)$. We will focus on the load balancing algorithm where a job is always sent to the server with the smallest approximate queue length, with ties broken randomly. We term this the Join the Shortest Approximated Queue (JSAQ) algorithm. The JSAQ algorithm is a natural candidate as the direct analog to the JSQ algorithm in the presence of state approximation. Indeed, as our main results demonstrate, JSAQ can achieve near optimal performance that rivals JSQ whenever the maximal approximation error is moderately desirable.  

\subsubsection{Metrics for Communication, Approximation and Performance}
\label{subsec:comm metric}

Next, we describe the metrics we use to characterize the amount of communication, approximation error and system performance. We will then use them to investigate the relationship between these three core elements of our model.

\paragraph{Communication Metric} We consider two metrics for measuring the amount of communication: (1) the number of messages per job departure, i.e., the relationship between $M(t)$ and $D(t)$, and (2) the long term average rate messages are sent, given by the quantity $M(t)/t$ for large $t$. Both metrics will be useful in comparing different approaches. In particular, it is known that one message per departure is sufficient to provide exact state information \citep{lu2011join}, so we are looking for combinations of communication patterns and approximation algorithms that use substantially less than that.

\paragraph{Approximation Metric} The metric we use to measure approximation error is the maximal absolute difference between any actual queue and its corresponding approximation at any point in time.  Define the approximation error of server $i$ at time $t$ by:
\begin{equation}\label{eq:ae def}
    AE_i(t):=|Q_i(t)-\tilde{Q}_i(t)|,
\end{equation}
and the maximal approximation error at time $t$ by:
\begin{equation*}
    AQ(t):=\max_i AE_i(t). 
\end{equation*}
For $y\in\mathbbm{N}$, we say that an approach has a \textit{maximal approximation error of $y$} if $\sup_{t \geq 0}AQ(t)\leq y$, i.e., the maximal absolute approximation error in the system is always less than or equal to $y$. Note that since we assumed that messages are sent instantly, the approximation errors at messaging times are zero.

\paragraph{Performance Metric} The primary metric for our theoretical analysis is the total workload in the system, defined as the sum of the remaining service duration requirements of the jobs in the system, including the ones in service. For simulations, we use a more direct, though less analytically tractable, metric of job completion times, i.e., the distribution of the time jobs spend in the system from the time they arrive up to the time a server finished their processing. The two metrics are related, in that the workload a job sees in front of it is the time it would take it to reach a server for processing. Since the waiting time of an incoming job equals exactly the workload in the corresponding server, minimizing the workload is closely tied to minimizing the time jobs spend in the system and is thus a meaningful measure of performance.  

\subsection{Summary of Main Results}

\paragraph{New Approximation Algorithms and Communication Patterns} We design a host of novel approximation algorithms and communication patterns, which, when used in unison, are shown to achieve impressive performance. On the approximation algorithm front, we propose a new approximation technique we call the \textit{queue length emulation approach}. Under this approach, after receiving a message from a server, the load balancer emulates the queue length evolution until the next message. This can be done naively, by only increasing the emulated queue after arrivals (which we term the \textit{basic} approach and is equivalent to the special case considered in \cite{van2019hyper} and \cite{vargaftik2020lsq}), or assigning the Mean job Service time Requirement to emulated jobs, an approach we term \textit{MSR}. We show that the latter is a key enabler in substantially reducing communication while achieving excellent performance. We also propose that in some situations, it is beneficial to truncate the departure process of the emulation after a predetermined number of emulated departures. We refer to this approach as \textit{MSR-x}, where the departure process is truncated after $x-1$ emulated departures.

We propose two new communication patterns, Departure Triggered (DT-$x$) and Error Triggered (ET-$x$), and compare their performance to that of an existing design in the literature, which we call Rate Triggered (RT-$r$) (used in \cite{van2019hyper} and \cite{vargaftik2020lsq}). Under RT-$r$, each server $i$ sends messages at a constant rate $r_i$. Under DT-$x$, servers send a message after $x$ of their job departures. Under ET-$x$, the most clever of the three, servers do the same exact emulation the load balancer does between messages and therefore can keep track of the approximation error. The servers send a message only when the approximation error reaches $x$. We show that combinations of DT-$x$ or ET-$x$ with the approximation algorithms described above are able to achieve surprisingly low maximal approximation errors and good performance using very little communication. Specifically, we find that using ET-$x$ with MSR dominates all other algorithms when the communication constraint is substantial.  

\paragraph{The Communication-Approximation Link} 
We provide a set of theoretical results which demonstrate that, using our protocols, we can achieve high-quality, if imperfect, state approximations even under highly sparse communication. The first result holds in striking generality, with essentially no assumptions on  the arrival process $A(t)$, job service time requirement distribution, server service discipline (not necessarily FIFO) or load balancing algorithm (not necessarily routing to the server with the smallest approximated queue).

\begin{theorem}\label{thm:summary 1}
For every $x \in \{2,3,4,\ldots\}$ there exists a combination of an approximation algorithm and communication pattern under which $AQ(t)\leq x-1$ and $M(t)\leq \frac{1}{x}D(t)$, for all $t\geq 0$. Specifically, this holds for DT-$x$ or ET-$x$ with the basic or MSR-x approximation algorithms. 
\end{theorem}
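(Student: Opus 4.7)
The plan is to reduce the problem to tracking the gap between the real and the emulated departure processes at each server. Fix a server $i$ and a time $t$, and let $t_m$ denote the most recent message time from server $i$ before $t$ (or $0$ if none). Let $\tilde D_i(t_m,t]$ denote the number of emulated departures the load balancer ascribes to server $i$ during $(t_m,t]$: this is zero under the basic algorithm and at most $x-1$ under MSR-$x$ by truncation. Since the load balancer observes every arrival it itself routes and synchronizes at $t_m$ via $\tilde Q_i(t_m)=Q_i(t_m)$, combined with \eqref{eq:actual ql} one gets $\tilde Q_i(t) = Q_i(t_m) + A_i(t_m,t] - \tilde D_i(t_m,t]$, so the arrival terms cancel and
\[
AE_i(t) \;=\; \bigl|\, D_i(t_m,t] - \tilde D_i(t_m,t]\,\bigr|.
\]
The proof then reduces to controlling this one quantity for each of the four (pattern, algorithm) pairs, and to lower-bounding the number of real departures needed to trigger each message.

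For the approximation bound $AQ(t)\leq x-1$, I would analyze the four combinations in turn. Under the basic algorithm, $\tilde D_i\equiv 0$, so $AE_i(t)=D_i(t_m,t]$; DT-$x$ fires the next message at the $x$-th real departure after $t_m$, so $D_i(t_m,t]\leq x-1$ strictly before firing and resets to zero right after, while ET-$x$ fires exactly the instant $D_i(t_m,t]$ would reach $x$, again enforcing $AE_i(t)\leq x-1$. Under MSR-$x$, the truncation gives $\tilde D_i(t_m,t]\leq x-1$, and the DT-$x$ rule also gives $D_i(t_m,t]\leq x-1$ before the next message, so both sides of the absolute value lie in $[0,x-1]$ and the bound follows; under ET-$x$ each server runs the identical MSR-$x$ emulation locally, knows $AE_i(t)$ exactly, and fires the instant it would hit $x$.

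For the messaging bound $M(t)\leq D(t)/x$, it suffices to show that at least $x$ real departures at server $i$ occur between consecutive messages from server $i$, since summing $M_i(t)\leq D_i(0,t]/x$ over $i$ then yields the claim. Under DT-$x$ this is the defining trigger. Under ET-$x$ with basic, $AE_i$ increases by one per real departure and is unaffected by arrivals, so reaching the threshold $x$ consumes $x$ real departures. Under ET-$x$ with MSR-$x$, the truncation gives $\tilde D_i(t_m,t]-D_i(t_m,t]\leq x-1<x$, so the only way the absolute value in the display above can reach $x$ is via $D_i(t_m,t]-\tilde D_i(t_m,t]=x$, which forces $D_i(t_m,t]\geq x$.

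The main obstacle is careful bookkeeping at the exact instants of departures and messages: one must use the zero-delay assumption on messages to argue that the trigger rules act instantaneously so that $AE_i$ is reset to zero at each $t_m$, and one must verify that the MSR-$x$ emulated departure process is well-defined, i.e.\ that emulated departures are suppressed whenever they would drive the emulated queue negative (so that the truncation bound $\tilde D_i(t_m,t]\leq x-1$ is honest and the identity $\tilde Q_i(t)=Q_i(t_m)+A_i(t_m,t]-\tilde D_i(t_m,t]$ holds cleanly). Once this is pinned down, the four cases collapse to the elementary departure-counting inequalities sketched above.
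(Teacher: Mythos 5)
Your proof is correct and follows essentially the same route as the paper: the identity $AE_i(t)=\lvert D_i(t_i^m,t]-\tilde D_i(t_i^m,t]\rvert$ is the paper's Key Observation, the bound $\lvert D-\tilde D\rvert\le D\vee\tilde D\le x-1$ for DT-$x$ with MSR-$x$ is Proposition \ref{prop:dtx msrx ae}, and the observation that under ET-$x$ with MSR-$x$ the error can only reach $x$ through at least $x$ real departures (since $\tilde D\le x-1$) is exactly how the paper's Proposition \ref{prop:et gurantees} obtains $M_i(t)\le D_i(t)/x$. The only difference is presentational: the paper assembles the theorem from previously stated propositions, whereas you re-derive them inline.
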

In words, the approaches we develop in this paper are able to achieve approximations that are never more than $x-1$ from the real queue lengths, using a fraction of $1/x$ of the communication needed for exact state information. E.g, a maximal approximation error of $2$ can be achieved with $33\%$ of the communication.

Our next two results establish that, under some conditions, we can further improve the guarantees in Theorem \ref{thm:summary 1} and achieve far sparser communication with the same maximal approximation error. Here we assume a FIFO service discipline and exponential service times. However, we still do not assume anything on the arrival process and the load balancing algorithm.

\begin{theorem}\label{thm:summary 2}
Consider a time interval $I_i$ between consecutive messages sent by server $i$ and denote its duration by $\tau_i$. Assume a FIFO service discipline, and that the service time requirements of jobs processed by server $i$ are i.i.d.~Exp($\mu_i$) random variables. Then, for every $x \in \{3,4,\ldots\}$, there exists a combination of an approximation algorithm and communication pattern under which $AQ(t)\leq x-1$ for all $t \in I_i$ and  $\mathbbm{E}[\tau_i]\geq (x/2-1)^2/\mu_i$. Specifically, this holds for ET-$x$ with MSR.
\end{theorem}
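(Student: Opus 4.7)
The $AQ(t)\leq x-1$ condition is immediate from the design of ET-$x$ paired with MSR: the server runs the same MSR emulation as the load balancer, so it knows the current approximation error exactly and triggers a message precisely when the error would reach $x$. The substantive content is therefore the duration lower bound $\mathbb{E}[\tau_i]\geq (x/2-1)^2/\mu_i$. My plan is to derive the strictly stronger estimate $\mathbb{E}[\tau_i]\geq (x-1)^2/\mu_i$ via optional stopping on the quadratic martingale of the compensated actual-departure process, and then reduce the general case to this clean one through a simple structural observation about the error process.

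Define $E(s):=Q_i(t_i^m+s)-\tilde Q_i(t_i^m+s)$ for $s\in[0,\tau_i]$. The message at $t_i^m$ synchronizes $\tilde Q_i(t_i^m)=Q_i(t_i^m)$ and both queues see the same routed arrivals, so the flow equation \eqref{eq:ql flow} collapses the error to $E(s)=\tilde D_i(s)-D_i(s)$, where $\tilde D_i$ and $D_i$ count the emulated and actual departures on $(t_i^m,t_i^m+s]$. Consider first the clean regime where both queues stay busy throughout $[0,\tau_i]$: memorylessness of the exponential service distribution makes $D_i(s)$ a rate-$\mu_i$ Poisson process, while MSR prescribes the deterministic schedule $\tilde D_i(s)=\lfloor\mu_i s\rfloor$. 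Let $Y(s):=D_i(s)-\mu_i s$, a martingale with predictable quadratic variation $\mu_i s$ and the identity $Y(s)=-E(s)-\{\mu_i s\}$ where $\{\mu_i s\}\in[0,1)$. Since $E$ moves by $\pm 1$ at isolated departure events and emulated/actual events almost surely do not coincide, $|E(\tau_i)|=x$ exactly, hence $|Y(\tau_i)|\geq x-1$. Applying the optional stopping theorem to $Y(s)^2-\mu_i s$ at $\tau_i\wedge T$ (valid since $|Y|$ is bounded by $x+1$ on that interval, with $T\to\infty$ by bounded and monotone convergence) yields $\mu_i\mathbb{E}[\tau_i]=\mathbb{E}[Y(\tau_i)^2]\geq (x-1)^2$, which in particular implies the claimed $(x/2-1)^2/\mu_i$ for every $x\geq 2$.

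The main obstacle is extending the argument to intervals where one of the queues may temporarily become idle, since then $D_i$ is no longer Poisson at rate $\mu_i$ throughout, and $\tilde D_i$ no longer advances deterministically. The structural observation that unlocks the extension is that $|E|$ can never grow outside the both-busy regime: if $\tilde Q_i(s)=0<Q_i(s)$ then necessarily $\tilde D_i(s)>D_i(s)$ so $E(s)>0$, but only $D_i$ can advance in this configuration, pushing $E$ toward zero; the symmetric case with $Q_i=0<\tilde Q_i$ is analogous. Consequently the hit $|E|=x$ can only occur at an instant when both queues are busy, and empty-queue intervals can only further delay it. Formalizing this either via a time change of $E$ by the both-busy clock $B^*(s):=\int_0^s\mathbf{1}\{Q_i(u)\wedge\tilde Q_i(u)>0\}\,du$, or equivalently via a coupling to an auxiliary system in which phantom arrivals keep both queues permanently positive (such arrivals leave $E$ unchanged since they increment $Q_i$ and $\tilde Q_i$ simultaneously), reduces the general case to the clean case of the previous paragraph and gives $\mathbb{E}[\tau_i]\geq\mathbb{E}[B^*(\tau_i)]\geq (x-1)^2/\mu_i$.
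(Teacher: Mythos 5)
Your always-busy analysis is sound and essentially reproduces the paper's argument for Theorem \ref{thm:summary 3}: in that regime $E(s)=\tilde D_i(s)-D_i(s)$ really is $\lfloor\mu_i s\rfloor-N(s)$ for a rate-$\mu_i$ Poisson process $N$, and optional stopping on $Y^2(s)-\mu_i s$ gives your bound (the paper's Theorem \ref{thm: poisson} even sharpens $(x-1)^2$ to $x^2-x$ by conditioning on which boundary is hit). The gap is in the reduction of the general case to this one. Your structural observation --- that $|E|$ cannot increase while one of the two queues is empty --- is correct, but it does not deliver the reduction you claim. Under the time change by the both-busy clock $B^*$, the actual departures do become a rate-$\mu_i$ Poisson process (memorylessness plus the strong Markov property), but the emulated departures do \emph{not} become $\lfloor\mu_i u\rfloor$: they are $\lfloor\mu_i\tilde B_i\rfloor$, and $\tilde B_i$ accumulates emulated-busy-but-actual-idle time on top of both-busy time, so the compensator of $\tilde D_i$ in both-busy time runs ahead of $\mu_i u$ by a random, path-dependent amount with a shifting phase. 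Likewise, the phantom-arrival coupling alters the future evolution of both $D_i$ and $\tilde D_i$ (departures fire in the phantom system that never fire in the original), so the claim that the phantom system hits $|E|=x$ no later than the original is a genuine stochastic-dominance assertion that you state but do not prove. As written, that last paragraph carries the entire content of the theorem outside the backlogged case, and it is missing.

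This is exactly the difficulty the paper's proof is built around: it writes $Q_i=X_i+\mu_iI_i$ and $\tilde Q_i=\tilde X_i+\mu_i\tilde I_i$ as images of free processes under the one-dimensional Skorohod reflection map and invokes its Lipschitz constant $2$ to get $|Q_i(t)-\tilde Q_i(t)|\le 2\sup_{s\le t}|S_i(\mu_i s)-\mu_i s|+2$, which absorbs all idling at the cost of halving the error threshold --- this is precisely where the exponent $(x/2-1)^2$, rather than your claimed $(x-1)^2$, comes from, with the no-idling case split off as the separate Theorem \ref{thm:summary 3}. If you want to salvage your route, you would need, e.g., a segment-by-segment optional-stopping argument over the both-busy excursions, tracking the phase of the floor compensator and using that $|E|$ at the start of each excursion is no larger than at the end of the previous one; that may well work and would give a stronger constant, but it is real work, not a remark.
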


Theorem \ref{thm:summary 2} suggests that it is possible to achieve a messaging rate that decreases quadratically in the maximal approximation error. The third and final result concerning the communication-approximation link provides a stronger guarantee for small levels of approximation error, but at the price of assuming non-idleness.

\begin{theorem}\label{thm:summary 3}
In the setting and under the assumptions of Theorem \ref{thm:summary 2}, add the further assumption 
of an infinite backlog at the beginning of the interval $I_i$. Then, for every $x \in \{2,3,4,\ldots\}$, there exists a combination of an approximation algorithm and communication pattern under which $AQ(t)\leq x-1$ for all $t \in I_i$ and  $\mathbbm{E}[\tau_i]\geq x(x-1)/\mu_i$. Specifically, this holds for ET-$x$ with MSR.
\end{theorem}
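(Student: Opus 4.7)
I would reduce $\tau_i$ to the first-passage time of a simple integer-valued process and extract the lower bound via the compensated Poisson martingale, exploiting an asymmetry between the two hitting directions.

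First, I translate the hypotheses into a concrete first-passage problem. Under the infinite-backlog assumption, the FIFO server with Exp($\mu_i$) service is continuously busy throughout $I_i$, so the actual departure process is Poisson($\mu_i$); let $N(s)$ denote its count measured from the start of $I_i$. Under MSR, the emulated departure count is the deterministic $\lfloor \mu_i s\rfloor$, since the emulated queue never idles either. Because arrivals are observed exactly by the load balancer, the approximation error simplifies to $|Z(s)|$ with $Z(s) := \lfloor \mu_i s\rfloor - N(s)$, and the ET-$x$ rule terminates $I_i$ at $\tau_i = \inf\{s>0 : |Z(s)| = x\}$. The compensated process $M(s) := \mu_i s - N(s) = Z(s) + \{\mu_i s\}$ (with $\{\cdot\}$ the fractional part) is a square-integrable martingale with $\mathbb{E}[M(s)^2] = \mu_i s$. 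Since $\mathbb{E}[\tau_i] < \infty$ (by comparison with the embedded mean-zero random walk $Z(n/\mu_i)$, whose increments $1-\mathrm{Poisson}(1)$ have unit variance) and $|M(\tau_i)| \leq x+1$, optional stopping yields $\mathbb{E}[M(\tau_i)] = 0$ and $\mathbb{E}[M(\tau_i)^2] = \mu_i \mathbb{E}[\tau_i]$.

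Second, I exploit the sign asymmetry between the two hitting directions. Since deterministic jumps push $Z$ up only and Poisson jumps down only, on $\{Z(\tau_i) = x\}$ the stopping time $\tau_i$ is a deterministic jump time, giving $\{\mu_i \tau_i\} = 0$ and $M(\tau_i) = x$; whereas on $\{Z(\tau_i) = -x\}$ it is a Poisson jump time, so $U := \{\mu_i \tau_i\} \in (0,1)$ almost surely and $M(\tau_i) = -x + U$. Letting $p = P(Z(\tau_i) = x)$ and $\bar U = \mathbb{E}[U \mid Z(\tau_i) = -x]$, the first-moment identity $\mathbb{E}[M(\tau_i)] = 0$ yields $p = (x - \bar U)/(2x - \bar U)$, and since $\bar U \leq 1$ we have $p \geq (x-1)/(2x-1)$. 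Combining the second-moment identity with $(x - U)^2 \geq (x-1)^2$ on $\{Z(\tau_i) = -x\}$,
\[
\mu_i \mathbb{E}[\tau_i] \;=\; \mathbb{E}[M(\tau_i)^2] \;\geq\; p\, x^2 + (1-p)(x-1)^2 \;=\; p(2x-1) + (x-1)^2 \;\geq\; (x-1) + (x-1)^2 \;=\; x(x-1),
\]
which is the claimed lower bound.

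The anticipated hurdle is precisely this sign asymmetry: $+x$ is reached cleanly at a deterministic jump with no overshoot, while $-x$ is reached at a Poisson jump with a fractional overshoot bounded by $1$. Tracking this asymmetry simultaneously through the first and second moments of $M(\tau_i)$, and verifying that the two bounds compose into the full quadratic $x(x-1)$ rather than the weaker $(x-1)^2$ that one would get from the second-moment estimate alone, is the delicate part of the argument. By contrast, the integrability needed for optional stopping is routine, since $|Z|$ is bounded by $x$ on $[0, \tau_i]$ and $\tau_i$ has a finite mean by the embedded random-walk comparison.
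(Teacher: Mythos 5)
Your proposal is correct and follows essentially the same route as the paper: the paper likewise reduces $\tau_i$ under infinite backlog to the first passage of $|N(t)-\lfloor\mu_i t\rfloor|$ to level $x$, applies optional stopping to the compensated Poisson martingale and to its square minus the compensator, and exploits the same overshoot asymmetry between the two exit directions to bound the exit probability (your $p\ge (x-1)/(2x-1)$ is exactly the paper's $\mathbbm{P}(A)\le x/(2x-1)$) and conclude $\mu_i\mathbbm{E}[\tau_i]\ge x(x-1)$. The only cosmetic differences are a sign convention and your random-walk comparison for $\mathbbm{E}[\tau_i]<\infty$ in place of the paper's uniform-integrability lemma.
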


Therefore, if servers are heavily loaded and service times are exponential, each server $i$ sends messages at an average rate of, approximately, at most $\frac{\mu_i}{x(x-1)}$. In the same setting, the average communication rate of server $i$ needed for exact state information is $\mu_i$ (corresponding to the long term departure rate). Thus, Theorem \ref{thm:summary 3} suggests that it is possible to achieve a maximal approximation error of $x-1$ with a fraction of $1/(x^2-x)$ of the communication which is significantly less than the $1/x$ fraction in Theorem \ref{thm:summary 1}. This is also supported by our simulation results in Section \ref{sec:simulations}, where the measured communication reduction is even more significant. We prove Theorems \ref{thm:summary 1}, \ref{thm:summary 2} and \ref{thm:summary 3} by construction, using the approaches we develop below. 


\paragraph{The Approximation-Performance Link} 
The next group of results characterizes the system performance for a given level of maximal approximation error. We consider a sequence of environments as defined in Section \ref{subsec:the_care_model} and indexed by $n$, the usual diffusion scaling parameter, operating during a fixed time interval $[0,T]$. All of the stochastic processes are defined analogously, only indexed by $n$. Jobs arrive according to a general modulated renewal process, such that the arrival rate $\lambda^n(t)$ may change over time. The potential service processes are general renewal processes such that the service rate of each server $i$ is $\mu_i^n$, i.e., the average duration of service for each job is $1/\mu_i^n$. In particular, we do not assume that the job service times are exponentially distributed. 

We assume that the arrival and service rates, $\lambda^n(t)$ and $\mu^n$, are of order $\Theta(n)$. Since we make no further assumptions on the rates, $\lambda^n(t)$ is allowed to transition between being below, equal to, or above $\sum_i \mu^n_i$, which translates into under-loaded, critically-loaded (also called heavy traffic) and over-loaded periods during $[0,T]$. Importantly, the rate at which events occur in the system is of order $\Theta(n)$, and, in general, queue lengths can be of order $\Theta(n)$. The latter will occur, for example, if the system is always over-loaded with $\lambda^n(t)-\sum_i \mu^n_i\geq cn$ for all $t>0$ and some constant $c>0$. 

First, we prove that under \textit{any} CARE system using JSAQ with a maximal approximation error that is of order $o(n^{1/2})$, the the system achieves asymptotic \emph{diffusion scale balance}, also known as  \textit{Sub-Diffusivity of the queue lengths Deviation Process} (SDDP). In particular, this refers to the phenomena that the maximal difference between any two actual queue lengths during $[0,T]$ is of order $o(n^{1/2})$.  In short, this result can be summarized as saying: 
\begin{equation}\label{intro eq:approx-ssc}
  \text{Maximal approximation error} \in o(n^{1/2})\Rightarrow \text{ SDDP}.
\end{equation}

Second, we prove that \textit{any} system design (including but not limited to CARE) which results in SDDP, is asymptotically optimal in the sense that it minimizes the \textit{nominal} workload in the system at any point in time up to a factor of order $o(n^{1/2})$. The nominal workload is defined as the time it would take a server with a processing rate of 1 to complete the work that is currently in the system. That is, 
\begin{equation}\label{intro eq:ssc-ao}
  \text{SDDP}\Rightarrow \text{Nominal workload asymptotic optimality} .
\end{equation}
We prove that this optimality guarantee holds not only compared to any other possible design, but also holds compared to a system where all of the servers are replaced with a single server with a combined processing rate of $\sum_i\mu^n_i$. 

In the special case of a homogeneous system where $\forall i, \mu^n_i=\mu^n$, we obtain that the \textit{actual} workload is minimized up to a factor of order $o(n^{1/2})$. The actual workload is tied to job completion times, since a server's actual workload equals the time an arriving job will wait to receive service. Thus, we show that

\begin{equation}\label{intro eq:ssc-ao}
  \text{SDDP+Homogeneous}\Rightarrow \text{Actual workload asymptotic optimality} .
\end{equation}


\paragraph{Connecting Communication to Performance} 

We prove that if the communication rate of the approaches we develop below is of order $\omega(n^{1/2})$, then the maximal approximation error is of order $o(n^{1/2})$:

\begin{equation}\label{intro eq:comm-approx}
    \text{Communication rate} \in \omega(n^{1/2}) \Rightarrow \text{Maximal approximation error} \in o(n^{1/2}).
\end{equation}

Thus, we can summarize the main theoretical contributions of this paper by combining \eqref{intro eq:comm-approx}, \eqref{intro eq:approx-ssc} and \eqref{intro eq:ssc-ao} and obtain:
\begin{equation*}
  \boxed{ \text{Comm.~rate } \omega(n^{1/2}) \Rightarrow \text{Approx.~quality } o(n^{1/2}) \Rightarrow \text{SDDP}\Rightarrow \text{Asymptotic optimality}} 
\end{equation*}
This series of results establishes that the approaches we consider achieve asymptotically optimal performance provided that their communication rate is of order $\omega(n^{1/2})$, which can be much slower than the rate at which events occur in the system. Specifically, it can be very sparse compared to the arrival or departure rates.


\paragraph{Design Recommendations} Finally, we combine the insights from our theoretical results and an extensive simulation study in order to recommend design choices using the CARE model. Specifically, we find that using the ET-$x$ communication pattern together with the MSR approximation approach leads to excellent performance when only sparse communication can be used. When substantial communication is allowed, DT-$x$ combined with MSR-$x$ performs very well. Finally, we find that RT-$r$ consistently delivers worse performance than the other approaches we propose. 

\section{Related Work}
\label{sec: related work}

\paragraph{Routing Policies} Load balancing across parallel queues has been an active area of research for more than a half-century. The Join-the-Shortest-Queue (JSQ) policy, whereby the load balancer routes each incoming job to the shortest of all queues, can be traced back to at least as early as the works of Haight \cite{haight1958two} and Kingman \cite{kingman1961two}. Since then, JSQ has been shown to perform optimally or near-optimally in a variety of systems and against various performance metrics \citep{ atar2019subdiffusive, ephremides1980simple, foschini1978basic, winston1977optimality,weber1978optimal}. 

Recognizing the difficulty in obtaining full state information to implement JSQ in modern large-scale systems, \cite{vvedenskaya1996queueing} and \cite{mitzenmacher2001power} propose a clever communication-light variant of JSQ, known as the Power-of-$d$-Choices or Shortest-Queue-$d$ (SQ($d$)). When a job arrives, the load balancer obtains the queue lengths at only $d$ out of a total of $K$ servers, and sends the job to the server with the shortest queue among them. As such, SQ($d$) can be thought of as a generalization of JSQ, and the two coincide when $d=K$. The idea of SQ($d$) is that by querying only $d$ servers, one would reduce the number of messages exchanged per arrival from order $K$ down to $d$. (As we will see below, this reduction is not entirely accurate.) Remarkably,  \cite{vvedenskaya1996queueing} and \cite{mitzenmacher2001power} show that even when $d=2$ the system's delay performance vastly outperforms random routing, and can in some cases rival that of JSQ. The high level idea of SQ($d$) is to use \textit{partial state} information to make routing decisions. However, unlike the state approximation considered here, this family of policies does not explicitly construct an estimate of the overall system state. 

The authors of \cite{shah2002use,mitzenmacher2002load} advocate for further improving SQ($d$) by \textit{memorizing} previously sampled queue lengths, leading to a variant of SQ($d$) called Power-of-Memory (PoM($m$,$n$)). The load balancer remembers the identity of the $m$ servers which had the shortest queues at the last sampling time, and when a job arrives samples them and an additional $n$ servers chosen at random. The idea is that between sampling times the state of the queues may not change significantly, so if a short queue is found, it is worthwhile to sample it again. A broader insight from this line of work is that when communication is limited, past observations are useful in constructing a more complete picture of the system state; the algorithms analyzed in this paper also make use of memory and past samples with a similar rationale. 

The authors of \cite{lu2011join} proposed a different partial state algorithm named Join-the-Idle-Queue (JIQ). The partial state available to the load balancer is which servers are idle (i.e., have no jobs to process). Then, an incoming job is assigned to an idle server. If there are no idle servers, the job is assigned at random. JIQ was extensively studied, e.g., \citep{mitzenmacher2016analyzing, foss2017large, stolyar2015pull}, and performs exceptionally well for low to moderate loads. The intuition behind it is that when the system is not very loaded, there are almost always idle servers. Thus, knowing which servers are idle is sufficient to successfully mimic the behaviour of JSQ. For moderate to high loads, servers are not idle often, and the performance of JIQ degrades to that of random assignment. 

The authors of \cite{atar2020persistent} propose a variant of JIQ called Persistent-Idle (PI), under which an incoming job is assigned to an idle server if there is one, and otherwise it is sent to the last server a job was sent to. Thus, PI uses the same partial information as JIQ, but also leverages the load balancer's memory to avoid the random assignment of JIQ in the case that there are no idle servers.

\paragraph{Communication Rates} Let us now look at the amount of communication that is required to implement the aforementioned policies. Until recently, the communication amount required to implement JSQ was considered to be $2K$ messages per arriving job: i.e., when a job arrives the load balancer sends a message to each server requesting its current queue length information, and each server responds with a message. Using the same type of implementation, SQ($d$) requires $2d$ messages per job arrival, which can be much less than that of JSQ, especially if the number of servers is large. Similarly, PoM requires $2(m+n)$ messages per job arrival.

However, \cite{lu2011join} points out that the fundamental communication requirement for implementing JSQ is actually much lower than previously thought. They make the insightful observation that because the load balancer knows to which servers jobs were sent, all it takes for the load balancer to reconstruct the exact queue lengths is for the servers to notify the load balancer every time a departure occurs. Using this implementation, we will need only 1 message per every departure, which is equivalent to 1 message per arrival assuming the system is stable. Compared to this more efficient JSQ implementation, the messaging rate of $2d$ and $2(m+n)$ per arrival of SQ($d$) and PoM($m$,$n$) are in fact much worse, and it is unclear whether there exists an implementation of SQ($d$) or PoM($m$,$n$) that can outperform the JSQ benchmark. 

As for algorithms such as JIQ and PI which use idleness information, the servers can send a message for every departure after which they have no more jobs to process. Then the load balancer knows they are idle and stores that information in memory. When the load balancer sends a job to an idle server, it knows it is no longer idle and updates its memory. Thus, for the load balancer to always know which servers are idle, the amount of required communication is less than or equal to 1 message per departure, since not every departure leaves the server idle. 

However, our simulation results indicate that up to moderate to high loads, e.g., $70\%$, the amount of required communication of JIQ and PI is roughly the same as that of JSQ, namely, 1 message per departure. This is because when the system is not heavily loaded, there are almost always idle servers and most departures leave the servers idle. As the load increases, the amount of required communication decreases, but it requires extremely high loads, e.g., $98\%$, for it to drop below half of what JSQ requires, i.e., below 1 message per 2 departures. Thus, JIQ and PI cannot be regarded as using sparse communication in our setting, apart from situations where the system is almost critically loaded.

Table \ref{tab:comm} summarizes the various communications rates. 

\begin{figure}[H]
\centering
\begin{tabular}{ |l | l | }
 \hline
 Algorithm & Communication \\ 
 \hline
 Join-Shortest-Queue (JSQ) & $1$ (D), $K$ (A)  \\  
 Shortest-Queue-$d$ (SQ($d$)) &  $1$ (D), $2d$ (A) \\
 Power-of-Memory-$(m,n)$ (PoM($m,n$))  & $1$ (D), $2(m+n)$ (A) \\
 Joint-the-Idle-Queue (JIQ)  & <1 \\
 Persistent-Idle (PI) & <1 \\
 \hline
 ET-x with MSR, or DT-x, general job size (Thm.~\ref{thm:summary 1}) & $1/(x-1)$\\
 ET-x with MSR, exponential job size (Thm.~\ref{thm:summary 2}) & $1/{(x/2-1)^2}$\\
 ET-x with MSR, exponential job size, heavy load (Thm.~\ref{thm:summary 3}) & $1/(x^2-x)$\\
 \hline 
 \end{tabular}
 \caption{Comparison of the communications rates of different load balancing architectures in a system with $K$ servers. Communication rates are measured in number of messages per arrival. (D) denotes the rate under the implementation where servers notify the load balancer upon job departures, while (A) denotes the implementation where the load balancer queries servers upon each arrival. The parameter $x$ controls the quality of approximation, and can be roughly interpreted as the maximum error tolerance in the queue length approximations. }
 \label{tab:comm}
\end{figure}

\paragraph{State Approximation} The idea of using state approximation in routing has appeared in \cite{van2019hyper} and \cite{vargaftik2020lsq}. 
The authors of \cite{van2019hyper} use approximate states in the following manner. Servers send periodic updates to the load balancer. The time between updates can be constant or exponentially distributed. When the load balancer receives a message from a server it updates it to the value it received. When the load balancer sends a job to a server, it increases the corresponding approximation by 1. The load balancing algorithm is chosen to be Join the Shortest Queue (JSQ) which we term Join-the-Shortest-Approximated-Queue (JSAQ). Under JSAQ, the load balancer assigns an incoming job the the server with the smallest approximated queue. 

In terms of performance,  \cite{van2019hyper} assume a Poisson arrival process with constant rate $\lambda$ and exponentially distributed job service times. They use a fluid model to analyze the system in a large system limit, i.e., as the number of servers tends to infinity. Their main result is that if the communication rate is close to the arrival rate $\lambda$ then the mean waiting time of jobs in steady state tends to zero. The intuition behind this result is that as the number of servers grows, if the communication rate is close to the rate of arrivals then there will probably be idle servers when a job arrives which the load balancer is aware of, in which case the job is sent to an idle server and its waiting time is zero. As such, \cite{van2019hyper} focus more on proposing using approximations for routing in a specific manner, and less on exploring the different trade-offs in the design of load balancing systems that use state approximation. In particular, the main performance guarantee holds when the communication rate is close to the arrival rate in the system. But, as was observed by \cite{lu2011join} and discussed above, this is precisely the communication rate which is sufficient to obtain {exact} queue length information, and hence is sufficient for implementing the optimal JSQ. Thus, a "restricted" communication rate must be at most some small fraction of the arrival rate to justify using approximations.

\cite{vargaftik2020lsq} suggest using JSAQ for a different, discrete time setting, where jobs arrive and are processed in batches and there are multiple load balancers that must make routing decisions at the same time. In this setting even JSQ is not optimal, since it suffers from a {herding effect} (also referred to as {incast}). This corresponds to the possibility of several load balancers sending their batches of jobs at the same time to the same server with the shortest queue, possibly overloading it. Using approximate states is suggested as a way to mitigate the herding effect while simultaneously using a reduced amount of communication. The main result of the paper is that if the average difference between the approximations and the actual queue lengths is bounded, then the system is strongly stable. While stability is an important first order concern, it does not provide insight on performance since algorithms that perform poorly can also be stable (e.g., assigning jobs to servers randomly). 

Finally, \cite{zhou2021asymptotically} also consider the discrete time multiple dispatcher setting and study a large class of load balancing algorithms which contains JSAQ. They provide sufficient conditions for the system to be stable, as well as heavy traffic mean delay optimal in steady state. Here, as well as in \cite{vargaftik2020lsq}, the communication patterns and the approximation methods are suggested in an ad hoc manner, so as to fulfill the sufficient conditions discussed in these papers.   

In summary, while prior work has showcased encouraging results using approximate states, we have yet to arrive at a satisfactory understanding as to the true potential and limitation of load balancing using approximate state information. 

\paragraph{Formal Models of Communication and Memory} Our work is related in spirit to a growing body of literature that proposes formal models for quantifying the value of information and communication in resource allocation systems (cf.~\cite{walton2021learning} for a survey.)  \cite{xu2020information} proposes the Stochastic Processing under Imperfect Information (SPII) model to study the impact of communication and memory on the capacity region of a stochastic processing network. For various levels of communication constraints and memory capacities, they derive bounds on the achievable capacity region and provide the corresponding optimal scheduling policies and communication protocols. While \cite{xu2020information}, like the present paper, also studies the joint problem of communication and resource allocation, it focuses solely on system stability and does not address delay performance.  

More relevant for the context of load balancing is \cite{gamarnik2018delay}, where the authors propose a model of communication to obtain insights on the trade-off between memory, communication and performance for a variant of JIQ. Two important features of the policy they consider is that the load balancer's memory and the rate at which servers may inform the load balancer they are idle may be constrained. They study the steady state waiting time of jobs in this system using a fluid limit approach where the number of servers grows to infinity, the arrival process is assumed Poisson and the service times are exponentially distributed. In many modern communication systems, memory is not an issue if the task only requires storing identities of servers at the load balancer. Thus, for our purposes, the most relevant result of \cite{gamarnik2018delay} is that if the arrival rate per server is $0<\lambda<1$ and the average communication rate per server is $\alpha\lambda$, where $0<\alpha<1$, then the asymptotic delay is upper bounded by $1/\alpha\lambda-1$. This is better than Random Routing, where the delay per server is $\lambda/(1-\lambda)$, whenever $\lambda>1/(1+\alpha)$. Thus, if the communication is substantially constrained, i.e., $\alpha$ is small, a significant improvement over Random Routing is proven only for high loads. In addition, it is well known that the Round Robin policy, where jobs are assigned to servers in a cyclic fashion, and therefore uses no communication, performs much better than Random Routing (see \cite{liu1994optimality} and references therein). The simulations we conducted concur with this, and that is why we compare our proposed algorithms to Round Robin. The question of whether variants of JIQ using sparse communication can be used to achieve good performance is therefore left open.  

\paragraph{Methodology on Sub-diffusive Analysis} Related to our model and performance analysis, the authors of \cite{atar2019subdiffusive} study the diffusion scale performance of several load balancing algorithms, including, for example, SQ($d$), in a time varying setting. The analysis in \cite{atar2019subdiffusive} relies on carefully tailored event constructions whose probability can be estimated using the fact that the load balancer uses actual state information to make routing decisions. In this paper, we show how to extend the analysis to work with approximation based load balancing. The authors also prove a weaker version of \eqref{intro eq:ssc-ao} where the asymptotic optimality guarantee is with respect to all load balancing algorithms for $K$ servers. We strengthen their result by using new proof techniques to prove that (1) the performance of a combined rate single server queue serves as a lower bound for the performance of all load balancing algorithms for $K$ servers, and (2) that optimality holds with respect to the performance of this single server queue.


\section{The Approximation Component}\label{sec:The Approximation Component}
\subsection{Overview}
In this section we present the approximation component of the CARE model. We start by observing that because a message always contains the exact state of the server at that time, the error of any approximation algorithm is determined by its ability to estimate the departure process in between two adjacent messages. We consider a first-order approximation algorithm we term \textit{the basic approximation algorithm} in which the load balancer approximates the queue lengths based only on message updates and arrivals, without estimating any departures.

We then introduce a new, general, approximation approach we call \textit{the general queue length emulation approach}. Under this scheme, the load balancer emulates single server queues between messaging times while assigning estimated service requirement times to initial and arriving jobs. We then consider the natural emulation approach where the load balancer assigns each job in the emulation its a priori known Mean Service Requirement and term the approach MSR. Finally, for reasons we detail in later sections, we consider the MSR-x approach where the departure process in the emulation is truncated after $x-1$  departures.

\subsection{Approximation and Departure Estimation}
In order for the load balancer to approximate the queue lengths well, it must use the information available to it in a smart way. There are two pieces of information that are available to the load balancer at any point in time. The first is the state of each queue at the time of the last received message from the corresponding server, i.e., $Q_i(t_i^m)$ if $t \in [t_i^m,t_i^{m+1})$. The second is the arrival process to each server (since the arrivals are jobs that the load balancer sent to the servers, it can keep track of them). This is $A_i(t_i^m,t]$ for $t \in [t_i^m,t_i^{m+1})$. Considering the evolution of the actual queue length in \eqref{eq:actual ql} and the information available to the load balancer, it is natural to write $\tilde{Q}_i(t)$ under \textit{any} approximation algorithm in the following way:

\begin{equation}\label{eq:ap ql flow}
    \tilde{Q}_i(t)=Q_i(t_i^m)+A_i(t_i^m,t]-\tilde{D}_i(t_i^m,t], \quad t \in [t_i^m,t_i^{m+1}),
\end{equation}
where $\tilde{D}_i(t_i^m,t]$ encodes the approximation algorithm.

Recall from \eqref{eq:ae def} the definition of the approximation error of the queue length at server $i$ at time $t$,
\begin{equation*}
    AE_i(t):=| Q_i(t)- \tilde{Q}_i(t) |.
\end{equation*}
Then, by \eqref{eq:actual ql}, \eqref{eq:ae def} and \eqref{eq:ap ql flow} we have

\begin{equation}\label{eq:ae departure diff}
    AE_i(t)=| D_i(t_i^m,t]- \tilde{D}_i(t_i^m,t] |, \quad t\in [t_i^m,t_i^{m+1}).
\end{equation}
The implication of \eqref{eq:ae departure diff} is the following:
\begin{key observation}\label{ko1}
The evolution of the approximation error $AE_i(t)$ as a function of time for \textbf{any} approximation algorithm is determined solely by how the load balancer estimates the departure process between messages. 
\end{key observation}

\subsection{The Basic Approach}
A natural first algorithm to consider is the simple rule where the load balancer's estimate of the number of departures is always zero. We term this algorithm as the \textit{basic} approximation algorithm and define it next.

\begin{definition}[The basic approximation algorithm]\label{def:basic}
For every server $i$, the approximation of the queue in server $i$ at time $t\in [t_i^m,t_i^{m+1})$ is given by:
\begin{equation}\label{eq: def of B}
    \tilde{Q}^B_i(t)=Q_i(t_i^m)+A_i(t_i^m,t] \quad t\in [t_i^m,t_i^{m+1}),
\end{equation}
where we use the superscript $B$ to distinguish the resulting approximation of the basic algorithm from other approaches.

\end{definition}

The approximation error of the basic algorithm is simply the number of departures that have occurred since the last messaging time:

\begin{proposition}
We have:
\begin{equation}\label{eq:error of basic}
    AE^B_i(t)=D_i(t_i^m,t], \quad t\in [t_i^m,t_i^{m+1}).
\end{equation}

\end{proposition}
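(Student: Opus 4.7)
The proposition is essentially a direct corollary of Key Observation \ref{ko1} combined with the definition of the basic approach, so my plan is simply to identify the basic algorithm's estimated departure process and then substitute into equation \eqref{eq:ae departure diff}.

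First I would compare the definition of the basic approximation in \eqref{eq: def of B}, namely
\[
\tilde{Q}^B_i(t) = Q_i(t_i^m) + A_i(t_i^m,t],
\]
with the general form \eqref{eq:ap ql flow},
\[
\tilde{Q}_i(t) = Q_i(t_i^m) + A_i(t_i^m,t] - \tilde{D}_i(t_i^m,t].
\]
Matching the two expressions reveals that under the basic approach the load balancer's estimate of the departure process is identically zero on each inter-message interval, i.e., $\tilde{D}^B_i(t_i^m,t] = 0$ for every $t \in [t_i^m, t_i^{m+1})$. This step is just reading off the definition, and it is the only conceptual content.

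Next I would plug this into \eqref{eq:ae departure diff}, which gives
\[
AE^B_i(t) = \bigl| D_i(t_i^m,t] - \tilde{D}^B_i(t_i^m,t] \bigr| = \bigl| D_i(t_i^m,t] \bigr| = D_i(t_i^m,t],
\]
where the last equality uses that $D_i(t_i^m,t]$ is the (nonnegative) number of departures during a time interval. This completes the argument.

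There is no real obstacle here; the statement is essentially a one-line check and its role seems to be to make explicit, for later reference, the identification of the basic-algorithm error with the true inter-message departure count, which will presumably be the target that the MSR and MSR-$x$ refinements seek to reduce.
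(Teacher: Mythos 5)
Your proof is correct and is essentially the paper's argument in a slightly repackaged form: the paper substitutes \eqref{eq: def of B} and \eqref{eq:actual ql} directly into the definition \eqref{eq:ae def}, while you route through the already-derived identity \eqref{eq:ae departure diff} with $\tilde{D}^B_i \equiv 0$; these are the same one-line computation. No gaps.
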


\proof
Fix $t\in [t_i^m,t_i^{m+1})$. Using \eqref{eq:actual ql}, \eqref{eq:ae def} and \eqref{eq: def of B} we have:
\begin{align*}
    AE^B_i(t)=|\tilde{Q}^B_i(t)-Q_i(t)|&=|Q_i(t_i^m)+A_i(t_i^m,t]-\big( Q_i(t_i^m)+A_i(t_i^m,t]-D_i(t_i^m,t]\big)|\cr
    &=|D_i(t_i^m,t]|=D_i(t_i^m,t],
\end{align*}
where the last equality is due to the fact that the departure process is non negative. \qed \\

The basic algorithm can be implemented in a  very simple way: The load balancer keeps an array in which it stores its approximations. Whenever it receives a message from a server, it overwrites the corresponding array entry with the new state information. Whenever the load balancer sends a job to server $i$, it increases the corresponding array entry by 1.

If the load balancer does not have any knowledge on the service time distribution of jobs, then the basic approximation algorithm is a conservative, robust choice. The approximation cannot under-estimate the actual queue length and the approximation error grows exactly as the number of departures since the last message. Using this approach, the load balancer cannot send a job to a long queue based on the misconception that it is shorter than it really is, which is desirable. However, the load balancer might not be aware of actual short queues if it substantially over-estimates them. 

\subsection{The General Queue Length Emulation Approach}
Let us now consider more sophisticated load balancer approximation algorithms. Revisiting \eqref{eq:ae departure diff} and Observation \ref{ko1}, it is clear that if the load balancer is able to estimate actual departure times effectively, the approximation error can be substantially reduced. To this end, we propose the \textit{general queue length emulation approach}. Under this approach, the load balancer emulates for each server a single server queue between messaging times. For $t\in [t_i^m,t_i^{m+1})$, the queue length approximation $\tilde{Q}_i(t)$ is just the value of the emulation that started at time $t_i^m$. A detailed definition is as follows.

\begin{definition}[The general queue length emulation approach]\label{def:qls}
For every server $i$ and every time $t_i^m=0,1,2,\ldots$, the load balancer starts an emulation of a single server FIFO queue as follows:
 \begin{itemize}
     \item The initial condition is $Q_i(t_i^m)$.
     \item The load balancer assigns estimated service time requirements to each of the initial $Q_i(t_i^m)$ jobs.
     \item Every arrival to the actual queue is also an arrival to the emulated queue. For each such arrival, the load balancer assigns it an estimated service time requirement.
 \end{itemize}
 For $t\in [t_i^m,t_i^{m+1})$, the value of $\tilde{Q}_i(t)$ is given by the current queue length value of the emulation that started at time $t_i^m$.
\end{definition}


\begin{remark}
Note that the basic approach is just a special case of the general emulation approach, where the load balancer estimates all service times as $\infty$.
\end{remark}

\begin{remark}
If the load balancer knows the actual service requirement of every job, it can keep track of the exact queue lengths without any communication from the servers. Definition \ref{def:qls} provides the mechanism by which this can be achieved.
\end{remark}

We started with trying to effectively estimate the departure process in order to reduce the approximation error in \eqref{eq:ae departure diff}. But, looking at Definition \ref{def:qls}, we observe the following:

\begin{key observation}\label{ko2}
Under the general queue length emulation approach, the general problem of estimating the actual departure process reduces to the problem of estimating job service time requirements.
\end{key observation}

Observation \ref{ko2} motivates us to consider the following specific queue length emulation approach we term the \textit{Mean Service Requirement} (MSR) approximation algorithm. Assuming that the load balancer knows the average service time requirements of jobs, it simply assigns it to each job during its emulations. 

\begin{definition}[The MSR approximation algorithm]\label{def:MSR}
Suppose that the average service time requirement of each arriving job to server $i$ is equal to $1/\mu_i$ for some $\mu_i>0$. The load balancer uses the queue length emulation approach as described in Definition \ref{def:qls}, where for each job it must assign a service time requirement to, it assigns $1/\mu_i$, depending on the server $i$ it was routed to. 
\end{definition}

We use a superscript $M$ to distinguish the MSR algorithm form the basic algorithm. The queue length approximations under MSR are:

\begin{equation*}
    \tilde{Q}^M_i(t)=Q_i(t_i^m)+A_i(t_i^m,t]-\tilde{D}^M_i(t_i^m,t], \quad t \in [t_i^m,t_i^{m+1}),
\end{equation*}
where $\tilde{D}^M_i(t_i^m,t]$ is the number of departures from the emulated queue where each job in the emulation has a deterministic service time requirement equal to $1/\mu_i$. By \eqref{eq:ae departure diff} the approximation error of MSR is:

\begin{equation}\label{eq:ae MSR}
    AE^M_i(t)=| D_i(t_i^m,t]- \tilde{D}^M_i(t_i^m,t] |, \quad t\in [t_i^m,t_i^{m+1}).
\end{equation}
   
While the MSR approach provides a natural way of estimating the departure process, the uncertainty in the error \eqref{eq:ae MSR} grows with time and the number of jobs in the emulation procedure. As we discuss in Section \ref{subsec: DT-x}, it is useful to truncate the departure process in the emulation before the number of departures reaches a predetermined number $x \in \mathbbm{N}$. 

\begin{definition}[The MSR-x approximation algorithm]\label{def:MSR-x}
Let $x \in \mathbbm{N}$ and suppose that the average service time requirement of each arriving job to server $i$ is equal to $1/\mu_i$ for some $\mu_i>0$. The load balancer uses the queue length emulation approach as described in Definition \ref{def:qls}, where for each server $i$, for the first $x-1$ jobs it must assign a service time requirement to, it assigns $1/\mu_i$, and for every job after that it assigns the value $\infty$. 
\end{definition}

We use an additional superscript $x$ to distinguish the MSR-x algorithm form the MSR and the basic algorithms. The queue length approximations under MSR-x are:
\begin{align*}
    \tilde{Q}^{M,x}_i(t)&=Q_i(t_i^m)+A_i(t_i^m,t]-\tilde{D}^{M,x}_i(t_i^m,t]\cr
    &=Q_i(t_i^m)+A_i(t_i^m,t]-\tilde{D}^M_i(t_i^m,t]\wedge (x-1), \quad t \in [t_i^m,t_i^{m+1}),
\end{align*}
and the approximation error is:
\begin{align*}
    AE^{M,x}_i(t)&=| D_i(t_i^m,t]- \tilde{D}^{M,x}_i(t_i^m,t] |\cr
    &=| D_i(t_i^m,t]- \tilde{D}^M_i(t_i^m,t]\wedge (x-1) |, \quad t\in [t_i^m,t_i^{m+1}).
\end{align*}

We proceed with the next component of the CARE model, namely the communication component.
\section{The Communication Component}\label{sec:The Communication Component}

\subsection{Overview}

Our observations above shed light on the design of the communication pattern. A natural approach is to limit the time between messages so as to somewhat control how (stochastically) large the approximation error gets. We term this \textit{the Rate-Triggered} (RT) communication pattern. A second approach is to control the error by limiting the number of actual departures between messages, an approach we call \textit{the Departure-Triggered} (DT) communication pattern. A third, more sophisticated method, is to allow the servers to conduct the exact same queue length emulation the load balancer conducts, and have servers send a message whenever the approximation error reaches some threshold. We term this \textit{the Error-Triggered} (ET) communication pattern. In this section we present the three approaches. We discuss the communication frequency and resulting maximal approximation error of the three approaches in more detail in the next sections. 

\subsection{Rate-Triggered Communication}
Under this approach each server sends messages to the load balancer at some constant rate $r_i>0$, regardless of its state. Generally speaking, the larger $r_i$ is, the smaller the approximation error is.

\begin{definition}[The RT-$r$ communication pattern]\label{def:rt}
Fix $r=(r_1,\ldots,r_K)$. Each server $i$ sends a message to the load balancer every $1/r_i$ units of time.
\end{definition}

RT has two main advantages. First, it is very simple and easy to implement. It does not require the servers to keep track of their state or emulate queue lengths. Second, given a constraint on the communication frequency, e.g., every server can send messages to the load balancer at a rate of at most $r_{max}$, it is straightforward to use RT with rate $r_{max}$, thus guaranteeing the constraint is respected. 

The price however, is that the communication is not used to prevent the approximation error from substantially increasing. It is used to limit the time during which it can increase. We shall see in later sections that the adaptive approaches DT and ET offer lower maximal approximation error guarantees and better performance.   

\subsection{Departure-Triggered Communication}

Another reasonable communication pattern design is that each server sends a message every $x$ of its departures. We call this the Departure Triggered communication pattern and denote it by DT-$x$.  

\begin{definition}[The DT-$x$ communication pattern]\label{def:dt}
Each server sends a message to the load balancer immediately after every $x$ of its departures.
\end{definition}

DT-$x$ achieves two goals simultaneously. First, the value of $x$ determines the number of messages sent by the servers and allows us to compare it against the communication required for exact state information. For example, if $x=3$, then the number of messages is $1/3$ messages per departure in the system, which is $1/3$  of the communication required for exact state information. Second, it allows us to guarantee that the approximation error is never more than $x$ (when combined with the basic or MSR-x approximation algorithms). We discuss these points in further detail in the next two sections. Finally, DT-$x$ has the advantage that it does not depend on the approximation algorithm. Thus, if one wishes to change the way the load balancer approximates queues, no changes are needed on the servers' side.

\subsection{Error-Triggered Communication}

We conclude this section by presenting a more sophisticated communication pattern which targets the approximation error itself. Generally, if we want to make sure that the approximation error never reaches some value $x$, then it is necessary and sufficient for the servers to send a message only when the approximation error reaches $x$. This can only be achieved if the servers can keep track of the approximation error as a function of time.

The approximation error depends on two things. The first is the actual queue length evolution $Q_i(t)$, which is known to the server. The second, is the load balancer approximation $\tilde{Q}_i(t)$. Thus, if a server has access to $\tilde{Q}_i(t)$, it can calculate the approximation error at each point in time and send a message whenever it reaches $x$.

We therefore propose the approach where the servers mimic the approximation algorithm to calculate $\tilde{Q}_i(t)$. For example, the servers can do the same single server queue emulation the load balancer does between messages. 

\begin{definition}[The ET-$x$ communication pattern]\label{def:et}
Every server sends a message to the load balancer whenever the approximation error reaches $x$.
\end{definition}

We will show that ET-$x$ has the same maximal approximation error guarantee as DT-$x$, but with, possibly, substantially less communication. The price however is that ET-$x$ requires a more complicated calculation by the servers.

Now that we have described our proposed approximation and communication algorithms, we are ready to discuss and provide results on the communication-approximation link.

\section{The Communication-Approximation Link}\label{sec: The Communication-Approximation Link}

In this section we discuss the communication requirement and the maximal approximation error of the different approaches we considered above. We first characterize the communication requirement for obtaining exact state information, which we will use as the baseline for comparison. Then, we consider the RT-$r$, DT-$x$ and ET-$x$ communication patterns separately, after which we reiterate and prove Theorems \ref{thm:summary 1}, \ref{thm:summary 2} and \ref{thm:summary 3}.
\subsection{Exact State Information Baseline}
As was described in Section \ref{subsec:comm metric}, the communication is measured by either (1) the number of messages per job departure, i.e., the relationship between $M(t)$ and $D(t)$, and/or (2) the long term average rate messages are sent, given by the quantity $M(t)/t$ for large $t$. The following provides the baseline for the communication requirements of the different algorithms we propose.

\begin{proposition}
Assume that the service time requirements of jobs are not deterministic and are not known to the load balancer. Then the load balancer can know the exact state information \textit{at any point in time} if and only if each server sends a message after every job departure. Namely, 
\begin{equation*}
    \text{Baseline}: \quad M(t)=D(t)\quad \forall t \ge 0
\end{equation*}
is necessary and sufficient to obtain exact state information.
\end{proposition}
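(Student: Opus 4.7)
The plan is to split the equivalence into its two directions. The ``if'' direction follows almost directly from the flow equation \eqref{eq:actual ql}: if each server $i$ sends a message immediately after every departure, then between consecutive messaging epochs $t_i^m$ and $t_i^{m+1}$ there is, by construction, no departure, so $Q_i(t) = Q_i(t_i^m) + A_i(t_i^m,t]$ for every $t \in [t_i^m, t_i^{m+1})$. Since the load balancer learns $Q_i(t_i^m)$ from the $m$-th message and knows $A_i(t_i^m,t]$ (it generates the arrivals), it can evaluate the right-hand side exactly and reset to the new true value at the next messaging epoch. Hence $\tilde{Q}_i(t)=Q_i(t)$ for all $t\ge 0$. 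This also immediately verifies the baseline relation $M(t)=D(t)$.

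For the ``only if'' direction I would argue by contradiction. Suppose an algorithm maintains $\tilde{Q}_i(t)=Q_i(t)$ for every $t\ge 0$ and every $i$, yet some departure from a server $i$ occurs at a random time $t^*$ that does not coincide with any messaging epoch $\{t_i^m\}$. Since $Q_i$ drops by $1$ at $t^*$, exact tracking forces $\tilde{Q}_i$ to jump down by $1$ at exactly that random time as well. However, the load balancer's information strictly prior to $t^*$ consists only of the arrival stream, its own routing decisions, and the messages received before $t^*$; by hypothesis none of these reveal the actual service time realization of the departing job. Under the assumption that service times are non-deterministic and unknown to the load balancer, the conditional distribution of $t^*$ given this information is non-degenerate. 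Any rule that the load balancer uses to produce a jump in $\tilde{Q}_i$ on the interval $(t_i^m, t_i^{m+1})$ yields a jump time that is measurable with respect to this strictly smaller information, and so the probability that this jump time equals $t^*$ is strictly less than one. With positive probability $\tilde{Q}_i(t^*)\ne Q_i(t^*)$, contradicting exact tracking.

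The step I expect to be the main obstacle is the measurability / independence argument in the necessity direction: one has to specify carefully the filtration $\mathcal{F}^{LB}_t$ available to the load balancer and verify that $t^*$ has a non-degenerate conditional distribution given $\mathcal{F}^{LB}_{t^*-}$, so that no $\mathcal{F}^{LB}$-measurable random time can almost surely equal $t^*$. For service time distributions with no atoms this is clean because $\tilde{Q}_i$ can only jump at a countable set of $\mathcal{F}^{LB}$-measurable random times, so a union bound forces coincidence probability to be zero; for general non-deterministic service distributions one still has at least one admissible value of $t^*$ that is realized with probability strictly less than one under any $\mathcal{F}^{LB}$-measurable guess, which is already enough to break the almost-sure pathwise equality $\tilde{Q}_i\equiv Q_i$. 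Minor edge cases to dispatch along the way include intervals on which the server is idle (no departure occurs, so the condition is vacuous) and the initial condition $\tilde{Q}_i(0)=Q_i(0)$, which is given by the approximation-component setup.
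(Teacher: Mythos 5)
Your proposal is correct and follows essentially the same route as the paper: sufficiency by reconstructing queue lengths from the known arrival stream plus departure notifications, and necessity by noting that an unannounced departure cannot be predicted with probability one when service times are stochastic and unknown. Your treatment of the necessity direction is somewhat more careful about the load balancer's filtration than the paper's brief argument, but the underlying idea is identical.
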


\begin{proof}
Suppose each server sends a message after each of its departures. This allows the load balancer to track the departure process at each server. Since the load balancer also knows the arrival process to each server, by \eqref{eq:ql flow}, it can construct the exact queue length at any point in time. For the opposite direction, assume by contradiction that there was a job departure at server $i$ which did not trigger a message. If just before this departure the load balancer did not know the exact queue length at server $i$ we are done. Otherwise, since service times are assumed stochastic and unknown to the load balancer, it cannot predict this departure with probability 1, meaning there is a positive probability that the load balancer does not know the exact state after the departure.
\end{proof}

For long-run average results, we need more assumptions.
\begin{proposition}\label{prop:long exact arrival}
Assuming renewal arrivals (Assumption \ref{ass2}),

\begin{align*}
    \text{Baseline}: \quad \frac{M(t)}{t}=\frac{\sum_{i=1}^K M_i(t)}{t}\leq \frac{A(t)}{t} \rightarrow \lambda \quad \text{almost surely as } t \rightarrow \infty.
\end{align*}
\end{proposition}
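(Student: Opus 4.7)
The plan is to combine the pathwise identity $M(t)=D(t)$ established (at least implicitly) by the preceding proposition with a global flow balance for the system, and then appeal to the strong law of large numbers for renewal processes to identify the long-run rate as $\lambda$.

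First, I would invoke the exact-state-information baseline from the previous proposition: each server sends exactly one message per job departure, so $M_i(t)=D_i(0,t]$ for each server $i$, and summing over $i$ yields the pathwise identity $M(t)=D(t)$. Next, summing the flow equation \eqref{eq:ql flow} across servers and using $\sum_i A_i(0,t]=A(t)$ gives
\[
\sum_{i=1}^K Q_i(t)=\sum_{i=1}^K Q_i(0)+A(t)-D(t).
\]
Non-negativity of $\sum_i Q_i(t)$ then yields $D(t)\le A(t)+Q(0)$, where $Q(0):=\sum_i Q_i(0)$ is the finite total initial queue length. Dividing by $t$,
\[
\frac{M(t)}{t}=\frac{D(t)}{t}\le \frac{A(t)}{t}+\frac{Q(0)}{t}.
\]
This is the pathwise version of the inequality stated in the proposition (modulo the vanishing initial-queue term $Q(0)/t$, which simply reflects that early in the horizon the system can dispatch more than it has received).

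Finally, I would invoke Assumption \ref{ass2}: since $A(t)$ is a renewal process with rate $\lambda$, the strong law of large numbers for renewal processes (equivalently, $A(t)/t\to\lambda$ a.s.\ by the renewal theorem) gives the right-hand limit. Combining with $Q(0)/t\to 0$ yields $\limsup_{t\to\infty} M(t)/t\le \lambda$ almost surely, which is what the proposition asserts.

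There is no real obstacle in this argument; it is a one-step bookkeeping computation plus a direct appeal to the renewal SLLN. The only minor subtlety worth flagging is that the pathwise inequality $M(t)\le A(t)$ claimed in the display holds only up to the constant $Q(0)$, which is harmless after dividing by $t$ and sending $t\to\infty$.
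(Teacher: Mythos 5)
Your argument is essentially the paper's: the paper disposes of this proposition in one line by noting $M(t)=D(t)\le A(t)\wedge\sum_i S_i(t)$ and invoking the renewal SLLN for $A(t)$, which is exactly your chain $M(t)=D(t)\le A(t)+Q(0)$ followed by $A(t)/t\to\lambda$ a.s. Your extra care with the initial-queue term $Q(0)$ is a legitimate (and harmless, as you note) refinement of the pathwise inequality that the paper states without it.
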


\begin{proposition}\label{prop:long exact}
Assuming i.i.d. services at rate $\mu_i$ at server i (Assumption \ref{ass1}),

\begin{align*}
    \text{Baseline}: \quad \frac{M_i(t)}{t}\leq \frac{S_i(t)}{t} \rightarrow \mu_i \quad \text{almost surely as } t \rightarrow \infty.
\end{align*}
\end{proposition}

Propositions \ref{prop:long exact arrival} and \ref{prop:long exact} follow immediately from the facts that $M(t)=D(t)\leq A(t)\wedge \sum_iS_i(t)$, and that $A(t)$ and $S_i(t)$ are renewal processes.
We conclude, under Assumptions \ref{ass2} and  \ref{ass1}, that the long term average rate of messages required for exact state information is roughly given by $\lambda \wedge \sum_i \mu_i$. 
\subsection{RT-r}

The communication requirement of RT-$r$ is deterministic, such that each server sends messages at a constant rate $r_i$. The maximal approximation error however, is harder to characterize and depends on the approximation algorithm. We proceed with an informal discussion.

If the basic approximation algorithm is used (see Definition \ref{def:basic}), then the approximation error is given in \eqref{eq:error of basic}

\begin{equation*}
    \text{RT-$r$}: \quad AE^B_i(t)=D_i(t_i^m,t], \quad t\in [t_i^m,t_i^m+1/r_i),
\end{equation*}
namely, the number of departures since the last message. This quantity depends on how many jobs were initially in the queue, how many jobs arrived to it and what their service time requirements were. If we assume Assumption \ref{ass1}, such that jobs have i.i.d.~service times requirements with mean $1/\mu_i$, then the number of departures can roughly grow at most linearly at a rate of $\mu_i$.

Since the departure process is non decreasing, the maximum error is attained right before the next messaging time, namely after $1/r_i$ units of time have past since the last message. So, given a communication rate $r_i$, the approximation error should be, on average, always below $\mu_i/r_i$. 

Now, if the MSR (Definition \ref{def:MSR}) or MSR-x (Definition \ref{def:MSR-x}) are used as the approximation algorithm the analysis is even more difficult. The MSR approach offers a reasonable way to estimate the departure process. The approximation error is given in \eqref{eq:ae MSR}:

\begin{equation}\label{eq:ae MSR RT}
    \text{RT-$r$}: \quad AE^M_i(t)=| D_i(t_i^m,t]- \tilde{D}^M_i(t_i^m,t] |, \quad t\in [t_i^m,t_i^{m}+1/r_i).
\end{equation}

Intuitively, the difference in \eqref{eq:ae MSR RT} is given by, roughly, the actual departure process minus its mean, and behaves similarly to a martingale. Therefore, drawing intuition from the $L^p$ maximal inequality (Theorem 4.4.4. in \cite{durrett2019probability}), as time $t$ progresses, this difference grows more slowly, as $2\sigma_i \sqrt{t}$, where $\sigma_i$ is the standard deviation related to the variability in service time requirements. Thus, under RT-$r$, the approximation error during a time interval of duration $1/r_i$ should not substantially exceed $2\sigma_i/\sqrt{r_i}$. This type of relation is somewhat expected in the sense that higher variability of service times should force the rate to be larger to keep a certain level of maximal approximation error.

The key takeaway is that while the above discussion sheds some light on what approximation errors to expect given a deterministic communication rate, one cannot guarantee a certain maximal approximation error when using RT-$r$. 


\subsection{DT-x}\label{subsec: DT-x}
Under DT-$x$, each server sends a message after every $x$ of its departures. Thus, the number of messages each server sends during a time interval depends on several components of the system, namely the arrival process, the load balancing and approximation algorithms and the service time requirements of jobs. For example, as opposed to RT-$r$, if the system is empty and there are no arrivals, the servers will not send any messages.

It is therefore natural to measure the communication frequency of DT-$x$ as the number of messages per departure in the system. This is also how we characterized the communication requirement for exact state information: one message per departure. We can conclude that the communication requirement of DT-$x$ is $1/x$ messages per departure:

\begin{proposition}\label{prop:dt comm guarnatee}
Under DT-$x$ with any approximation algorithm, we have:
\begin{equation*}
    \text{DT-$x$}: \quad M_i(t)=\floor{D_i(t)/x}\leq D_i(t)/x, \quad \forall t\geq 0.
\end{equation*}
\end{proposition}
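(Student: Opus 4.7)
The plan is to prove the proposition by directly counting messages through the defining property of the DT-$x$ communication pattern. By Definition \ref{def:dt}, server $i$ sends its $m$-th message immediately after its $(mx)$-th departure; equivalently, the messaging time $t_i^m$ coincides with the time of the $(mx)$-th completion at server $i$. Because the approximation algorithm only affects how the load balancer maintains $\tilde{Q}$ and not which events trigger a message on the server side, this identification holds regardless of the chosen approximation scheme, which is what justifies the claim for \emph{any} approximation algorithm.

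Next, I would substitute this identification into the definition $M_i(t)=\sup\{m\in \mathbbm{N}: t_i^m \le t\}$. Since $t_i^m \le t$ if and only if the $(mx)$-th departure at server $i$ has occurred by time $t$, which in turn is equivalent to $mx \le D_i(t)$, the supremum equals the largest integer $m$ with $mx \le D_i(t)$, i.e., $\floor{D_i(t)/x}$. The upper bound $\floor{D_i(t)/x}\le D_i(t)/x$ is then immediate from the definition of the floor function.

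The argument is essentially a bookkeeping exercise; there is no substantive obstacle beyond carefully invoking the definitions. The only subtlety worth flagging is the ``for any approximation algorithm'' clause: one must observe that the DT-$x$ rule triggers messages off of \emph{actual} departures $D_i$, which are determined by the dynamic environment component rather than by how the load balancer maintains $\tilde{Q}_i(t)$. Hence, even though the approximation scheme can influence routing decisions (and therefore the realized sample path of $D_i$), for any given sample path the count $M_i(t)$ is rigidly tied to $D_i(t)$ by the formula above, giving the claimed equality and bound.
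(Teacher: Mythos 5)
Your proof is correct and takes essentially the same route as the paper, which simply cites the definition of DT-$x$; you have merely unpacked that definition into the explicit identification $t_i^m = $ time of the $(mx)$-th departure and the resulting floor formula. Your remark about why the bound holds for any approximation algorithm (messages are triggered by actual departures, not by the load balancer's state estimate) is a correct and worthwhile clarification that the paper leaves implicit.
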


\proof 
By definition of DT-$x$. \qed 

\vspace{10pt}

We can go further if we assume Assumption  \ref{ass2} and/or \ref{ass1}. We now give an upper bound on the long term rate servers communicates with the load balancer.

\begin{proposition}\label{prop:long dt arrival}
Under DT-$x$, with any approximation algorithm, if Assumptions \ref{ass2}, respectively, \ref{ass1}, hold, then: 

\begin{align}\label{eq:dt comm ass2}
    \text{DT-$x$}: \quad \frac{M(t)}{t}=\frac{\sum_{i=1}^K M_i(t)}{t}\leq \frac{1}{x}\frac{A(t)}{t} \rightarrow \frac{\lambda}{x} \quad \text{almost surely as } t \rightarrow \infty.
\end{align}
and, respectively,
\begin{align}\label{eq:dt comm ass1}
    \text{DT-$x$}: \quad \frac{M_i(t)}{t}\leq \frac{1}{x}\frac{S_i(t)}{t} \rightarrow \frac{\mu_i}{x} \quad \text{almost surely as } t \rightarrow \infty.
\end{align}
\end{proposition}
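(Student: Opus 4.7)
The plan is to combine the per-server bound from Proposition \ref{prop:dt comm guarnatee} with the renewal law of large numbers applied to the exogenous arrival and potential service processes. The idea is that DT-$x$ ties each message to $x$ actual departures, and departures are themselves dominated either by arrivals (globally) or by the potential service process (locally at each server), both of which are renewal processes whose rates are fixed by Assumptions \ref{ass2} and \ref{ass1}.

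First I would establish the global bound in \eqref{eq:dt comm ass2}. By Proposition \ref{prop:dt comm guarnatee}, for each server $i$ we have $M_i(t) \le D_i(t)/x$, so summing over $i$ yields
\begin{equation*}
    M(t) = \sum_{i=1}^K M_i(t) \le \frac{1}{x}\sum_{i=1}^K D_i(t) = \frac{D(t)}{x}.
\end{equation*}
Since a job can only depart after it has arrived, and using the initial condition $A(0)=D(0)=0$ together with $D(t) \le A(t) + \sum_i Q_i(0)$, dividing by $t$ gives
\begin{equation*}
    \frac{M(t)}{t} \le \frac{A(t)}{xt} + \frac{\sum_i Q_i(0)}{xt}.
\end{equation*}
Under Assumption \ref{ass2}, $A(t)$ is a renewal process with rate $\lambda$, so $A(t)/t \to \lambda$ almost surely by the renewal strong law of large numbers, and the initial-condition term vanishes, yielding \eqref{eq:dt comm ass2}.

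For the per-server bound in \eqref{eq:dt comm ass1}, I would invoke the decomposition \eqref{eq:departure potential service bound}, which states that $D_i(0,t] \le S_i(t)$. Combining with Proposition \ref{prop:dt comm guarnatee},
\begin{equation*}
    \frac{M_i(t)}{t} \le \frac{D_i(t)}{xt} \le \frac{S_i(t)}{xt}.
\end{equation*}
Under Assumption \ref{ass1}, $S_i$ is the renewal process defined in \eqref{eq:def of potential service} with rate $\mu_i$, so $S_i(t)/t \to \mu_i$ almost surely, again by the renewal strong law of large numbers.

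No substantive obstacle is expected: the proposition is essentially a corollary of Proposition \ref{prop:dt comm guarnatee} combined with standard renewal-theoretic limits, exactly mirroring the baseline arguments of Propositions \ref{prop:long exact arrival} and \ref{prop:long exact} but with the additional factor $1/x$ coming from the DT-$x$ rule. The only small item to be careful about is that the load balancing and approximation algorithms might depend on the arrivals and services in complicated ways, but the bounds $M_i(t) \le D_i(t)/x$ and $D_i(0,t] \le S_i(t)$ are pathwise and therefore hold regardless of those choices, which is why the statement applies to any approximation algorithm.
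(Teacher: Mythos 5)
Your proposal is correct and follows essentially the same route as the paper: bound $M_i(t)\leq D_i(t)/x$ via the DT-$x$ rule, dominate departures by $A(t)$ globally and by $S_i(t)$ per server via \eqref{eq:departure potential service bound}, and conclude with the renewal strong law. Your extra care with the initial-condition term $\sum_i Q_i(0)$ (which vanishes after dividing by $t$) is a minor refinement the paper glosses over, but it does not change the argument.
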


\proof
First, we have:

\begin{align*}
    M_i(t)=\floor{D_i(t)/x}\leq D_i(t)/x \leq S_i(t)/x,
\end{align*}

where the last inequality is due to \eqref{eq:departure potential service bound}. This proves the first inequality in \eqref{eq:dt comm ass1}. Second, the process $S_i(t)$ (defined in \eqref{eq:def of potential service}) is a renewal process. Thus, it satisfies $S_i(t)/t \rightarrow \mu$ almost surely as $t \rightarrow \infty$ (Theorem 1 in section 10.2 of \cite{grimmett2020probability}) which completes the proof of \eqref{eq:dt comm ass1}. The proof of \eqref{eq:dt comm ass2} follows a similar argument after noting that $\sum M_i(t)\leq \sum D_i(t)/x\leq A(t)/x$.
\qed 

\vspace{10pt}

We now discuss the maximal approximation error under DT-$x$. If the basic approximation algorithm is used, the approximation error is just the number of departures since the last messaging time, and it must be less or equal to $x-1$ by definition:
\begin{equation}\label{eq:error of basic dt}
    \text{DT-$x$}: \quad AE^B_i(t)=D_i(t_i^m,t]\leq x-1, \quad t\in [t_i^m,t_i^{m+1}).
\end{equation}

The following example shows that if the MSR algorithm is used, the approximation error under DT-$x$ cannot be deterministically bounded:
\begin{example}\label{ex1}
Suppose that the queue length at server $i$ at time $t_i^m$ is very large. Suppose further that the service time requirement of the job that is in service immediately after time $t_i^m$ is also very large. Then, during a long period of time there will be no departures from the actual queue but there will be many emulated departures, resulting in a large approximation error.  
\end{example}

The situation described in Example \ref{ex1} should be very rare. However, from an engineering perspective, it is desirable to be able to guarantee that these types of occurrences cannot happen. This is why we proposed the MSR-x algorithm, which prevents the emulated departures from exceeding $x-1$, guaranteeing the approximation error is always less or equal to $x-1$:

\begin{proposition}\label{prop:dtx msrx ae}
\begin{equation}\label{eq:error of msrx dt}
    \text{DT-$x$}: \quad AE^{M,x}_i(t)\leq x-1, \quad \forall t>0.
\end{equation}
\end{proposition}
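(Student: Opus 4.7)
The plan is to leverage the identity \eqref{eq:ae departure diff} (from Key Observation \ref{ko1}) and reduce the claim to a bound on two non-negative quantities living in the same interval. Under DT-$x$, by construction the messaging times $t_i^m$ are precisely the times immediately after the $x$-th, $2x$-th, $\ldots$ departure from server $i$. Consequently, for $t \in [t_i^m, t_i^{m+1})$, the actual departures that have accumulated since the last message satisfy $D_i(t_i^m, t] \in \{0, 1, \ldots, x-1\}$, since the $x$-th departure in the interval would trigger the next message at $t_i^{m+1}$.

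Next, I would analyze the emulated departure process under MSR-$x$. Using Definition \ref{def:MSR-x}, the emulation assigns finite service time requirements to only the first $x-1$ jobs and $\infty$ to all subsequent ones, so the emulated departure process is capped at $x-1$; that is, $\tilde{D}^{M,x}_i(t_i^m, t] = \tilde{D}^{M}_i(t_i^m, t] \wedge (x-1) \in \{0, 1, \ldots, x-1\}$ for all $t \in [t_i^m, t_i^{m+1})$. This is exactly the design reason for truncating the emulation at $x-1$.

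Combining the two observations with \eqref{eq:ae departure diff}, both $D_i(t_i^m, t]$ and $\tilde{D}^{M,x}_i(t_i^m, t]$ lie in the integer interval $[0, x-1]$, so
\begin{equation*}
    AE^{M,x}_i(t) = |D_i(t_i^m, t] - \tilde{D}^{M,x}_i(t_i^m, t]| \leq x-1,
\end{equation*}
which gives the claim for $t \in [t_i^m, t_i^{m+1})$. Since this interval structure covers $(0, \infty)$ and the error is zero at messaging times themselves (messages instantly reset the approximation, as noted at the end of Section \ref{subsec:comm metric}), the bound holds for all $t > 0$.

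There is no real obstacle here beyond making sure the bookkeeping at the messaging endpoints is handled correctly: one has to verify that the convention ``immediately after the $x$-th departure'' indeed yields $D_i(t_i^m, t] \leq x-1$ strictly before the next message, and that the $\wedge(x-1)$ truncation in MSR-$x$ controls $\tilde{D}^{M,x}_i$ even in the adversarial situation of Example \ref{ex1}. Both are immediate from the definitions, so the proof should be essentially a short two-line calculation after the preliminaries are in place.
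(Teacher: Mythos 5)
Your proof is correct and follows essentially the same route as the paper's: both arguments bound $D_i(t_i^m,t]$ by $x-1$ via the DT-$x$ trigger and $\tilde{D}^{M,x}_i(t_i^m,t]$ by $x-1$ via the MSR-$x$ truncation, then bound the absolute difference of two quantities in $[0,x-1]$ by $x-1$. The paper phrases the last step as $|a-b|\leq a\vee b$ for non-negative $a,b$, but this is the same observation.
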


\proof
Fix $t \in [t_i^m,t_i^{m+1})$. By \eqref{eq:ae MSR} we have

\begin{align*}
    AE^{M,x}_i(t)=| D_i(t_i^m,t]- \tilde{D}^{M,x}_i(t_i^m,t] |\leq D_i(t_i^m,t]\vee \tilde{D}^{M,x}_i(t_i^m,t]\leq(x-1)\vee(x-1)=x-1
\end{align*}
\qed

\subsection{ET-x}

Under ET-$x$, each server sends a message whenever the approximation error reaches $x$. The error is zero at a messaging time and can only increase or decrease by 1 due to actual or emulated departures. Thus, the approximation error of ET-$x$ is never more than $x-1$ under any of the approximation algorithms we consider.

If the basic approximation algorithm is used, then the load balancer does not emulate any departures. Thus the approximation error reaches $x$ exactly after $x$ departures since the last messaging time. This means that ET-$x$ coincides with DT-$x$ in this case, and thus has the same communication requirement of $1/x$ messages per departure (and under Assumptions \ref{ass2} and \ref{ass1} satisfies \eqref{eq:dt comm ass2} and \eqref{eq:dt comm ass1}, respectively).

If the MSR-x algorithm is used, then the number of emulated departures cannot exceed $x-1$. Therefore, the approximation error cannot reach $x$ because of an emulated departure, only because of an actual departure. But, it will take at least $x$ departures for the approximation error to reach $x$. We conclude that the communication frequency of ET-$x$ with MSR-x is at most (and could be substantially less) 1 message every $x$ departures. Under Assumptions \ref{ass2} and \ref{ass1}, \eqref{eq:dt comm ass2} and \eqref{eq:dt comm ass1} hold for ET-$x$ with MSR-x as well.

As an intermediary summary, we have the following results for ET-$x$:

\begin{proposition}\label{prop:et gurantees}
Under ET-$x$, we have that $AQ=x-1$ for any approximation algorithm. If the basic or MSR-$x$ approximation algorithms are used, then:
\begin{equation*}
        \text{ET-$x$}: \quad M_i(t)=\floor{D_i(t)/x}\leq D_i(t)/x, \quad \forall t\geq 0.
\end{equation*}
If, in addition, Assumptions \ref{ass2}, respectively , \ref{ass1}, hold, then:      
\begin{align*}
    \text{ET-$x$}: \quad \frac{M(t)}{t}=\frac{\sum_{i=1}^K M_i(t)}{t}\leq \frac{1}{x}\frac{A(t)}{t} \rightarrow \frac{\lambda}{x} \quad \text{almost surely as } t \rightarrow \infty.
\end{align*}
\begin{align*}
    \text{ET-$x$}: \quad \frac{M_i(t)}{t}\leq \frac{1}{x}\frac{S_i(t)}{t} \rightarrow \frac{\mu_i}{x} \quad \text{almost surely as } t \rightarrow \infty.
\end{align*}

\end{proposition}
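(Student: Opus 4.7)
The plan is to establish the three claims of the proposition in turn: (i) the pointwise approximation quality bound $\sup_t AQ(t) \leq x-1$, which holds under \emph{any} approximation algorithm combined with ET-$x$; (ii) the pathwise communication bound $M_i(t) \leq D_i(t)/x$ under the basic and MSR-$x$ algorithms; and (iii) the long-term average rate bounds, which will then follow from (ii) by exactly the renewal-theoretic argument used in the proofs of Propositions \ref{prop:long exact arrival} and \ref{prop:long exact}.

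For (i), I would invoke Key Observation \ref{ko1} together with the identity \eqref{eq:ae departure diff}, which expresses $AE_i(t) = |D_i(t_i^m, t] - \tilde{D}_i(t_i^m, t]|$ between consecutive messaging times. The key structural fact is that, between two messages, $AE_i(t)$ is a non-negative integer-valued process that starts at $0$ (by the sync at time $t_i^m$) and changes by exactly $\pm 1$ at each actual or emulated departure; arrivals do not enter the formula because both server and load balancer observe them and they cancel. By the defining rule of ET-$x$ (and because servers are assumed to replicate the load balancer's approximation algorithm to track $AE_i$ locally), the server transmits a message the instant $AE_i$ reaches $x$, which, since messages are instantaneous, immediately resets the error to $0$. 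Hence $AE_i(t) \in \{0, 1, \ldots, x-1\}$ for all $t \geq 0$, giving (i).

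For (ii) under the basic algorithm, the load balancer produces no emulated departures, so equation \eqref{eq:error of basic} gives $AE_i^B(t) = D_i(t_i^m, t]$, which hits $x$ exactly at the $x$-th actual departure after $t_i^m$; messages are therefore triggered at the $x$-th, $2x$-th, $\ldots$ actual departures of server $i$, yielding $M_i(t) = \floor{D_i(t)/x}$ exactly. Under MSR-$x$, Definition \ref{def:MSR-x} caps the emulated departure count at $x-1$, so $\tilde{D}_i^{M,x}(t_i^m, t] \leq x-1$; the trigger condition $|D_i(t_i^m, t] - \tilde{D}_i^{M,x}(t_i^m, t]| = x$ can therefore only be met via $D_i(t_i^m, t] - \tilde{D}_i^{M,x}(t_i^m, t] = x$, which in particular forces $D_i(t_i^m, t] \geq x$. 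Hence each inter-message interval contains at least $x$ actual departures of server $i$, giving $M_i(t) \leq \floor{D_i(t)/x}$. The long-term statements in (iii) then follow by chaining this pathwise bound with $D_i(t) \leq S_i(t)$ from \eqref{eq:departure potential service bound} and $\sum_i D_i(t) \leq A(t)$, and invoking the renewal theorem for $S_i$ and $A$ exactly as in Propositions \ref{prop:long exact arrival} and \ref{prop:long exact}. The only step requiring genuine care is the MSR-$x$ direction of (ii): one must explicitly rule out an emulated departure triggering a message, and it is the capped-count inequality $\tilde{D}_i^{M,x} \leq x-1$ that makes this possible.
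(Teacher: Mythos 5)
Your proof is correct and follows essentially the same route as the paper, which establishes this proposition through the informal discussion immediately preceding it: the error starts at zero after each (instantaneous) message and moves by $\pm 1$ only at actual or emulated departures, the basic algorithm makes ET-$x$ coincide with DT-$x$, and the cap $\tilde{D}_i^{M,x} \leq x-1$ rules out emulation-triggered messages so that each inter-message interval contains at least $x$ actual departures, after which the long-run rates follow from the renewal theorem as in the earlier propositions. Your observation that the MSR-$x$ case yields only the inequality $M_i(t) \leq \floor{D_i(t)/x}$ (rather than equality) is accurate and in fact matches the paper's own remark that the communication "could be substantially less."
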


Finally, we turn to discuss ET-$x$ with MSR. The communication requirement in this case is the most difficult to characterize. In the unconstrained case of MSR, it could be that messages are sent because of emulated departures. In fact, Example \ref{ex1} shows that there is no deterministic limit on the number of messages that are sent per departure. Specifically, a job with a large service time requirement can trigger many messages due to emulated departures. But, as our analysis and simulations will show, this is very rarely the case. 

We begin with proving a rough upper bound on the long-run average communication rate of ET-$x$ with MSR, and defer the more refined analysis, showing a quadratic decrease of the communication rate as $x$ increases, to the next section.

\begin{proposition}\label{prop:et msr gurantees}
Under ET-$x$ with MSR, we have: 
\begin{equation}\label{eq:et msr gurantee}
        \text{ET-$x$}: \quad M_i(t)=\floor{D_i(t)/x}+\floor{\mu_i t/x}\leq D_i(t)/x+\mu_i t/x, \quad \forall t\geq 0.
\end{equation}
If, in addition, Assumptions \ref{ass2}, respectively , \ref{ass1}, hold, then:      
\begin{align}\label{eq:et msr rough 1}
    \text{ET-$x$}: \quad \frac{M(t)}{t}=\frac{\sum_{i=1}^K M_i(t)}{t}\leq \frac{1}{x}\frac{A(t)}{t} +\frac{\sum_i \mu_i}{x}\rightarrow \frac{\lambda+\sum_i \mu_i}{x} \quad \text{almost surely as } t \rightarrow \infty.
\end{align}
\begin{align}\label{eq:et msr rough 2}
    \text{ET-$x$}: \quad \frac{M_i(t)}{t}\leq \frac{1}{x}\frac{S_i(t)}{t}+\frac{ \mu_i}{x} \rightarrow \frac{2\mu_i}{x} \quad \text{almost surely as } t \rightarrow \infty.
\end{align}

\end{proposition}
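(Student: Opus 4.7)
The plan is to decompose $M_i(t)$ by the \emph{cause} of each message. Under ET-$x$ with MSR, the server runs the same emulation as the load balancer, and the approximation error evolves as $|a-e|$, where $a$ and $e$ are the numbers of actual and emulated departures, respectively, since the last message. A message fires exactly when $|a-e|$ first reaches $x$, so at the firing time either $a-e=x$ (``actual-triggered'') or $e-a=x$ (``emulated-triggered''). Accordingly I would write $M_i(t)=M_i^a(t)+M_i^e(t)$ and bound each piece separately.

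For $M_i^a(t)$, in each actual-triggered inter-message interval one has $a=e+x\ge x$, so summing the $a$'s over these (disjoint) intervals gives a quantity bounded by the total number of actual departures $D_i(t)$. Hence $xM_i^a(t)\le D_i(t)$, and because $M_i^a(t)$ is integer-valued this tightens to $M_i^a(t)\le \floor{D_i(t)/x}$. For $M_i^e(t)$ the argument is analogous inside the emulated process: in each emulated-triggered interval $e\ge x$. The key observation is that since MSR assigns the \emph{deterministic} service time $1/\mu_i$ to every emulated job, the emulated server cannot produce more than $\mu_i t$ emulated departures by time $t$, even if it were busy throughout $[0,t]$. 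Summing the $e$'s over emulated-triggered intervals therefore gives $xM_i^e(t)\le \mu_i t$, and integrality yields $M_i^e(t)\le \floor{\mu_i t/x}$. Adding the two bounds establishes \eqref{eq:et msr gurantee}.

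The long-time averages follow directly. Under Assumption \ref{ass1}, the bound $D_i(t)\le S_i(t)$ from \eqref{eq:departure potential service bound} and the renewal SLLN $S_i(t)/t\to\mu_i$ a.s.\ give $M_i(t)/t\le S_i(t)/(xt)+\mu_i/x\to 2\mu_i/x$, which is \eqref{eq:et msr rough 2}. Under Assumption \ref{ass2}, summing \eqref{eq:et msr gurantee} over $i$ and using $\sum_i D_i(t)\le A(t)+\sum_i Q_i(0)$ together with $A(t)/t\to\lambda$ a.s.\ gives \eqref{eq:et msr rough 1}.

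The main subtlety lies in the pathwise decomposition: one must verify that the per-interval counts $a$ (resp.\ $e$) over actual-triggered (resp.\ emulated-triggered) intervals really do aggregate to a quantity bounded by the global $D_i(t)$ (resp.\ the global emulated-departure count $\le \mu_i t$). Disjointness of the inter-message intervals makes this additivity immediate, so no event is double-counted. A minor bookkeeping point is how to handle the measure-zero event that an actual and an emulated departure coincide and both could push the error to $x$; any fixed tie-breaking convention preserves the inequalities above.
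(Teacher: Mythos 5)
Your proposal is correct and follows essentially the same route as the paper: both arguments partition the inter-message intervals according to whether the message was triggered by an actual or an emulated departure, observe that each such interval contains at least $x$ departures of the corresponding type, and bound the two counts by $\floor{D_i(t)/x}$ and $\floor{\mu_i t/x}$ respectively (the latter using that MSR's deterministic $1/\mu_i$ service times cap the emulated departures over $[0,t]$ at $\mu_i t$). Note only that the displayed ``$=$'' in \eqref{eq:et msr gurantee} is really a ``$\leq$'' — your proof, like the paper's, establishes exactly the upper bound, which is all that is used downstream.
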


These results establish that the long term communication rate of ET-$x$ with MSR is roughly, at most $(\lambda \wedge \sum_i \mu_i+\sum_i \mu_i)/x$. This upper bound on the total rate of communication is larger by $\frac{\sum \mu_i}{x}$ than the rate we proved for the other approaches using DT-$x$ or ET-$x$ with the basic or MSR-$x$ approximation algorithms. However, it can still be substantially lower than the baseline rate for large $x$.  

\vspace{10pt}

\begin{proof}
Equations \eqref{eq:et msr rough 1} and \eqref{eq:et msr rough 2} will follow after we prove \eqref{eq:et msr gurantee} and use the facts that $D_i(t)\leq S_i(t)$ and that $D(t)\leq A(t)$.  For a server $i$, consider the consecutive, disjoint, time intervals $[0,t^1_i],(t^1_i,t^2_i],\ldots,(t^{M_i(t)-1}_i,t^{M_i(t)}_i],(t^{M_i(t)}_i,t]$ whose union is $[0,t]$. By the definition of ET-$x$ with MSR, at the end of each of these intervals, except the last, there must have been an actual departure, or an emulated one. In the first case, the number of actual departures since the previous message must be at least $x$. Similarly, in the latter case, the number of emulated departures since the last message must be at least $x$ as well. Thus, the number of intervals that ended with an actual departure cannot exceed $\floor{D_i(t)/x}$, and the number of intervals that ended with an emulated departure cannot exceed $\floor{\mu_i t/x}$. Since the total number of intervals equals $M_i(t)$, \eqref{eq:et msr gurantee} follows.
\qed 
\end{proof}

\subsection{Main Communication-Approximation Link Results}

We are now ready to prove Theorems \ref{thm:summary 1}, \ref{thm:summary 2} and \ref{thm:summary 3}. We reiterate them for readability.

\begin{reptheorem}{thm:summary 1}
For every $x \in \{2,3,4,\ldots\}$ there exists a combination of an approximation algorithm and communication pattern under which $AQ(t)\leq x-1$ and $M(t)\leq \frac{1}{x}D(t)$, for all $t\geq 0$. Specifically, this holds for DT-$x$ or ET-$x$ with the basic or MSR-x approximation algorithms. 
\end{reptheorem}

\begin{proof}
Fix $x \in \{2,3,4,\ldots\}$. The result for ET-$x$ follows from Proposition \ref{prop:et gurantees}. Next, consider DT-$x$. If the basic or MSR-$x$ approximation algorithms are used, then by \eqref{eq:error of basic dt} and \eqref{eq:error of msrx dt}, the maximal approximation error is equal to $x-1$ as desired. By Proposition \ref{prop:dt comm guarnatee}, $M_i(t)\leq D_i(t)/x$, which concludes the proof.
\qed 
\end{proof}

\vspace{10pt}

To prove Theorems \ref{thm:summary 2} and \ref{thm:summary 3}, we will need the following results:

\begin{theorem}\label{thm: poisson}
Fix $y \in \mathbbm{N}$. Let $N(t)$ be a Poisson process with rate $\mu$. Define:

\begin{equation}\label{eq:def of tau exit time}
    \tau:=\inf\{t\ge 0 : |N(t)-\floor{\mu t}|=y\},
\end{equation}
and
\begin{equation}\label{eq:def of sigma exit time}
    \sigma:=\inf\{t\ge 0 : |N(t)-\mu t|\geq y\}.
\end{equation}

Then:
\begin{equation}\label{eq:result on avg tau exit time}
    \mathbbm{E}[\tau] \geq (y^2-y)/\mu,
\end{equation}
and
\begin{equation}\label{eq:result on avg sigma exit time}
    \mathbbm{E}[\sigma] \geq y^2/\mu.
\end{equation}
\end{theorem}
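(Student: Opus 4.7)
The plan is to base both bounds on optional stopping applied to the compensated Poisson martingale $M(t) := N(t) - \mu t$ and its associated quadratic martingale $M(t)^2 - \mu t$. Throughout I will work with the truncations $\sigma \wedge T$ and $\tau \wedge T$, which are bounded stopping times, and pass to the limit $T \to \infty$ via monotone convergence and Fatou; the necessary integrability is immediate because $|M(t)|$ is deterministically bounded by $y+1$ for $t$ up to either exit time, which forces $\mathbbm{E}[\sigma], \mathbbm{E}[\tau] < \infty$.

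The bound \eqref{eq:result on avg sigma exit time} on $\sigma$ is the direct case: at $\sigma$, $|M(\sigma)| \ge y$ by definition, so $M(\sigma)^2 \ge y^2$; combining this with the quadratic identity $\mu\, \mathbbm{E}[\sigma \wedge T] = \mathbbm{E}[M(\sigma \wedge T)^2]$ and letting $T \to \infty$ yields $\mu\, \mathbbm{E}[\sigma] \ge y^2$.

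For the bound \eqref{eq:result on avg tau exit time} on $\tau$, the subtlety is that $Y(t) := N(t) - \lfloor \mu t \rfloor$ differs from $M(t)$ by the fractional part $U(t) := \{\mu t\} \in [0,1)$, so hitting $|Y| = y$ does not yield $|M| \ge y$ pointwise. The plan is to case-split on the jump triggering $\tau$: in case $A$, a Poisson arrival carries $Y$ from $y-1$ to $y$, so $M(\tau) = y - U(\tau)$ with $U(\tau) \in (0, 1)$; in case $B$, a deterministic down-step at $\tau = k/\mu$ carries $Y$ from $-y + 1$ to $-y$, so $U(\tau) = 0$ and $M(\tau) = -y$. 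Writing $p := \mathbbm{P}(A)$ and $\widetilde U := U(\tau)\mathbf{1}_A$, the first- and second-moment optional stopping identities become
\begin{equation*}
\mathbbm{E}[\widetilde U] = y(2p-1), \qquad \mu\, \mathbbm{E}[\tau] = y^2(3 - 4p) + \mathbbm{E}[\widetilde U^2].
\end{equation*}

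The main obstacle is that the naive pointwise bound $M(\tau)^2 \ge (y-1)^2$ on case $A$ only delivers $\mathbbm{E}[\tau] \ge (y-1)^2/\mu$, which falls short of the target $y(y-1)/\mu$. To close the gap, I will use the two identities jointly. First, the trivial bound $\widetilde U \le \mathbf{1}_A$ combined with the first-moment identity forces $p \le y/(2y-1)$. Second, Jensen's inequality conditional on $A$ yields $\mathbbm{E}[\widetilde U^2] \ge \mathbbm{E}[\widetilde U]^2/p = y^2(2p-1)^2/p$. Substituting into the second-moment identity and simplifying the polynomial in $p$ collapses the bound to $\mu\, \mathbbm{E}[\tau] \ge y^2(1-p)/p$, which is decreasing in $p$ and therefore minimized under the constraint $p \le y/(2y-1)$, yielding $\mu\, \mathbbm{E}[\tau] \ge y(y-1)$, as required.
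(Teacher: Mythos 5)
Your proposal is correct and follows essentially the same route as the paper: the same two martingales $N(t)-\mu t$ and $(N(t)-\mu t)^2-\mu t$, optional stopping justified by the boundedness of the stopped first martingale and the resulting finiteness of $\mathbbm{E}[\sigma]$ and $\mathbbm{E}[\tau]$, and for $\tau$ the same case split according to whether a Poisson jump or a down-step of $\floor{\mu t}$ triggers the exit, with the first-moment identity yielding $\mathbbm{P}(A)\le y/(2y-1)$. The only difference is in the last step, and your Jensen detour is actually unnecessary: the ``naive'' pointwise bound $(y-U(\tau))^2\ge (y-1)^2$ on $A$, combined with the constraint $\mathbbm{P}(A)\le y/(2y-1)$ that you already derived, gives $\mu\,\mathbbm{E}[\tau]\ge (y-1)^2\mathbbm{P}(A)+y^2(1-\mathbbm{P}(A))=y^2-(2y-1)\mathbbm{P}(A)\ge y^2-y$, which is exactly how the paper concludes.
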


\begin{proof}
Denote by $\mathcal{F}^N$ the natural filtration of $N(t)$. Clearly, $\tau$ and $\sigma$ are stopping times with respect to $\mathcal{F}^N$. Define:
\begin{align*}
    &Y(t)=N(t)-\mu t \cr
    &Z(t)=Y^2(t)-\mu t.
\end{align*}
We will need the following Lemma:
\begin{lemma}\label{lem:ui}
We have:
\begin{enumerate}
    \item $Y(t)$ and $Z(t)$ are martingales  with respect to $\mathcal{F}^N$.
    \item $Y(t\wedge \tau)$, $Y(t\wedge \sigma)$, $Z(t\wedge \tau)$  and $Z(t\wedge \sigma)$ are uniformly integrable martingales with respect to $\mathcal{F}^N$.
\end{enumerate}
\end{lemma}

\begin{proof}
The fact that $Y(t)$ and $Z(t)$ are martingales with respect to $\mathcal{F}^N$ is well known (e.g., the example in Chapter 3, page 50 of \cite{le2016brownian}). By Corollary 3.24 in page 61 of \cite{le2016brownian}, $Y(t\wedge \tau)$, $Y(t\wedge \sigma)$, $Z(t\wedge \sigma)$ and $Z(t\wedge \tau)$ are also martingales with respect to $\mathcal{F}^N$. We are left with verifying uniform integrability. A process $X(t)$ is uniformly integrable if:
\begin{equation*}
    \lim_{a \rightarrow \infty}\sup_{t \geq 0} \mathbbm{E}[|X(t)|\mathbbm{1}_{\{|X(t)| \geq a\}}]=0.
\end{equation*}
If $X(t)$ is bounded, namely, there exists $c>0$ such that $X(t)\le c$ for all $t \ge 0$, then it is uniformly integrable:
\begin{equation*}
    0 \leq \mathbbm{E}[|X(t)|\mathbbm{1}_{\{|X(t)| \geq a\}}]\leq c\mathbbm{P}(|X(t)| \geq a)\leq c \mathbbm{P}(c \geq a)=0, \quad \forall a>c.
\end{equation*}
We thus start with proving that $Y(t\wedge \tau)$ and $Y(t\wedge \sigma)$ are bounded. We have:

\begin{equation}\label{eq:UI 1}
    |Y(t)|=|N(t)-\mu t|=|N(t)-\mu t+\floor{\mu t}-\floor{\mu t}|\leq |N(t)-\floor{\mu t}|+1.
\end{equation}
By the definition of $\tau$ in \eqref{eq:def of tau exit time}, we have:
\begin{equation}\label{eq:UI 2}
    |N(t\wedge \tau)-\floor{\mu (t\wedge \tau)}|\leq y, \quad \forall t\geq 0.
\end{equation}
Combining \eqref{eq:UI 1} and \eqref{eq:UI 2} yields:
\begin{equation*}
    |Y(t\wedge \tau)|\leq y+1.
\end{equation*}
By the definition of $\sigma$ in \eqref{eq:def of sigma exit time}, we have:
\begin{equation}\label{eq:UI 3}
    |N(t\wedge \sigma)-\mu (t\wedge \sigma)|\leq y, \quad \forall 0\leq t < \sigma.
\end{equation}
For $t\geq \sigma$, notice that at time $\sigma$, either $\mu ( \sigma)-N(\sigma)=y$, or $N(\sigma)-\mu ( \sigma)\geq y$. In the latter case, we must have $N(\sigma)-\mu ( \sigma^-)< y$ and $N(\sigma)-N(\sigma^-)=1$. Thus $|N(\sigma)-\mu ( \sigma)|\leq  y+1$. Combined with \eqref{eq:UI 3}, we obtain:

\begin{equation}\label{eq:UI 4}
    |N(t\wedge \sigma)-\mu (t\wedge \sigma)|\leq y+1, \quad \forall t \geq 0.
\end{equation}
Combining \eqref{eq:UI 1} and \eqref{eq:UI 4} yields:
\begin{equation}\label{eq:UI 5}
    |Y(t\wedge \sigma)|\leq y+1.
\end{equation}
Therefore, $Y(t\wedge \tau)$ and $Y(t\wedge \sigma)$ are uniformly integrable.

We proceed with proving that $Z(t\wedge \sigma)$ is uniformly integrable. The proof that $Z(t\wedge \tau)$ is uniformly integrable follows the same arguments and is hence omitted. Since $Z(t\wedge \sigma)$ is a martingale, we have:

\begin{equation*}
    0=\mathbbm{E}[Z(0\wedge \sigma)]=\mathbbm{E}[Z(t\wedge \sigma)]=\mathbbm{E}[Y^2(t\wedge \sigma)-\mu (t\wedge \sigma)]\leq (y+1)^2-\mu\mathbbm{E}[t\wedge \sigma],
\end{equation*}
where the last inequality is due to \eqref{eq:UI 5}, implying that
\begin{equation*}
    \mathbbm{E}[t\wedge \sigma]\leq (y+1)^2/\mu.
\end{equation*}
Since $0\leq t\wedge \sigma \uparrow \sigma$ almost surely as $t \rightarrow \infty$, by the Monotone Convergence Theorem (Theorem 1.6.6. in \cite{durrett2019probability}), we have $\mathbbm{E}[t\wedge \sigma] \uparrow \mathbbm{E}[ \sigma]$ as $t \rightarrow \infty$, yielding
\begin{equation}\label{eq:ui sigma avg is bounded}
    \mathbbm{E}[ \sigma]\leq (y+1)^2/\mu.
\end{equation}
Now,
\begin{equation*}
    |Z(t\wedge \sigma)|=|Y^2(t\wedge \sigma)-\mu (t\wedge \sigma)|\leq (y+1)^2+\mu \sigma,
\end{equation*}
and therefore:
\begin{align*}
    0\leq \mathbbm{E}[|Z(t\wedge \sigma)|\mathbbm{1}_{\{|Z(t\wedge \sigma)| \geq a\}}]\leq \mathbbm{E}[((y+1)^2+\mu \sigma)\mathbbm{1}_{\{(y+1)^2+\mu \sigma \geq a\}}]\xrightarrow{a \rightarrow \infty} 0,
\end{align*}

where the convergence is due to \eqref{eq:ui sigma avg is bounded}, which implies that $\mathbbm{E}[(y+1)^2+\mu \sigma]< \infty$ (see Exercise 5.6.5. and the discussion on page 396 in \cite{grimmett2020probability}).
\qed 
\end{proof}

\vspace{10pt}

We now continue with proving \eqref{eq:result on avg tau exit time} and \eqref{eq:result on avg sigma exit time} of Theorem \ref{thm: poisson}. By Lemma \ref{lem:ui}, $Z(t\wedge \sigma)$ is a uniformly integrable martingale. Thus we can use the Optional Stopping Theorem (Theorem 3.22 in \cite{le2016brownian}) and obtain:
\begin{equation*}
    0=\mathbbm{E}[Z(0\wedge \sigma)]=\mathbbm{E}[Z( \sigma)]=\mathbbm{E}[Y^2(\sigma)-\mu \sigma]\geq y^2-\mu\mathbbm{E}[\sigma],
\end{equation*}
which proves \eqref{eq:result on avg sigma exit time}. To prove \eqref{eq:result on avg tau exit time}, by similar arguments that led to \eqref{eq:ui sigma avg is bounded}, we can obtain that $\mathbbm{E}[\tau]<\infty$ and thus $\tau< \infty$ almost surely. Define the complementing events:
\begin{align*}
    &A:=\{N(\tau)-\floor{\mu \tau}=y\}\cr
    &A^c:=\{N(\tau)-\floor{\mu \tau}=-y\}.
\end{align*}
Under the event $A$, by the definition of $\tau$ in \eqref{eq:def of tau exit time}, the process $N(t)$ must jump up by 1 at time $\tau$ and $\floor{\mu t}$ does not. Under the event $A^c$, the process $\floor{\mu t}$ must jump up by 1 at time $\tau$ and $N(t)$ does not. In particular, under $A^c$, we have $\floor{\mu \tau}=\mu \tau$. By using the Optional Stopping Theorem for $Y(t\wedge \tau)$ we obtain:
\begin{align*}
    0&=\mathbbm{E}[Y(0\wedge \tau)]=\mathbbm{E}[Y( \tau)]=\mathbbm{E}[N(\tau)-\mu \tau]\cr
    &= \mathbbm{E}[N(\tau)-\mu \tau | A]\mathbbm{P}(A)+\mathbbm{E}[N(\tau)-\mu \tau | A^c]\mathbbm{P}(A^c)\cr
    &=(y-\mathbbm{E}[\mu \tau -\floor{\mu \tau }|A])\mathbbm{P}(A)-y(1-\mathbbm{P}(A)).
\end{align*}
After rearranging and using the fact that  $0 \leq \mathbbm{E}[\mu \tau -\floor{\mu \tau }|A] \leq 1$, we obtain:
\begin{equation}\label{eq:bound on alpha}
    \frac{1}{2} \leq \mathbbm{P}(A) \leq \frac{y}{2y-1}.
\end{equation}
Now,  using the Optional Stopping Theorem for $Z(t\wedge \tau)$ we obtain:
\begin{align*}
    0=\mathbbm{E}[Z(0\wedge \tau)]=\mathbbm{E}[Z( \tau)]=\mathbbm{E}[(N(\tau)-\mu \tau)^2-\mu \tau],
\end{align*}
Implying that 
\begin{align*}
    \mu \mathbbm{E}[\tau]=\mathbbm{E}[(N(\tau)-\mu \tau)^2].
\end{align*}
Finally, we have:
\begin{align*}
    \mu \mathbbm{E}[\tau]=\mathbbm{E}[(N(\tau)-\mu \tau)^2]&=\mathbbm{E}[(N(\tau)-\mu \tau)^2 | A]\mathbbm{P}(A)+\mathbbm{E}[(N(\tau)-\mu \tau)^2 | A^c]\mathbbm{P}(A^c)\cr
    &\geq (y-1)^2\mathbbm{P}(A)+y^2(1-\mathbbm{P}(A))\cr
    &=y^2-(2y-1)\mathbbm{P}(A)\geq y^2-y,
\end{align*}

where the last inequality is due to \eqref{eq:bound on alpha}, which concludes the proof of \eqref{eq:result on avg tau exit time}.
\qed \\
\end{proof}

We are now ready to prove Theorems \ref{thm:summary 2} and \ref{thm:summary 3}.

\begin{reptheorem}{thm:summary 2}
Consider a time interval $I_i$ between consecutive messages sent by server $i$ and denote its duration by $\tau_i$. Assume a FIFO service discipline, and that the service time requirements of jobs processed by server $i$ are i.i.d.~Exp($\mu_i$) random variables. Then, for every $x \in \{3,4,\ldots\}$, there exists a combination of an approximation algorithm and communication pattern under which $AQ(t)\leq x-1$ for all $t \in I_i$ and  $\mathbbm{E}[\tau_i]\geq (x/2-1)^2/\mu_i$. Specifically, this holds for ET-$x$ with MSR.
\end{reptheorem}

\begin{proof}
Suppose that at time $t_0$ a message was sent by server $i$ to the load balancer, updating it on its queue length $Q_i(t_0)$. Denote by $\mathcal{F}_t$ the filtration which contains all of the information on what occurred in the system up to and including time $t$. In particular, if the message at time $t_0$ was triggered by an emulated departure, then $\mathcal{F}_{t_0}$ includes the age of the current actual job server $i$ is processing at time $t_0$. Conditioned on $\mathcal{F}_{t_0}$, and regardless of whether the message at time $t$ was triggered by an actual or emulated departure, by the memory-less property of the exponential distribution, the distribution of the actual queue length after time $t_0$ is determined by $Q_i(t_0)$, the arrivals after time $t_0$, and a "fresh" Poisson potential service process. 

Thus, for simplicity and with a slight abuse of notation, we set $t_0=0$ and write $A_i(t)$ for the number of arrivals to server $i$, $B_i(t)$ and $I_i(t)$ for the busy and idle times and $S_i(\mu_iB_i(t))$ for the number of departures during $(t_0,t_0+t]$, where $S_i(t)$ is a Poisson($1$) potential service process. We also write $Q_i(t)$ and $\tilde{Q}_i(t)$ instead of $Q_i(t_0+t)$ and $\tilde{Q}_i(t_0+t)$ and write $\tilde{B}_i(t)$ and $\tilde{I}_i(t)$ for the busy and idle times of the emulated queue. 

Using this simpler notation, the actual and emulated queue lengths are given by:
\begin{align*}
    Q_i(t)&=Q_i(0)+A_i(t)-S_i(\mu_iB_i(t))=Q_i(0)+A_i(t)-S_i(\mu_iB_i(t))+\mu_iB_i(t)-\mu_iB_i(t)\cr
    &=Q_i(0)+A_i(t)-S_i(\mu_iB_i(t))+\mu_iB_i(t)-\mu_it+\mu_iI_i(t)=X_i(t)+\mu_iI_i(t),
\end{align*}

where 
\begin{equation}\label{eq: def of x}
    X_i(t):=Q_i(0)+A_i(t)-S_i(\mu_iB_i(t))+\mu_iB_i(t)-\mu_it,
\end{equation}
and
\begin{align*}
    \tilde{Q}_i(t)&=Q_i(0)+A_i(t)-\floor{\mu_i\tilde{B}_i(t)}=Q_i(0)+A_i(t)-\floor{\mu_i\tilde{B}_i(t)}+\mu_i\tilde{B}_i(t)-\mu_i\tilde{B}_i(t)\cr
    &=Q_i(0)+A_i(t)-\floor{\mu_i\tilde{B}_i(t)}+\mu_i\tilde{B}_i(t)-\mu_it+\mu_i\tilde{I}_i(t)=\tilde{X}_i(t)+\mu_i\tilde{I}_i(t),
\end{align*}
where 
\begin{equation}\label{eq: def of tilde x}
    \tilde{X}_i(t):=Q_i(0)+A_i(t)-\floor{\mu_i\tilde{B}_i(t)}+\mu_i\tilde{B}_i(t)-\mu_it.
\end{equation}

We now leverage the Lipschitz continuity property of the one dimensional Skorohod reflection mapping to obtain an upper bound on the approximation error $|Q_i(t)-\tilde{Q}_i(t)|$.

For readability, we first present the one dimensional Skorohod problem.  

\begin{definition}[One dimensional Skorohod problem.] Let $x \in {\mathcal{D}}_{0}$. A pair $(z,y)\in {\mathcal{D}}_{0} \times {\mathcal{D}}_{0}$ is a solution of the one dimensional Skorohod problem for $x$ if the following hold:
\begin{enumerate}
    \item $z(t)=x(t)+y(t)$,  $t \geq 0$,
    \item $z(t) \geq 0$, $t \geq 0$,
    \item $y$ satisfies:
    \begin{enumerate}
        \item $y(0)=0$,
        \item $y$ is non decreasing,
        \item $\int_o^\infty z(t)dy(t)=0$ ($y$ does not increase in an interval $[s,t]$ whenever $z$ is positive in the interval).
    \end{enumerate}
\end{enumerate}

\end{definition}

By Theorem 6.1 in \cite{chen2001fundamentals}, given $x\in {\mathcal{D}}_{0}$, there exists a unique solution $(z,y)\in {\mathcal{D}}_{0} \times {\mathcal{D}}_{0}$ for the Skorohod problem. Moreover, the one dimensional Skorohod reflection mapping is Lipschitz continuous with respect to the uniform norm over finite time intervals. Sepcifically, if $(\tilde{z},\tilde{y})\in {\mathcal{D}}_{0} \times {\mathcal{D}}_{0}$ is the unique solution of the Skorohod problem for some $\tilde{x}\in {\mathcal{D}}_{0} $, then:

\begin{equation}\label{eq:lip con}
    \sup_{s \in [0,t]}|z(s)-\tilde{z}(s)|\leq 2\sup_{s \in [0,t]}|x(s)-\tilde{x}(s)|, \quad \forall t \geq 0.
\end{equation}

It is straightforward to check that the pairs $(Q_i(t),\mu_iI_i(t))$ and $(\tilde{Q}_i(t),\mu_i\tilde{I}_i(t))$ are the unique solutions for the Skorohod problem for $X_i(t)$ and $\tilde{X}_i(t)$, respectively. Therefore, using \eqref{eq:lip con}, we have:

\begin{equation}\label{eq:skorohod q bound}
    |Q_i(t)-\tilde{Q}_i(t)|\leq \sup_{s \in [0,t]}|Q_i(s)-\tilde{Q}_i(s)|\leq 2\sup_{s \in [0,t]}|X_i(s)-\tilde{X}_i(s)|, \quad \forall t \geq 0.
\end{equation}

Now, by the definitions of $X_i(t)$ and $\tilde{X}_i(t)$ in \eqref{eq: def of x} and \eqref{eq: def of tilde x}, we have: 
\begin{equation*}
    |X_i(s)-\tilde{X}_i(s)|\leq |S_i(\mu_iB_i(s))-\mu_iB_i(s)|+1, \quad \forall s\geq 0.
\end{equation*}
Therefore
\begin{align}\label{eq:skorohod x bound}
    2\sup_{s \in [0,t]}|X_i(s)-\tilde{X}_i(s)|&\leq 2\sup_{s \in [0,t]}|S_i(\mu_iB_i(s))-\mu_iB_i(s)|+2\cr
    & \leq 2\sup_{s \in [0,t]}|S_i(\mu_is)-\mu_is|+2, \quad \forall t\geq 0,
\end{align}
where the last inequality is due to the fact that $B_i(s) \leq s$. Combining \eqref{eq:skorohod q bound} and \eqref{eq:skorohod x bound} we obtain:
\begin{equation}\label{eq:queue length bound skorohod}
    |Q_i(t)-\tilde{Q}_i(t)|\leq 2\sup_{s \in [0,t]}|S_i(\mu_is)-\mu_is|+2, \quad \forall t\geq 0.
\end{equation}

Fix $x \in \{3,4,\ldots\}$. Define:

$$\tau_i:=\inf\{t \geq 0: |Q_i(t)-\tilde{Q}_i(t)|=x\},$$
and
$$\sigma_i:=\inf\{t \geq 0: |S_i(\mu_it)-\mu_it|\geq x/2-1\}.$$

By \eqref{eq:queue length bound skorohod}, we have:
\begin{equation*}
    x=|Q_i(\tau_i)-\tilde{Q}_i(\tau_i)|\leq 2\sup_{s \in [0,\tau_i]}|S_i(\mu_is)-\mu_is|+2,
\end{equation*}
namely,
\begin{equation*}
    \sup_{s \in [0,\tau_i]}|S_i(\mu_is)-\mu_is|\geq x/2-1,
\end{equation*}
which implies that $\tau_i \geq \sigma_i$ almost surely. Therefore:
$$\mathbbm{E}[\tau_i]\geq \mathbbm{E}[\sigma_i] \geq (x/2-1)^2/\mu_i,$$
where the last inequality is due to \eqref{eq:result on avg sigma exit time} in Theorem \ref{thm: poisson}, which concludes the proof.
\qed 
\end{proof}

\vspace{10pt}

\begin{reptheorem}{thm:summary 3}
In the setting and under the assumptions of Theorem \ref{thm:summary 2}, add the further assumption 
of an infinite backlog at the beginning of the interval $I_i$. Then, for every $x \in \{2,3,4,\ldots\}$, there exists a combination of an approximation algorithm and communication pattern under which $AQ(t)\leq x-1$ for all $t \in I_i$ and  $\mathbbm{E}[\tau_i]\geq x(x-1)/\mu_i$. Specifically, this holds for ET-$x$ with MSR.
\end{reptheorem}

\begin{proof}
The assumption of an infinite backlog at the beginning of the interval enables our analysis, since it implies that the idling times of the actual and emulated queues until the next message are zero. While this is an artificial assumption, it does shed light on the rate of communication we can expect when servers do not idle often. 

Considering that the idling times are zero, and adopting the simplifying notation and arguments used in the proof of Theorem \ref{thm:summary 2}, we have:
$$|Q_i(t)-\tilde{Q}_i(t)|=|S_i(\mu_it)-\floor{\mu_i t}|, \quad \forall 0 \leq t \leq \tau_i,$$
where time zero is treated as the beginning of the interval, $\tau_i$ its duration, and $S_i$ is a rate 1 Poisson process. Thus, by \eqref{eq:result on avg tau exit time} in Theorem \ref{thm: poisson}, we have:
$$\mathbbm{E}[\tau_i]\geq (x^2-x)/\mu_i,$$
which concludes the proof.
\qed 
\end{proof}
\section{The Approximation-Performance Link}\label{sec:approximation-performance link}
This section is devoted to studying the link between achieving reasonable approximations and the resulting performance. We should emphasize that while an asymptotic analysis will be used in this section to provide sharp theoretical insights, the  focus of our paper is to develop practical approximation and communication schemes that work well in non-asymptotic systems. In other words, the asymptotic limits should be thought of as providing additional sources of insight and assurance for the algorithms we propose, rather than as an accurate portrayal of the real environment used to drive algorithmic design.

We begin by describing the diffusion scaled system model followed by the main results and their proofs.

\subsection{Diffusion Scale Model}\label{subsec:sys model 2}
 
A sequence of systems indexed by $n \in \N$ is defined on a probability space $(\Om,\calF,\PP)$
as follows.
There is a fixed number $K \in \N$ of servers, labeled by $[K]:=\{1,\ldots,K\}$,
each with an infinite size buffer in which a queue can form. Each server is non-idling and offers service on a first-come-first-served basis. 
Jobs arrive sequentially to a single load balancer. The $k$th job to arrive (after time zero)
is referred to as {\it job $k$}. For simplicity, we assume that the system starts empty. 

When a job arrives to the load balancer, it must immediately route it to one of the servers. 
The job arrival processes is modeled by a modulated renewal process of rate $\la^n(\cdot)$,
where $\la^n$ is a deterministic, Borel measurable, locally integrable function $\R_+\to\R_+$.
To this end, a rate-1 renewal process $A$ is given, and
the arrival counting process $A^n$ is defined via the relation
\begin{equation} \label{eq:arrival}
A^n(t)=A\Big(\int_0^t\la^n(s)ds\Big), \quad t\ge0.
\end{equation}

For $i\in\ [K]$, let $\{T_i(l) : l \in \N\}$
be a sequence of strictly positive \iid RVs with mean $1$ and variance
$0<\sigma^2<\iy$.  Let constants $\bar{\mu}_1,\ldots,\bar{\mu}_K$ be given. Each server
works at a constant rate $\mu^n_i:=\bar{\mu}_in$, such that  it takes
\begin{equation*}
    T^n_i(k):= T_i(k)/\mu^n_i
\end{equation*} 
units of time for server $i$ to process the $k$th job
to arrive to it. Let $\{S_i\}$ be independent rate-1 renewal processes with inter-event times given by $\{T_i(k)\}$, namely 
\begin{equation} \label{eq:nominal pot serv proc}
S_i(t)=\sup \Big\{ l\geq 0: \sum_{k=1}^l T_i(k)\leq t \Big \},
\end{equation}
and define the potential service processes by
\begin{equation}\label{eq: pot serv proc}
    S_i^n(t)=S_i(\mu_i^n t),
\end{equation}
so that
\begin{equation*}
    S_i^n(t)=S_i(\mu_i^n t)=\sup \Big\{ l\geq 0: \sum_{k=1}^l T_i(k)\leq \mu_i^n t \Big \}=\sup \Big\{ l\geq 0: \sum_{k=1}^l T_i^n(k)\leq t \Big \}.
\end{equation*}
Namely,
$S^n_i(t)$ is the number of job departures from queue $i$
by the time the corresponding server has been busy for $t$ units of time. The $K+1$ processes $\{T_i\}$ and $A$ are assumed to be mutually independent.

Denote by $I^n_i(t)$ the cumulative idle time of server $i$ at time $t$. Next, $A^n_i$ and $D^n_i$ are counting processes for arrivals into buffer $i$,
and departures from buffer $i$, respectively. Let $Q^n_i(t)$ denote the queue length of the $i$th queue in the $n$th
system at time $t$ (this includes the job being processed at that time, if there is one), and denote $Q^n=(Q^n_1,\ldots,Q^n_N)$.
The relations between the processes $A^n_i$, $S^n_i$, $D^n_i$, $I^n_i$ and $Q^n_i$ are expressed by the following equations:
\begin{align}
&D^n_i(t)=S^n_i(t-I^n_i(t)) \label{eq:D vs S}\\
&Q^n_i(t)=A^n_i(t)-D^n_i(t),\label{eq: balance}
\end{align}
and the non-idling property 
\begin{equation*}
  \label{05}
  \int_{[0,\iy)}Q^n_i(t)dI^n_i(t)=0.
\end{equation*}

At each point in time, each server is allowed to communicate with the load balancer in the form of sending it a message. The content of a message sent by server $i$ at time $t$ is just its current queue length $Q^n_i(t)$. The \textit{communication pattern} dictates at which times the servers send messages to the load balancer. The load balancer stores in its memory an array of queue length approximations. Denote by $\tilde{Q}^n_{i}(t)$ the approximation of the queue in buffer $i$ at time $t$ in the $n$th system. Write $\tilde{Q}^n=(\tilde{Q}^n_{1},\ldots , \tilde{Q}^n_{K})$ for the queue length approximation array. The load balancer uses an \textit{approximation algorithm} to update the approximation array based on the information that is available to it (queue lengths from previous messages, job arrival and assignment decisions). We assume that the system starts empty and that the initial approximations are all zero, i.e., $Q^n_i(0)=\tilde{Q}^n_i(0)=0$, for all $i\in[K]$.

The load balancer uses the JSAQ load balancing algorithm, namely, when job $k$ arrives to the load balancer, it is routed to the buffer $j_k^n \in [K]$ with the smallest queue length approximation at the moment of its arrival. The tie breaking rule is immaterial to our results, but for concreteness, we assume that
ties are broken by
prioritizing buffer $i$ over buffer $j$ whenever $1\le i<j\le K$.
Thus, if $\tau_k^n$ denotes the time of arrival of the $k$th job in the $n$th system, then
the job is routed to the buffer
\begin{equation*}
j^n_k=\min\{i\in [K]:\tilde{Q}^n_{i}(\tau_k^n-)\le \tilde{Q}^n_{j}(\tau_k^n-)\text{ for all } j \in [K]\},
\end{equation*}
and the arrival processes $A^n_i$ satisfy
\begin{equation*}
A^n_i(t)=\sum_{k=1}^{A^n(t)}\mathbbm{1}_{\{j^n_k=i\}}.
\end{equation*}

\subsection{Main Approximation-Performance Link Results} \label{subsec:main results}

In this section we state and discuss the main results on the approximation-performance link. We first define what we refer to as admissible approximations followed by the definition of SDDP. 
Recall our notation $\R_+=[0,\iy)$, and for $f:\R_+\to\R^k$, $\|f\|_T=\sup_{t\in[0,T]}\|f(t)\|$.

\begin{definition}[Admissible Approximations]
We say that the approximations are admissible if:
\begin{equation}\label{property:aoa}
    \frac{1}{n^{1/2}}\max_{ i \in[K]}\left\|\tilde{Q}^n_{i}-Q^n_{i}\right\|_T \xrightarrow{n \rightarrow \infty}0 \text{ in probability.}
\end{equation}

\end{definition}

In Section \ref{sec: connecting comm to per}, we will show that our proposed approximation algorithms with suitably chosen parameters are admissible.

\begin{definition}[Sub-Diffusivity of the Deviation Process (SDDP)]
We say that the queue lengths undergo SDDP if:
\begin{equation}\label{property:ssc}
    \frac{1}{n^{1/2}}\max_{ i,j \in[K]}\left\|Q^n_{i}-Q^n_{j}\right\|_T \xrightarrow{n \rightarrow \infty}0 \text{ in probability.}
\end{equation}

\end{definition}

\subsubsection{Maximal Approximation Error and SDDP}

In what follows, we provide a sufficient condition on the maximal approximation error, such that if it holds, the queues undergo SDDP. Essentially, SDDP holds if the approximations are admissible and the input rate is large enough to overcome the service rate differences, thus pushing the queue lengths towards equalization. To state our condition, we look at the arrival and service rates during a time interval $[0,T]$. To this end,
let scaled versions of $\la^n$ and $\mu^n_i$ be given by
\begin{equation}\label{eq:scaled_rates}
\bar\la^n(t)=n^{-1}\la^n(t),
\qquad
\bar\mu_i=n^{-1}\mu^n_i,\qquad t\ge0,
\end{equation}
and denote
\begin{align*}
  \bar\la_{min}(T)=\inf_n\inf_{t\in[0,T]}\bar\la^n(t),
\qquad
  \bar\la_{max}(T)=\sup_n\sup_{t\in[0,T]}\bar\la^n(t),
  \qquad
  \bar\mu_{min}=\min_i\bar\mu_i.
\end{align*}

We now state the SDDP result.

\begin{theorem}[SDDP]\label{thm:SDDP}
Fix $T>0$. Assume that: \\
(i) $\bar\la_{max}(T)<\iy$. \\
(ii) $\bar\la_{min}(T)>\sum_{i\in[K]}\bar\mu_i-K\bar\mu_{min}.$\\
(iii) the approximations are admissible, namely \eqref{property:aoa} holds: 
$$\frac{1}{n^{1/2}}\max_{ i \in[K]}\left\|\tilde{Q}^n_{i}-Q^n_{i}\right\|_T \xrightarrow{n \rightarrow \infty} 0 \text{ in probability.}$$
Then the queue lengths undergo SDDP and \eqref{property:ssc} holds: 
$$ \frac{1}{n^{1/2}}\max_{ i,j \in[K]}\left\|Q^n_{i}-Q^n_{j}\right\|_T \xrightarrow{n \rightarrow \infty} 0 \text{ in probability.} $$
\end{theorem}

\begin{remark}
At a high level, the asymptotic regime used in the above theorem can be understood as one where the amount of resources per server and total demand tend to infinity, while the number of servers stays fixed. This is in part based on the observation that for many ``typical'' applications such as web servers for a medium-sized e-commerce site, the number of servers needed tends to be moderate. An illustrative calculation is as follows: A web server with an 8-core CPU \citep{servermania_small_business_server} can handle up to about $250 \times 8 =2,000$ concurrent requests \citep{Britten2022}. According to a 2023 Hubspot report, more than 50\% of websites receive less than 100,000 visitors per month \citep{hubspot_website}. The highest percentile of web traffic described in this report is 2 million visits per month for the top  1\% of the business-to-consumer (B2C) sites. This averages to about 66,000 visits per day. Using this number as a crude peak-traffic estimate by assuming that all daily visitors occur during a short time window, we see that the number of servers needed to serve such concurrent traffic is about $66,000 / 2,000 = 33$ servers.

\end{remark}

\begin{remark}
Note that for a homogeneous system, i.e., all server rates are equal, condition (ii) corresponds to $\bar\la_{min}(T)>0$. Another interesting special case is where the arrival rate does not vary with time and a critical load condition is assumed. Thus the system is in the classic heavy traffic setting. Notice that condition (ii) holds for this case. In addition, it is well known \citep{atar2019subdiffusive} that in this case the (scaled) queue lengths converge weakly to the same stochastic process, a phenomenon called State-Space-Collapse. 
\end{remark}


\subsubsection{SDDP and Asymptotic Optimality}

Our next set of results relates the performance of any system which undergoes SDDP, to a single server queue with the combined processing rate of all of the servers. For ease of exposition, we introduce only the necessary notation and details to state our results, and defer the rest to the proofs section.

For $i \in [K]$, denote by $W^n_i(t)$ the \textit{nominal} workload at server $i$ at time $t$. This is defined as the time it would take a server working at a rate equal to 1 to complete all of the jobs that are currently in queue $i$ (including the one in service if there is any). Thus, the quantity $\sum_iW_i^n(t)$ equals the total nominal workload at time $t$. Now, consider a single server queue with processing rate $\sum_i \mu^n_i$ that is coupled to the multi-server system such that the arrival process and the service time requirements of jobs are identical. Denote the nominal workload in the single server system at time $t$ by ${\mathcal{W}}^n(t)$.

First, we present a general result which states that the nominal workload of \textit{any} $K$-server system must be at least that of the single server system.
\begin{theorem}[General Lower Bound]\label{thm:workload lower bound}
For any $K$-server load balancing system we have:

\begin{equation*}
    \sum_{i=1}^K W_i^n(t) \geq {\mathcal{W}}^n(t), \qquad t \geq 0.
\end{equation*}
\end{theorem}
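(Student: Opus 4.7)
The plan is to realize the total multi-server workload as a supersolution of the one-dimensional Skorohod problem governing the single-server workload, and then invoke minimality of the Skorohod reflection term to deduce the ordering. The core observation is that, in any work-conserving $K$-server system, the instantaneous rate at which nominal work is drained is at most $\mu_\Sigma^n := \sum_{i=1}^K \mu_i^n$, which equals the drain rate of the pooled single server.

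First I would set up a shared-input decomposition. Let $V^n(t)$ denote the cumulative nominal (rate-$1$) service requirement arriving by time $t$; by the coupling in the statement, $V^n$ is identical in both systems. Writing $B_i^n, I_i^n$ for the busy and idle time processes of server $i$ in the multi-server system and $B^n, I^n$ for the corresponding quantities in the single-server system, the workload balance equations read
\begin{equation*}
    W^n(t) := \sum_{i=1}^K W_i^n(t) = V^n(t) - \sum_{i=1}^K \mu_i^n B_i^n(t), \qquad \mathcal{W}^n(t) = V^n(t) - \mu_\Sigma^n B^n(t),
\end{equation*}
which, via $B_i^n(t) + I_i^n(t) = t$ and $B^n(t) + I^n(t) = t$, can be rewritten as
\begin{equation*}
    W^n(t) = X^n(t) + \sum_{i=1}^K \mu_i^n I_i^n(t), \qquad \mathcal{W}^n(t) = X^n(t) + \mu_\Sigma^n I^n(t),
\end{equation*}
where $X^n(t) := V^n(t) - \mu_\Sigma^n t$ is a common driving process.

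The second and final step is Skorohod minimality. The pair $(\mathcal{W}^n,\mu_\Sigma^n I^n)$ solves the one-dimensional Skorohod problem for $X^n$, and by the Reich formula $\mu_\Sigma^n I^n(t) = \sup_{0\le s\le t}(-X^n(s))^+$ is the minimal nondecreasing, zero-at-zero process that keeps $X^n + (\cdot)$ nonnegative. On the multi-server side, $\sum_i \mu_i^n I_i^n$ is also nondecreasing and zero at zero, and the constraint $W^n \geq 0$ forces $\sum_i \mu_i^n I_i^n(s) \geq -X^n(s)$ for every $s$. Minimality then yields $\sum_i \mu_i^n I_i^n(t) \geq \mu_\Sigma^n I^n(t)$ for all $t$, which immediately gives $W^n(t) \geq \mathcal{W}^n(t)$ as claimed. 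The argument is oblivious to the specific routing rule since it uses only work conservation and the pooled-rate bound; the one technical subtlety is in formalizing the coupling, which I would handle by defining the nominal work of each arriving job via the $T_{i}(k)$ associated with the server $i$ it is routed to in the multi-server system, and feeding the same nominal requirements in the same arrival order into the single-server queue.
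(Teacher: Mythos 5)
Your proof is correct, but it takes a genuinely different route from the paper's. The paper argues by induction over arrival epochs: assuming $\sum_i W_i^n \ge \mathcal{W}^n$ just before the $k$th arrival, both workloads jump by the same amount $\mathcal{T}(k)$ at the arrival, and between arrivals the pooled single server drains at rate exactly $\mu^n$ while the $K$-server system drains at rate at most $\mu^n = \sum_i \mu_i^n$, so the ordering is preserved pathwise on each inter-arrival interval. Your argument instead writes both workloads over the common netput $X^n(t)=V^n(t)-\mu^n t$, identifies $(\mathcal{W}^n,\mu^n\mathcal{I}^n)$ as the solution of the one-dimensional Skorohod problem for $X^n$ (the complementarity condition holds by work conservation of the single server), and invokes minimality of the reflection term to get $\sum_i\mu_i^n I_i^n(t)\ge\mu^n\mathcal{I}^n(t)$, hence $\sum_i W_i^n(t)-\mathcal{W}^n(t)=\sum_i\mu_i^n I_i^n(t)-\mu^n\mathcal{I}^n(t)\ge 0$. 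The paper's induction is more elementary and self-contained; your approach requires the Reich representation and minimality of the reflection map (which the paper already imports elsewhere, so this is no real extra cost) and in exchange yields the sharper structural identity that the workload gap equals the excess weighted idleness of the $K$-server system --- the very quantity controlled later in the asymptotic-optimality proof. Your handling of the coupling (feeding the same nominal requirements, in arrival order, to both systems so that $\sum_i W^n_{A,i}=\mathcal{W}^n_{\mathcal{A}}$) matches the paper's construction, and the observation that the bound is routing-oblivious is likewise consistent with the theorem's generality.
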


We now present our main asymptotic optimality result\rev{s}.

\begin{theorem}[Asymptotic Optimality: Nominal Workload]\label{thm:as optimal}
Fix $T>0$. Assume that the service time requirements of all arriving jobs are uniformly upper bounded by some constant. If the queue lengths undergo SDDP, then:
\begin{equation*}
       \frac{1}{n^{1/2}}\left\|\sum_{i=1}^K W^n_i-{\mathcal{W}}^n\right\|_T\xrightarrow{n \rightarrow \infty}0 \text{ in probability.}
\end{equation*}
\end{theorem}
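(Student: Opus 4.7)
}
Set $W^n(t):=\sum_{i=1}^K W_i^n(t)$ and $G^n(t):=W^n(t)-{\mathcal{W}}^n(t)$. Theorem~\ref{thm:workload lower bound} already gives $G^n(t)\geq 0$, so it suffices to produce a matching upper bound at order $o(n^{1/2})$ uniformly on $[0,T]$. The plan is to combine a monotonicity-style structural argument for $G^n$ with the SSC hypothesis via the uniform bound on service times, which together imply that $G^n$ can only accumulate when it is already of order $o(n^{1/2})$.

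First I would analyze the pathwise evolution of $G^n$. By the coupling, each arrival adds the same amount of nominal work to both systems, so $G^n$ is continuous across arrival epochs. Between arrivals, nominal workload in server $i$ decreases at rate $\mu_i^n$ whenever $Q_i^n>0$ (and at rate $0$ otherwise), so $W^n$ decreases at total rate $r(s):=\sum_{i:\,Q_i^n(s)>0}\mu_i^n$ on $\{W^n>0\}$; similarly ${\mathcal{W}}^n$ decreases at rate $R:=\sum_i\mu_i^n$ on $\{{\mathcal{W}}^n>0\}$. Call $s$ a \emph{busy instant} if $Q_i^n(s)>0$ for every $i$, and an \emph{idle instant} otherwise. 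At busy instants $r(s)=R$ and, using the fact that $W^n\geq{\mathcal{W}}^n$ forces $\mathbbm{1}_{\{{\mathcal{W}}^n>0\}}\leq \mathbbm{1}_{\{W^n>0\}}$, a direct case check gives
\begin{equation*}
\dot G^n(s)\;=\;-R\,\mathbbm{1}_{\{W^n(s)>0\}}+R\,\mathbbm{1}_{\{{\mathcal{W}}^n(s)>0\}}\;\leq\;0,
\end{equation*}
so $G^n$ is non-increasing on any interval of busy instants. Hence $G^n$ can grow only on the (possibly empty) random set of idle instants.

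Next I would exploit SSC together with bounded service times to bound $G^n$ at idle instants. The uniform upper bound $c$ on service time requirements yields $W_i^n(s)\leq c\,Q_i^n(s)$ for every $i,s$. Fix $\varepsilon>0$; by \eqref{property:ssc} the event $\mathcal{E}_n:=\{\max_{i,j}\|Q_i^n-Q_j^n\|_T\leq \varepsilon n^{1/2}\}$ has probability tending to $1$. On $\mathcal{E}_n$, at any idle instant $s$ some $Q_i^n(s)=0$, so $\max_j Q_j^n(s)\leq \varepsilon n^{1/2}$ and therefore
\begin{equation*}
0\;\leq\;G^n(s)\;\leq\;W^n(s)\;\leq\;c\sum_{j=1}^K Q_j^n(s)\;\leq\;cK\varepsilon\,n^{1/2}.
\end{equation*}
Combining with the monotonicity on busy intervals (and $G^n(0)=0$) yields $\|G^n\|_T\leq cK\varepsilon\,n^{1/2}$ on $\mathcal{E}_n$: indeed, at the end of every idle interval $G^n$ is already $\leq cK\varepsilon n^{1/2}$, during the subsequent busy interval it only decreases, and it re-enters the next idle interval no larger than it left, where it is again pinned to the same bound.

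The main obstacle, as I see it, is making the rate argument in the first paragraph fully rigorous, because $W^n$ and ${\mathcal{W}}^n$ are only piecewise smooth and reflect at $0$. The cleanest way is probably to integrate the rate identity over sub-intervals of $\{Q_i^n>0\ \forall i\}$ and of its complement, using the non-idling conventions together with the Skorohod/workload balance equations to justify the signed rate expressions; the subtlety is to track the contributions when ${\mathcal{W}}^n$ hits $0$ while $W^n>0$, which only improves the bound on $G^n$. Once this accounting is handled, the coupling assumption (arrivals and service time requirements identical) ensures arrival jumps cancel in $G^n$, and the two ingredients above combine to give, for every $\varepsilon>0$, $\mathbbm{P}(n^{-1/2}\|G^n\|_T>cK\varepsilon)\to 0$, which is the required convergence in probability.
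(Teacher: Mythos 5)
Your proposal is correct and uses essentially the same argument as the paper: the lower bound from Theorem \ref{thm:workload lower bound}, the inequality $W_i^n\leq c\,Q_i^n$ from bounded service requirements, SSC to force all queues to be small whenever one is empty (equivalently, all servers busy whenever the workload gap is large), and the observation that the gap drains at least as fast in the $K$-system as in the $S$-system when no server idles. The paper merely packages the same ideas as a first-passage contradiction over an interval $[\sigma^n,\tau^n]$ on which all servers are shown to be busy, which sidesteps the pointwise rate bookkeeping you flag as the main technical obstacle.
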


The importance of Theorem \ref{thm:workload lower bound} is that it establishes that there exist no algorithm which can achieve lower nominal workload than a single server system. Then, Theorem \ref{thm:as optimal} implies that if the queue lengths undergo SDDP, then the nominal workload is optimal up to an additive term of order $o(n^{1/2})$.

An immediate Corollary to Theorem \ref{thm:as optimal} is the asymptotic optimality with respect to the average \textit{actual} workload in \textit{homogeneous systems}. The actual workload at server $i$ is defined as the time it would take server $i$ to complete all of the jobs that are currently in its queue (including the one in service if there is any). Thus, it is given by $ W^n_i/\mu^n_i$. The average actual workload is given by $(1/K)\sum_{i=1}^K W^n_i/\mu^n_i$. The actual workload at the single server queue is given by ${\mathcal{W}}^n/\left(\sum_{i=1}^K\mu^n_i\right) $.

\begin{corollary}[Asymptotic Optimality: Average Actual Workload in Homogeneous Systems]\label{cor:ao_aw_h}
Fix $T>0$. Assume that the service time requirements of all arriving jobs are uniformly upper bounded by some constant. If the queue lengths undergo SDDP and $\mu^n_i=\mu^n, \forall i\in[K]$, then:
\begin{equation*}
       \frac{1}{n^{1/2}}\left\|\frac{1}{K}\sum_{i=1}^K \frac{ W^n_i}{\mu^n_i}-\frac{{\mathcal{W}}^n}{\sum_{i=1}^K\mu^n_i}\right\|_T\xrightarrow{n \rightarrow \infty}0 \text{ in probability.}
\end{equation*}
\end{corollary}

The proof of Corollary \ref{cor:ao_aw_h} follows from substituting $\mu^n_i=\mu^n$ and using Theorem \ref{thm:as optimal}. Thus, an arriving job to a homogeneous system would see, on average, approximately the same workload as it would see upon arriving to the single server system. Thus, it will experience similar queueing times and therefore will have similar completion times.

\begin{remark}
    For heterogeneous systems, we cannot expect JSQ, JSAQ, or any other policy for which SDDP holds to lead to asymptotic optimality with respect to the actual workload. If the queue lengths are balanced, then the actual workloads are imbalanced, with the level of imbalance depending on the different service rates. We refer the reader to \cite{atar2019replicate}, Section 4.2, for more details. However, our results do indicate that one can expect the performance of JSAQ to be similar to that of JSQ, even when the communication rate is much slower than the rate at which events occur in the system. One would expect that a different policy which balances the actual workloads (e.g., join the queue with the minimal $Q^n_i/\mu^n_i$) would lead to asymptotic optimality with respect to the actual workload. However, we were not able to prove this and we leave it as an open question.
\end{remark}

\subsection{Proofs}
In this section we reiterate each result from Section \ref{subsec:main results} for readability and provide its proof. We reiterate a few definitions from Section \ref{sec: CARE} which we will use in the proofs below. Recall that $t_i^m$ refers to the (possibly random) time the $m$th message was sent from server $i$ to the load balancer, and that $t_i^0$ is defined as zero for all $i$. For every $n$ these messaging times are generally different but we avoid adding $n$ to the notation for simplicity.

The approximations are given by:

\begin{equation}\label{eq:approximations diffusive section}
    \tilde{Q}^n_i(t)=Q^n_i(t_i^m)+A^n_i(t_i^m,t]-\tilde{D}^n_i(t_i^m,t], \quad t \in [t_i^m,t_i^{m+1}),
\end{equation}
where $Q^n_i(t_i^m)$ is the queue length of server $i$ at the most recent messaging time $t_i^m$, $A^n_i(t_i^m,t]$ is the number of jobs the load balancer sent server $i$ during $(t_i^m,t]$, and $\tilde{D}^n_i(t_i^m,t]$ is the estimated number of departures from the queue during $(t_i^m,t]$. The actual queue lengths can be written using the messaging times as follows (see \eqref{eq:actual ql}):
\begin{align}\label{eq:actual ql diffusion section}
    Q^n_i(t)=Q^n_i(t_i^m)+A^n_i(t_i^m,t]-D^n_i(t_i^m,t],
\end{align}
where $D^n_i(t_i^m,t]$ equals the actual number of departures from queue $i$ during $(t_i^m,t]$.

Finally, denote centered and scaled versions of the primitive processes $A$ and $S_i$ by:
\begin{equation} \label{eq: sclaed primitives}
  \hat A^n(t)=n^{-1/2}(A(nt)-nt),\quad \hat S^n_i(t)=n^{-1/2}(S_i^n(t)-\mu_i^nt).
\end{equation}
These processes jointly converge to mutually independent Brownian Motions with zero drift and
infinitesimal variance $1$ and $\bar{\mu}_i \sigma^2$, respectively (see Theorem 5.11 of \cite{chen2001fundamentals}). The processes $\hat{A}^n$ and $\hat S^n_j$ are $\calC$-tight, as processes that converge
to Brownian Motions.
Recalling that $\II$ denotes integration, by \eqref{eq:arrival}, the process $A^n$ is given as $A^n=A\circ\II\la^n$. Therefore
\begin{equation}\label{eq: scaled A}
n^{-1/2}(A^n(t)-\II\la^n(t))=n^{-1/2}(A(n\II\bar\la^n(t))-n\II\bar\la^n(t))
=(\hat A^n\circ\II\bar\la^n)(t),
\end{equation}
where $\bar\la^n(t)$ is defined in \eqref{eq:scaled_rates}.

\subsubsection{Maximal Approximation Error and SDDP}
We now prove Theorem \ref{thm:SDDP}, restated below. 

\begin{reptheorem}{thm:SDDP}[SDDP]
Fix $T>0$. Assume that: \\
(i) $\bar\la_{max}(T)<\iy$. \\
(ii) $\bar\la_{min}(T)>\sum_{i\in[K]}\bar\mu_i-K\bar\mu_{min}.$\\
(iii) the approximations are admissible, namely \eqref{property:aoa} holds: 
\begin{equation}
    \frac{1}{n^{1/2}}\max_{ i \in[K]}\left\|\tilde{Q}^n_{i}-Q^n_{i}\right\|_T \xrightarrow{n \rightarrow \infty} 0 \text{ in probability.}
    \label{eq:aoa_rep}
\end{equation}
Then the queue lengths undergo SDDP and \eqref{property:ssc} holds, 
\begin{equation}
    \frac{1}{n^{1/2}}\max_{ i,j \in[K]}\left\|Q^n_{i}-Q^n_{j}\right\|_T \xrightarrow{n \rightarrow \infty} 0 \text{ in probability.} 
    \label{eq:ssc_rep}
\end{equation}
\end{reptheorem}

\begin{proof} 

\emph{Proof Sketch}: Let us first outline the main idea of the proof, which can be roughly divided into three  parts. 
\begin{enumerate}
    \item We begin with a proof-by-contradiction argument. Suppose that SDDP does {not} occur, then it would necessarily imply that there exists a pair of queues $i, j$  such that $Q_i(\tau)$ is substantially larger than $Q_j(\tau)$ by $\Omega(n^{1/2})$ at some time $\tau$. We call this  a divergence event, and queue $i$ the superior queue. 
    \item The rest of the proof aims to show that a divergence event is very unlikely to occur. We make two observations. First, if $\tau$ is the first time when a divergence event occurs, then the superior queue $i$ must be amongst the longest at that time. Second, let $\sigma$ be the  last time before $\tau$ when queue $i$ received an arrival. Because the load balancer runs the JASQ policy, we know that queue $i$ must be amongst the shortest queues in the system at time $\sigma$ for it would have not received an arrival otherwise. 
    \item Combining both observations above, we see that during the span of the time interval $(\sigma, \tau]$, queue $i$  went from being amongst the shortest in the system to being amongst the longest, despite having only received one single arrival. We finish the proof by showing that, under diffusion scaling this is indeed extremely unlikely under the JASQ policy with a sufficiently accurate state approximation. This completes the proof. 
\end{enumerate}

We now present the  proof. Fix $T>0$ and $\eps>0$.  Define the following two events corresponding to the compliment of SDDP and the state approximations being sufficeintly accurate,  respectively: 
\begin{align*}
     A^n_\eps:=   &  \Big \{\max_{i,j\in [K] }\|Q^n_i-Q^n_j\|_T>\eps n^{1/2}\Big \}, \\
    B^n_\eps:=  &  \Big \{\max_{i \in[K]}\|\tilde{Q}^n_{i}-Q^n_{i}\|_T<\frac{\eps}{4(K-1)} n^{1/2}\Big \}.
\end{align*}
We obtain 
\begin{align*}
    \limsup_{n \to \infty} \mathbbm{P}(A^n_\eps)= & \limsup_{n \to \infty} \mathbbm{P}(A^n_\eps \cup B^n_\eps)-\mathbbm{P}(B^n_\eps)+\mathbbm{P}(A^n_\eps \cap B^n_\eps) \\
= & \limsup_{n \to \infty} \mathbbm{P}(A^n_\eps \cap B^n_\eps), 
\end{align*}
where the second equality follows from $\mathbbm{P}(B^n_\eps)\xrightarrow{n \to \infty}1$ by \eqref{eq:aoa_rep}, which also implies $\mathbbm{P}(A^n_\eps \cup B^n_\eps)\xrightarrow{n \to \infty}1$.  Thus, to prove the theorem,  it suffices to show that 
\begin{equation}
    \lim_{n \to \infty} \mathbbm{P}(A^n_\eps \cap B^n_\eps)=0. 
    \label{eq:pb_AB_to0}
\end{equation}

Define the stopping time representing the first time a divergence occurs: 
\begin{equation}
\tau:=\tau^n=\inf\Big\{t:\text{ there exist } i,j\in [K] \text{ such that }
Q^n_i(t)-Q^n_j(t)
>\eps n^{1/2}\Big\}.
\label{eq:tau_def}
\end{equation}
and the event that a divergence occurs within the time horizon: 
\begin{equation}
    C^n := \{\tau < T\}. 
\end{equation}
It follows by definition that $A^n_\eps \subset C^n$, and  $A^n_\eps \cap B^n_\eps \subset C^n \cap B^n_\eps$. Therefore, it suffices to prove that
\begin{equation}
    \lim_{n \to \infty} \mathbbm{P}(C^n\cap B^n_\eps)=0.
\end{equation}

For every sample path in the event $C^n\cap B^n_\eps$, there exist $i,j\in[K]$ such that
$Q^n_{i}(\tau)- Q^n_{j}(\tau)>\eps n^{1/2}$. 
Fix such $i$ and $j$, and we will refer to queue $i$ as the superior queue. 
Denote by $\sigma$ the last time before $\tau$ such that the superior queue receives a job: 
\begin{equation*}
    \sigma:=\sigma^n=\sup\Big\{t< \tau: A_i^n(t)-A_i^n(t^-)=1 \Big\}.
\end{equation*}
Time $\sigma$ is well defined since we assumed that the system starts empty, and by the definition of $\tau$ we have $Q^n_i(\tau)>0$, so $\tau>0$ and there must have been arrivals to buffer $i$ before time $\tau$. 

We will use the following short-hand notation: 
\begin{equation}
    Q^n_S(t) := \sum_{k \in [K]\setminus\{i\}}Q^n_k(t), 
    \label{eq:Q_Sdef}
\end{equation}
where $i$ is the index of the superior queue. For any interval $J=(t_1,t_2]$ and a function $f$, let $f(J]$ denote the difference in $f$ between the two end points of $J$:
\begin{equation}
    f(J] = f(t_2)-f(t_1). 
\end{equation}
The next lemma establishes some crucial properties of the superior queue.

\begin{lemma} 
\label{lem:superior_i_prop}
Let $i$ and $j$ be defined as in \eqref{eq:tau_def}, where $i$ is the superior queue. The following holds for all sufficiently large $n$ and sample paths in $C^n\cap B^n_\eps$: 
\begin{enumerate}
    \item The superior queue is amongst the largest queues at time $\tau$: $Q_i^n(\tau) = \max_{i'}Q^n_{i'}(\tau)$. 
    \item There is no arrival to queue $i$ at time $t=\tau$, and consequently, queue $i$ does not receive any jobs during $(\sigma,\tau]$. 
    \item The change in queue $i$  over $t \in (\sigma, \tau]$ is substantially larger than the average change in the rest of the queues during this period: 
    \begin{equation}
        (Q^n_i(\tau) - Q^n_i(\sigma))-\frac{1}{K-1} \sum_{i' \neq i} (Q^n_{i'}(\tau) - Q^n_{i'}(\sigma)) \geq  \frac{\eps}{2}n^{1/2}\frac{1}{K-1}, 
    \end{equation}
    or, using the short-hand notation, 
    \begin{equation}
        Q^n_i(J]-\frac{1}{K-1}Q^n_S(J] \geq  \frac{\eps}{2}n^{1/2} \frac{1}{K-1}, 
        \label{eq:final inequality}
    \end{equation}
    where $J = (\sigma, \tau]$. 
\end{enumerate}
\end{lemma}

\emph{Proof.} \emph{Claim 1}. The claim follows directly from the definition of $\tau$.  Since $\tau$ is the first time such that there is a difference of $\epsilon n^{1/2}$ between any two queues, $Q^n_i$ and $Q_j^n$ must be the largest and smallest queue lengths, respectively, across all queues in the system at time $\tau$. 

\emph{Claim 2}. We argue that for all sufficiently large $n$, as a result of restricting to the event $B^n_\epsilon$ (i.e., accurate approximation), buffer $i$ cannot receive a new job at time $\tau$.  Suppose that $n$ is large enough such that $\epsilon n^{1/2}>6$.  Since the queue length processes can only jump by 1 at any given time, we have that just before time $\tau$, namely at $\tau^-$, queues $i$ and $j$ cannot be closer than $\epsilon n^{1/2}-2$ (the $-2$ factor corresponds to the event of having an arrival to buffer $i$ and departure from buffer $j$ at $\tau$ simultaneously). 

Since the sample path is contained in $B^n_{\epsilon}$, the approximations cannot be deviate more than $(\epsilon n^{1/2})/4$ from the actual queue lengths. We have that:  
\begin{equation*}
    \tilde{Q}^n_i(\tau^-)-\tilde{Q}^n_j(\tau^-)\geq \epsilon n^{1/2}-2-2\times(\epsilon n^{1/2})/4=0.5\epsilon n^{1/2}-2>1.
\end{equation*} 
Specifically, we have $\tilde{Q}^n_j(\tau^-)<\tilde{Q}^n_i(\tau^-)$ and therefore, by the definition of the JSAQ load balancing algorithm, buffer $i$ cannot receive a job at time $\tau$. This proves the second claim. 

\emph{Claim 3}.  By the first claim, at time $\tau$, we have
\begin{equation}\label{eq:tau rest}
    Q^n_i(\tau) - Q^n_k(\tau)\geq 0 \quad \forall k  \neq i, j, 
\end{equation}
and
\begin{equation}\label{eq:tau one}
    Q^n_i(\tau)-Q^n_k(\tau)>\eps n^{1/2}, \quad k =j. 
\end{equation}
Using \eqref{eq:tau rest} and \eqref{eq:tau one} and summing over all $k \neq i$ yields:  
\begin{equation}\label{eq:tau inequality}
    (K-1)Q^n_i(\tau)-Q^n_S(\tau)>\eps n^{1/2}, 
\end{equation}
where $Q^n_S(t) = \sum_{k \in [K]\setminus\{i\}}Q^n_k(t)$, as defined in \eqref{eq:Q_Sdef}.

Next, by the definition of $\sigma$, the approximation $\tilde{Q}^n_{i}(\sigma)$ must satisfy
\begin{equation}\label{eq:sigma 1}
    \tilde{Q}^n_{i}(\sigma)\leq \tilde{Q}^n_{k}(\sigma), \text{ for all } k \in[K]\setminus \{i\}.
\end{equation}
Subtracting and adding $Q^n_{i}(\sigma)$ from the left hand side and then $Q^n_{k}(\sigma)$ from the right hand side of \eqref{eq:sigma 1} yields 

\begin{equation*}
    \tilde{Q}^n_{i}(\sigma)-Q^n_{i}(\sigma)+Q^n_{i}(\sigma) \leq \tilde{Q}^n_{k}(\sigma)-Q^n_{k}(\sigma)+Q^n_{k}(\sigma), \text{ for all } k \in[K]\setminus \{i\},
\end{equation*}
and, after rearranging, we obtain
\begin{equation}\label{eq:sigma 3}
    Q^n_{i}(\sigma)-Q^n_{k}(\sigma) \leq 2\max_{k\in[K]}\vert Q^n_{k}(\sigma)-\tilde{Q}^n_{k}(\sigma)\vert.
\end{equation}
Since the event we analyze is contained in $B^n_{\eps}$, we have for all $k \in[K]$
\begin{equation}\label{eq:sigma 4}
    \vert Q^n_{k}(\sigma)-\tilde{Q}^n_{k}(\sigma)\vert \leq \frac{\eps}{4(K-1)}n^{1/2}.
\end{equation}
Combining \eqref{eq:sigma 3} and \eqref{eq:sigma 4} yields
\begin{equation}\label{eq:sigma 5}
    Q^n_{i}(\sigma)-Q^n_{k}(\sigma) \leq \frac{\eps}{2(K-1)}n^{1/2}.
\end{equation}
Summing the inequalities in \eqref{eq:sigma 5} over $k \in [K]\setminus \{i\}$, we obtain
\begin{equation}\label{eq:sigma 6}
    (K-1)Q^n_{i}(\sigma)-Q^n_{S}(\sigma) \leq \frac{\eps}{2}n^{1/2}.
\end{equation}

Set $J = (\sigma, \tau]$.  Combining \eqref{eq:tau inequality} and \eqref{eq:sigma 6}, we have 
\begin{align}\label{eq:sigma 7}
(K-1)Q^n_i(J]-Q^n_S(J]\geq  \frac{\eps}{2}n^{1/2}.    
\end{align}
This completes the third and last claim in Lemma \ref{lem:superior_i_prop}. 
\qed 

Using Lemma \ref{lem:superior_i_prop} and Equation \eqref{eq:sigma 7}, we now complete the proof by showing that the event $C^n\cap B^n_{\eps}$ is very unlikely to occur under the JASQ policy for large $n$. 

The intuition is as follows. The argument considers two cases, namely, if the interval $J$ is short, or long. If $|J|$ is small, then the difference between queue lengths at any server must also be small, so the left hand side of \eqref{eq:sigma 7} cannot exceed the right hand side which goes to infinity with $n$. If $|J|$ is not small, consider the terms $Q^n_i(J]$ and $Q^n_S(J]$ on the left hand side of \eqref{eq:sigma 7}. On the one hand, by Claim 2 of Lemma \ref{lem:superior_i_prop}, no jobs are routed to server $i$ during the interval $J$. Thus the queue length at server $i$ can only decrease during $J$, and the difference $Q^n_i(J]$ is roughly determined by $-\mu_i^n|J|$. On the other hand, all jobs are routed to the other servers during $J$, thus the difference $Q^n_S(J]$ is roughly determined by $\int_{J}\lambda^n(u)du
-\sum_{j \in[K]}\mu_j^n|J|$. As a result, after dividing by $n^{1/2}$ and some rearrangement, the left hand side of \eqref{eq:sigma 7} roughly behaves as 
$$n^{1/2}\Big\{-K\bar\mu_i|J|-
    \int_{J}\bar\lambda^n(u)du
+\sum_{j \in[K]}\bar\mu_j|J|\Big\}.$$
If $|J|$ is not small, the left hand side of \eqref{eq:sigma 7} (after diving by $n^{1/2}$) goes to zero with $n$ while the right hand side is constant, thus indicating that \eqref{eq:sigma 7} is unlikely to hold. 

We now proceed with the proof. Define the events: 
\begin{align*}
\Om^n(i_0)=\Big\{\exists s,t\in[0,T], s<t,\ \text{s.t.~} &
(K-1)Q^n_{i_0}(s,t]-Q^n_S(s,t]\geq  \frac{\eps}{2}n^{1/2}, \ \text{and} \\
&\text{there is no arrival to $i_0$ during $(s,t]$ } \Big\}, 
\end{align*}
where $Q^n_S=\sum_{k \in [K]\setminus\{i_0\}}Q^n_k$. Then, by Lemma \ref{lem:superior_i_prop} considering all possible realizations of the superior queue index $i$, we have that $C^n\cap B^n_{\eps} \subset \cup_{i_0 \in [K]} \Om^n(i_0)$. Using the union bound, this further implies that: 
\begin{align}\label{eq:union bound}
\PP(C^n\cap B^n_{\eps})\le\sum_{i_0=1}^{K}
\PP(\Om^n(i_0)). 
\end{align}
To prove the result,
it thus suffices to show that each summand $\PP(\Om^n(i_0))$
converges to zero with $n$.
We focus on fixed, deterministic $i_0$, and, with a slight abuse
of notation, refer to it again as $i$. We also write $J$ for the corresponding
time interval $(s,t]$, and $\Om^n$ for $\Om^n(i)$.

On the event $\Om^n$, using \eqref{eq: balance}, we have
\be \label{eq: decompose Q}
(K-1)Q^n_i(J]-Q^n_S(J]=(K-1)A^n_i(J]-(K-1)D^n_i(J]-A^n_S(J]+ D^n_S(J],
\ee
where $X_S=\sum_{j\in [K]\setminus\{i\}}X_j$ for $X=Q^n, A^n$ and $D^n$.
On this event, no jobs are routed to server $i$ by the load balancer during $(s,t]$. Thus $A^n_i(J]=0$ and $A^n_S(J]=A^n(J]$. Also, server $i$ is non-idling during $(s,t]$.
Therefore, if we denote for $j\in [K]$,
$J_j=(s-I^n_j(s),t-I^n_j(s)]$, we have $D^n_i(J]=S^n_i(J_i]$.
Servers $j\in [K]\setminus \{i\}$ may idle during this time period,
but since $S^n_j$ has non-decreasing sample paths, we have, by \eqref{eq:D vs S},
that $D^n_j(J] \leq S^n_j(J_j]$ for $j\in [K]\setminus \{i\}$.
Hence by \eqref{eq:final inequality} in Claim 3 of Lemma \ref{lem:superior_i_prop} and \eqref{eq: decompose Q},
\be \label{eq: a and s}
-(K-1)S^n_i(J_i]-A^n(J]+ \sum_{j \in[K]\setminus\{i\}}S^n_j(J_j] \geq \frac{\eps}{2} n^{1/2}.
\ee
By \eqref{eq: sclaed primitives} and \eqref{eq: scaled A} we have
\begin{align*}
    A^n(J]=A^n(t)-A^n(s)&=n^{1/2}(\hat A^n\circ\II\bar\la^n)(t)+\II\la^n(t)-n^{1/2}(\hat A^n\circ\II\bar\la^n)(s)-\II\la^n(s)\cr
    &=n^{1/2}(\hat A^n\circ\II\bar\la^n)(J]+\II\la^n(J]
\end{align*}
and for all $j \in [K]$,
\begin{align*}
    S^n_j(J_j]&=S^n_j(t-I^n_j(s))-S^n_j(s-I^n_j(s))\cr
    &=n^{1/2}\hat{S}^n_j(t-I^n_j(s))+\mu^n_j(t-I^n_j(s))-\big(n^{1/2}\hat{S}^n_j(s-I^n_j(s))+\mu^n_j(s-I^n_j(s))\big)\cr
    &=n^{1/2}\hat S^n_j(J_j]+\mu^n_j(t-s).
\end{align*}
Thus, after dividing by $n^{1/2}$, we can write \eqref{eq: a and s} as
\begin{equation*}
-(K-1)\hat{S}^n_i(J_i]-(\hat A^n\circ\II\bar\la^n)(J]+  \sum_{j \in[K]\setminus\{i\}}\hat S^n_j(J_j]+Y^n \geq \frac{\eps}{2},
\end{equation*}
where
\begin{align*}
    Y^n&=n^{1/2}\Big\{-(K-1)\bar\mu_i(t-s)-
    \int_{J}\bar\lambda^n(u)du
+\sum_{j \in[K]\setminus\{i\}}\bar\mu_j(t-s)\Big\} \cr
&=n^{1/2}\Big\{-K\bar\mu_i(t-s)-
    \int_{J}\bar\lambda^n(u)du
+\sum_{j \in[K]}\bar\mu_j(t-s)\Big\}.
\end{align*}
To bound $Y^n$, we obtain
\begin{equation*}
    Y^n\leq n^{1/2}\Big\{
-K\bar\mu_{min}-
\inf_{u\in[0,T]}\bar\la^n(u)
+\sum_{j \in[K]}\bar\mu_j
\Big\}(t-s)\leq  n^{1/2}\Big\{
-K\bar\mu_{min}-
\bar\la_{min}(T)
+\sum_{j \in[K]}\bar\mu_j
\Big\}(t-s) 
\end{equation*}

Hence by condition (ii) in the Theorem \ref{thm:SDDP}, for some $\eps_0>0$, $Y^n\le-\eps_0n^{1/2}(t-s)$ on the event $\Om^n$.
As a result, 
on the same event we have
\begin{equation}\label{eq:modulus}
w_T(\hat A^n\circ\II\bar\la^n,t-s)
+K\sum_{j=1}^Kw_T(\hat S^n_j,t-s)-\eps_0n^{1/2}(t-s)\geq \frac{\eps}{2},
\end{equation}
where we recall that $w_T$ is the modulus of continuity, which is defined, for $f:\mathbbm{R}_+ \rightarrow \mathbbm{R}^k$ and $\theta>0$, by:
\begin{equation*}
    \omega_T(f,\theta)=\sup_{0\leq s<u\leq s+\theta\leq T}\norm{f(u)-f(s)}.
\end{equation*}

Fix a sequence $\{r_n\}$ such that $r_n \rightarrow 0$ and $n^{1/2}r^n \rightarrow \infty$.
Considering the two cases $t-s\le r_n$ and $t-s>r_n$,
it follows from \eqref{eq:modulus} that $\PP(\Om^n)\le p^n_1+p^n_2$, where
\[
p^n_1=\PP\Big(w_T(\hat A^n\circ\II\bar\la^n,r_n)
+K\sum_{j=1}^Kw_T(\hat S^n_j,r_n)\ge\frac{\eps}{2}\Big),
\]
\[
p^n_2=\PP\Big(2\|\hat A^n\circ\II\bar\la^n\|_T
+2K\sum_{j=1}^K\|\hat S^n_j\|_T
\ge \eps_0n^{1/2}r_n\Big).
\]
The processes $\hat{A}^n$ and $\hat S^n_j$ are $\calC$-tight, as processes that converge
to Brownian Motions. Moreover, by the assumed uniform bound on $\bar\la^n(t)$,
there exists a deterministic constant $c_1$ (independent of $n$)
such that $\II\bar\la^n$ is bounded by $c_1$
on the time interval $[0,T]$, as well as $c_1$-Lipschitz on it.
It follows that $\hat A^n\circ \II\bar\la^n$ is also $\calC$-tight. As a result,
both $p^n_1$ and $p^n_2$ converge to zero. This shows that 
\begin{equation}
    \PP(\Om^n(i))\to 0, \quad \forall i\in[K]. 
\end{equation}
By \eqref{eq:union bound}, this completes the proof of Theorem \ref{thm:SDDP}. 
\qed \\

\end{proof}


\subsubsection{SDDP and Asymptotic Optimality}


The main result of this section is that \textit{any} load balancing policy under which SDDP occurs is optimal in the sense that the \textit{nominal workload} in the system is minimal, up to a factor of order $o(n^{1/2})$, at any point in time. The nominal workload refers to the time it will take a server with a processing rate of 1 to serve all current jobs in the system. Moreover, the $K$ server system, where each server $i$ processes jobs at rate $\mu^n_i$, acts as a single server queue with the server processing rate equalling $\sum_{i=1}^K\mu^n_i$. In particular, this applies to all the approaches we considered provided that the conditions of Theorem \ref{thm:SDDP} hold.

We prove these results in two stages. First, we couple the $K$ server system with a corresponding single server system such that they have the same input of jobs. We prove that the nominal workload in the single server system is a lower bound for the nominal workload in the $K$ server system \textit{for any load balancing algorithm}, including, but not limited to, approximation-based.
Second, we use the SDDP result to prove that the (scaled) difference between the nominal workload in the $K$ server system and the single server system converges to zero, thus proving asymptotic optimality.\\

\noindent \textbf{The single server system.} 
Let a system with a single server, which we refer to as the $S$-system, with an infinite buffer in which a queue can form, be given. The server is work conserving and provides first-come-first-serve service. Assume that the system starts empty. There is a single stream of incoming jobs, denoted by $\mathcal{A}^n$. Denote by ${\mathcal{T}}(k)$ the nominal service requirement of the $k$th job to enter the $S$-system. This is the time it takes a rate-1 server to complete this job. We assume that $\{{\mathcal{T}}(k)\}$ is a sequence of i.i.d.~random variables with mean 1 and variance $\sigma^2$. The actual processing time of job $k$ is defined as
\begin{equation}\label{eq:S system ser req}
    {\mathcal{T}}^n(k):={\mathcal{T}}(k)/ \mu^n,
\end{equation}
where $\mu^n$ is the deterministic processing rate of the server. Denote by ${\mathcal{I}}^n,{\mathcal{S}}^n,{\mathcal{D}}^n,{\mathcal{Q}}^n$ the idle-time, potential service, departure and queue length processes, which are defined analogously to \eqref{eq:nominal pot serv proc}, \eqref{eq: pot serv proc}, \eqref{eq:D vs S} and \eqref{eq: balance}.

We will be interested in the nominal workload in the system. To this end, define the input nominal workload process by
\begin{equation}\label{eq:def of input nom wo S}
    {\mathcal{W}}^n_{\mathcal{A}}(t):=\sum_{k=1}^{{\mathcal{A}^n}(t)}{\mathcal{T}}(k).
\end{equation}
By \eqref{eq:S system ser req}, the nominal workload is drained at a rate of $\mu^n$, whenever the system is not empty. Thus the nominal workload in the $S$-system, which we denote by ${\mathcal{W}}^n$, is given by
\begin{equation}\label{eq:nom w S}
    {\mathcal{W}}^n(t)={\mathcal{W}}^n_{\mathcal{A}}(t)-\mu^n(t-{\mathcal{I}}^n(t)).\\
\end{equation}

\noindent \textbf{The nominal workloads in the $K$ server system.} 
In what follows, we refer to the $K$ server system as the $K$-system. Analogously to \eqref{eq:def of input nom wo S} and \eqref{eq:nom w S}, define the input nominal workload processes by
\begin{equation*}
    W^n_{A,i}(t):=\sum_{k=1}^{A^n_i(t)}T_i(k), \qquad i \in [K].
\end{equation*}
The nominal workload processes, denoted by $\{W^n_i\}$, satisfy
\begin{equation*}
    W^n_i(t)=W^n_{A,i}(t)-\mu^n_i(t-{I}^n_i(t)).
\end{equation*}

\noindent \textbf{The coupling.} 
 By the definition of the $S$-system, its dynamics is completely determined once we specify the primitive processes $\mathcal{A}^n$ and $\{{\mathcal{T}}(k)\}$, as well as the service rate $\mu^n$. Afterwards, we couple it with the $K$-system. To this end, fix $\mu^n$ and the sequence of i.i.d.~random variables $\{{\mathcal{T}}(k)\}$ with mean 1 and variance $\sigma^2$. Set
 \begin{equation*} 
{\mathcal{A}}^n(t)=A\Big(\int_0^t\la^n(s)ds\Big), \quad t\ge0,
\end{equation*}
where $A$ and $\la^n(t)$ are defined the same as in \eqref{eq:arrival}. Now, couple the $K$-system to the $S$-system as follows: 
 
 \begin{enumerate}
     \item Set $\sum_{i=1}^K\mu^n_i=\mu^n$, i.e., the single server works at the combined rate of the servers in the $K$-system.
     \item Set $A^n(t)={\mathcal{A}}^n(t)$, i.e., both systems have the same job arrival processes.
     \item For job $k$ that enters the $K$-system and is sent to server $i_k\in[K]$, such that it is the $l$th job to be sent to server $i_k$, set $T_{i_k}(l)={\mathcal{T}}(k)$. Namely, each job incurs the same nominal service requirement in both systems. Note that its actual service time depends on which server it is routed to.
 \end{enumerate}

\begin{remark}
The last item in the coupling means that for the $K$-system, instead of sampling the primitive processes $\{T_i(k)\}$ for the nominal service requirements, we sample the single process $\{{\mathcal{T}}(k)\}$. Since all of these sequences are assumed to be i.i.d.~(within, as well as jointly), these two approaches are equivalent and result in the exact same stochastic behaviour. 
\end{remark}
We are now ready to prove our results from Section \ref{subsec:main results}.

\begin{reptheorem}{thm:workload lower bound}[General Lower Bound]
For any $K$-server load balancing system we have:

\begin{equation}\label{eq: general lb}
    \sum_{i=1}^K W_i^n(t) \geq {\mathcal{W}}^n(t), \qquad t \geq 0.
\end{equation}
\end{reptheorem}

\begin{proof}
The proof is by induction. Let $\tau_k$ denote the (random) arrival time of the $k$th job. Equation \eqref{eq: general lb} holds trivially for $t\in[0,\tau_1)$ (since the systems are empty). Assume that it holds for $t\in [0,\tau_k)$. We want to prove that this implies that \eqref{eq: general lb} also holds for $t \in [0,\tau_{k+1})$. At time $\tau_k$ a job arrives to the two systems with the same nominal workload ${\mathcal{T}}(k)$. By the induction assumption, we must have 

\begin{equation*}
    \sum_{i=1}^K W_i^n(\tau_k^-) \geq {\mathcal{W}}^n(\tau_k^-), 
\end{equation*}
and therefore
\begin{equation}\label{eq:for gen lb}
    \sum_{i=1}^K W_i^n(\tau_k)=\sum_{i=1}^K W_i^n(\tau_k^-)+ {\mathcal{T}}(k)\geq {\mathcal{W}}^n(\tau_k^-)+{\mathcal{T}}(k)={\mathcal{W}}^n(\tau_k).
\end{equation}

Now, during $(\tau_k,\tau_{k+1})$ no jobs arrive. Hence, the nominal workloads can only decrease, possibly reaching (and staying) at zero. The rate at which the total nominal workload in the $K$-system decreases is at most $\mu^n$ (when all servers are non-idle). Thus the nominal workload in the $S$-system starts lower at $\tau_k$ (by \eqref{eq:for gen lb}) and decreases at least as fast. This concludes the proof.
\qed\\
\end{proof}

Thus, there is no load balancing policy for the $K$-server system that can do better than the corresponding single server system. Our next result shows, under a mild boundedness assumption on the nominal service requirements, that any policy under which SDDP occurs is asymptotically optimal. Namely, the scaled difference between the nominal workload compared to the $S$-system is of order $o(n^{1/2})$. 


\begin{reptheorem}{thm:as optimal}[Asymptotic Optimality]
Fix $T>0$. Assume that the service time requirements of all arriving jobs are uniformly upper bounded by some constant $c>0$. If the queue lengths undergo SDDP (property \eqref{property:ssc}), then:
\begin{equation*}
       \frac{1}{n^{1/2}}\left\|\sum_{i=1}^K W^n_i-{\mathcal{W}}^n\right\|_T\xrightarrow{n \rightarrow \infty}0 \text{ in probability.}
\end{equation*}
\end{reptheorem}

\begin{proof}
Fix $T$ and $\epsilon$. Define
\begin{align*}
    \tau^n:=\inf\{t\geq 0: \sum_{i=1}^K W^n_i(t)-{\mathcal{W}}^n(t)>\epsilon n^{1/2}\}.
\end{align*}
By \eqref{eq: general lb}, the result will follow once we prove that $\mathbbm{P}(\tau^n<T)\rightarrow 0$ as $n \rightarrow \infty$.
To this end, define
\begin{equation*}
    \sigma^n=\sup{\{t<\tau^n:\sum_{i=1}^K W^n_i(t)-{\mathcal{W}}^n(t)\leq\frac{\epsilon}{2}n^{1/2}\}}.
\end{equation*}
The existence of $\sigma^n$ is guaranteed by the fact that both systems start empty. In addition, since the upward jumps are equal in both systems, by the definitions of $\tau^n$ and $\sigma^n$, we must have $\sigma^n<\tau^n$.
Now, 
\begin{align*}
    &\sum_{i=1}^K W^n_i(\tau^n)-{\mathcal{W}}^n(\tau^n)>\epsilon n^{1/2}\cr
    &\sum_{i=1}^K W^n_i(\sigma^n)-{\mathcal{W}}^n(\sigma^n)\leq\frac{\epsilon}{2}n^{1/2}.
\end{align*}
Denoting $I^n:=[\sigma^n,\tau^n]$, we obtain
\begin{align}\label{eq: combined}
    \sum_{i=1}^K W^n_i(I^n]-{\mathcal{W}}^n(I^n]>\frac{\epsilon}{2}n^{1/2}.
\end{align}
We now argue that for all large enough $n$, all of the servers in the $K$-system must be non-idle during the interval $I^n$. By the definition of $\sigma^n$, we have $\forall t \in I^n$
\begin{equation}\label{eq:bounded 1}
    \sum_{i=1}^K W^n_i(t) \geq \frac{\epsilon}{2}n^{1/2}.
\end{equation} 
Since we assumed that the nominal service requirements of jobs are uniformly bounded by a constant $c>0$, each job in the queue (including the job in processing, if there is any) contributes at most $c$ to the total nominal workload. Therefore $\forall t \geq 0$ we have
\begin{equation}\label{eq:bounded 2}
    c\sum_{i=1}^K Q^n_i(t)\geq \sum_{i=1}^K W^n_i(t).
\end{equation} 
Combining \eqref{eq:bounded 1} and \eqref{eq:bounded 2} yields
\begin{equation}\label{eq:bounded 3}
    \sum_{i=1}^K Q^n_i(t)\geq \frac{\epsilon}{2c}n^{1/2}, \qquad t \in I^n.
\end{equation} 

Now, for any non decreasing sequence $a_1,\ldots,a_K$ we have:

$$\sum_{i=1}^K a_i=\sum_{i=1}^K(a_1+a_i-a_1)\leq Ka_1+(K-1)(a_K-a_1).$$

Thus, we obtain:
\begin{equation}\label{eq:bounded new}
    \sum_{i=1}^K Q^n_i(t)\leq K\min_i\{Q^n_i(t)\}+(K-1)\max_{i,j}|Q^n_i(t)-Q^n_j(t)|
\end{equation}

By the SDDP assumption, for large enough $n$, we have $\forall t \geq 0$
\begin{equation}\label{eq:bounded 4}
    \max_{i,j}\vert Q^n_i(t)-Q^n_j(t) \vert \leq \frac{1}{K-1} \frac{\epsilon}{4c}n^{1/2}.
\end{equation}
Combining Equations \eqref{eq:bounded 3} \eqref{eq:bounded new} and \eqref{eq:bounded 4} imply that $\min_i\{Q_i(t)\}>0$, for all $t \in I^n$, as desired.

Since the servers in the $K$-system are non-idling during $I^n$, the total nominal workload is drained at a maximal rate of $\sum_{i=1}^K \mu^n_i=\mu^n$. Namely
\begin{align}\label{eq:final step 1}
    \sum_{i=1}^K W^n_i(I^n]=\sum_{i=1}^K (W^n_{A,i}(I^n]-\mu^n_i|I^n|)={\mathcal{W}}^n_{\mathcal{A}}(I^n]-\mu^n |I^n|.
\end{align}
On the other hand, using \eqref{eq:nom w S}, we have
\begin{align}\label{eq:final step 2}
    {\mathcal{W}}^n(I^n]&={\mathcal{W}}^n(\tau^n)-{\mathcal{W}}^n(\sigma^n)={\mathcal{W}}^n_{\mathcal{A}}(\tau^n)-\mu^n(\tau^n-{\mathcal{I}}^n(\tau^n))-({\mathcal{W}}^n_{\mathcal{A}}(\sigma^n)-\mu^n(\sigma^n-{\mathcal{I}}^n(\sigma^n)))\cr
    &={\mathcal{W}}^n_{\mathcal{A}}(I^n]-\mu^n|I^n|+\mu^n{\mathcal{I}}^n(I^n]\geq {\mathcal{W}}^n_{\mathcal{A}}(I^n]-\mu^n|I^n|,
\end{align}
where in the last inequality we used the fact that the idle process has non decreasing sample paths.

Combining Equations \eqref{eq:final step 1} and \eqref{eq:final step 2} yields
\begin{align*}
    \sum_{i=1}^N W^n_i(I^n]-{\mathcal{W}}^n(I^n]\leq {\mathcal{W}}^n_A(I^n]-\mu^n|I^n|-( {\mathcal{W}}^n_{\mathcal{A}}(I^n]-\mu^n|I^n|)=0,
\end{align*}
which contradicts Equation \eqref{eq: combined}. We conclude that for all $n$ large enough we have $\mathbbm{P}(\tau^n<T)=0$, concluding the proof.
\qed \\
\end{proof}

\section{Connecting Communication to Performance}\label{sec: connecting comm to per}
Finally, we are ready to connect the communication-approximation and approximation-performance links for studying the communication-performance link. We will prove that all of the approaches we developed above, namely, RT-$r$, DT-$x$ or ET-$x$, combined with the basic, MSR or MSR-$x$ approximation algorithms, result in admissible approximations provided that their communication rate is of order $\omega(n^{1/2})$. By Theorem \ref{thm:SDDP}, this implies that the queue lengths undergo SDDP, which in turn, by Theorem \ref{thm:as optimal}, implies that asymptotically optimal performance is achieved. The main take-away is that asymptotically optimal performance can be achieved in the model under consideration with a communication rate that can be significantly slower than the rate events occur in the system, i.e., of order $\Theta(n)$.

As a reminder, the basic, MSR, and MSR-$x$ algorithms (see Definitions \ref{def:basic}, \ref{def:MSR} and \ref{def:MSR-x} respectively) are special cases of the queue length emulation approach (see Definition \ref{def:qls}) for the approximation algorithm. RT-$r$, DT-$x$ and ET-$x$ (see Definitions \ref{def:rt}, \ref{def:dt} and \ref{def:et} respectively) are the communication patterns of the servers. 
For simplicity, we assume that under RT all servers send messages at the same rate $r$, and with a slight abuse of notation write RT-$r$ for this communication pattern.  

\begin{theorem}\label{thm:connecting}
Fix sequences $r^n \in \mathbbm{R}_+$ and $x^n,y^n\in\mathbbm{N}$. If:
\begin{enumerate}
    \item RT-$r^n$ and the basic, MSR or MSR-$x^n$ approximation algorithms are used where $r^n \in \omega(n^{1/2})$, or
    \item DT-$x^n$ and either the basic or MSR-$y^n$ approximation algorithms are used where $x^n,y^n \in o(n^{1/2})$, or
    \item ET-$x^n$ and \textbf{any} queue length emulation based approximation algorithm (which includes, but not limited to, the basic, MSR, and MSR-$y^n$ algorithms) are used, where $x^n \in o(n^{1/2})$,
\end{enumerate}
Then: the resulting approximations are admissible and \eqref{property:aoa} holds.

\end{theorem}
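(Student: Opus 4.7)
The plan is to handle the three cases separately: cases 2 and 3 reduce to the deterministic error bounds already derived in Section \ref{sec: The Communication-Approximation Link}, whereas case 1 (RT-$r^n$) is the substantive one and requires a probabilistic argument based on the $\calC$-tightness of the centered, scaled potential service processes $\hat{S}^n_i$.

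For case 3, the defining property of ET-$x^n$ forces $AE_i(t)\le x^n-1$ deterministically for every $t$, every $i$, and \emph{any} queue-length emulation algorithm (cf.\ the discussion preceding Proposition~\ref{prop:et gurantees}). Dividing by $n^{1/2}$ and using $x^n\in o(n^{1/2})$ immediately yields \eqref{property:aoa}. For case 2 with the basic algorithm, \eqref{eq:error of basic dt} gives $AE^B_i(t)\le x^n-1$, so $x^n\in o(n^{1/2})$ suffices. For DT-$x^n$ combined with MSR-$y^n$, between two consecutive DT-messages server $i$ experiences at most $x^n-1$ actual departures while the emulated departure count is capped at $y^n-1$; the elementary inequality $|a-b|\le a\vee b$ for $a,b\ge 0$ therefore gives $AE^{M,y^n}_i(t)\le (x^n\vee y^n)-1$, again $o(n^{1/2})$ under the hypothesis.

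Case 1 is the main work. Under RT-$r^n$, consecutive messages from server $i$ are separated by exactly $1/r^n$ units of time. For any queue-length emulation algorithm, the emulated departure process between messages is dominated by the deterministic MSR emulation, so $\tilde{D}^n_i[t_i^m,t]\le \mu^n_i(t-t_i^m)\le \bar\mu_i n/r^n$; this dominates the basic case ($\tilde D^n_i\equiv 0$) and the MSR-$x^n$ case (extra truncation) as well. Consequently
\begin{equation*}
  AE_i(t)\le D^n_i[t_i^m,t]+\bar\mu_i n/r^n,\qquad t\in[t_i^m,t_i^{m+1}).
\end{equation*}
To control the actual departures, I would exploit the time-change representation $D^n_i(t)=S^n_i(t-I^n_i(t))$. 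Because the cumulative busy time $t\mapsto t-I^n_i(t)$ is non-decreasing and $1$-Lipschitz, for any $0\le s<u\le T$ the increment $D^n_i[s,u]$ equals an increment of $S^n_i$ across an interval of length at most $u-s$ inside $[0,T]$, which gives
\begin{equation*}
  \sup_{t\in[t_i^m,t_i^{m+1})} D^n_i[t_i^m,t]\le \omega_T(S^n_i,1/r^n).
\end{equation*}
Writing $S^n_i(t)=\mu^n_i t+n^{1/2}\hat{S}^n_i(t)$ yields $\omega_T(S^n_i,1/r^n)\le n^{1/2}\omega_T(\hat{S}^n_i,1/r^n)+\bar\mu_i n/r^n$. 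Combining the estimates and dividing by $n^{1/2}$ gives
\begin{equation*}
  n^{-1/2}\|AE_i\|_T\le \omega_T(\hat{S}^n_i,1/r^n)+2\bar\mu_i\, n^{1/2}/r^n.
\end{equation*}
When $r^n\in\omega(n^{1/2})$ both $1/r^n\to 0$ and $n^{1/2}/r^n\to 0$, and $\calC$-tightness of $\hat{S}^n_i$ (recalled below \eqref{eq: sclaed primitives}) implies $\omega_T(\hat{S}^n_i,1/r^n)\to 0$ in probability. A union bound over $i\in[K]$ completes the proof.

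The principal obstacle is case 1: unlike the adaptive DT and ET patterns, RT offers no pathwise bound on the approximation error, so one must translate the deterministic inter-message gap $1/r^n$ into a probabilistic control on a stochastic departure increment. The critical step is the time-change inequality $D^n_i[s,u]\le \omega_T(S^n_i,u-s)$, which lets $\calC$-tightness of $\hat{S}^n_i$ do the rest; the remaining bookkeeping is routine.
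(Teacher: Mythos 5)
Your proposal is correct and follows essentially the same route as the paper's proof: deterministic bounds of order $x^n\vee y^n$ for the DT and ET cases, and for RT the decomposition $AE_i\le D^n_i+\tilde D^n_i$ with $\tilde D^n_i\le\mu^n_i/r^n$ and the actual-departure increment controlled via the time change $D^n_i=S^n_i(\cdot-I^n_i(\cdot))$, the centering $S^n_i(t)=\mu^n_i t+n^{1/2}\hat S^n_i(t)$, and $\calC$-tightness of $\hat S^n_i$ together with $n^{1/2}/r^n\to0$. The resulting bound $\omega_T(\hat S^n_i,1/r^n)+2\bar\mu_i n^{1/2}/r^n$ matches the paper's inequality \eqref{eq:final 6} exactly.
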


The implications of Theorem \ref{thm:connecting} for DT-$x$ and ET-$x$ are the following. Consider the long term rate of communication, given by $M^n(T)/T$. 
Suppose for simplicity that $x_n=y_n=n^{1/2-\epsilon}$, where $0<\epsilon<0.01$. By Propositions \ref{prop:dt comm guarnatee}, \ref{prop:et gurantees} and \ref{prop:et msr gurantees}, 
for any combination of DT-$x_n$ or ET-$x_n$ with the basic, MSR or MSR-$x_n$ approximation algorithms, we have:
$$M^n(T)\leq \frac{1}{x_n}(S^n(T)+Cn)=\frac{1}{x_n}\big(\sum_i S_i(\bar{\mu}_i nT)+Cn\big)=n^{1/2+\epsilon}\big(\sum_i S_i(\bar{\mu}_i nT)/n+C\big),$$
where $C>0$ is a constant and the first equality is due to \eqref{eq: pot serv proc}. Since $S_i$ is a renewal process, $S_i(\bar{\mu}_i nT)/n$ converges to a constant almost surely. Therefore, $M^n(T)/T$ is of order, for example, $o(n^{1/2+2\epsilon})$, much smaller than than the communication rate needed for full state information, which is of order $\Theta(n)$.\\

\begin{proof}

\paragraph{Proof for RT-$r^n$.}   Fix $T>0$ and $\epsilon>0$. Define 

\begin{equation}\label{eq:tau rt}
    \tau^n=\inf\{t \geq 0: \exists i\in [K] \text{ s.t. } |\tilde{Q}^n_{i}(t)-Q^n_{i}(t) |> \epsilon n^{1/2}\}
\end{equation}
The result will follow once we prove that $\mathbbm{P}(\tau^n\leq T)\rightarrow 0$ as $n \rightarrow \infty$. 

The queue $i$ is random. Appealing to the union bound, we have
\begin{align}\label{eq:const freq union bound}
&\PP(\{\tau^n\leq T\})\le\sum_{i_0=1}^{K}
\PP(\Om^n(i_0)),\quad\text{where}\\
\notag
&\Om^n(i_0)=\Big\{\text{there exists } t\in[0,T] \  \text{such that }
|\tilde{Q}^n_{i_0}(t)-Q^n_{i_0}(t) |> \epsilon n^{1/2}\Big\}.
\end{align}
To prove the result,
it thus suffices to show that each summand $\PP(\Om^n(i_0))$
converges to zero with $n$.
We focus on a fixed, deterministic $i_0$, and, with a slight abuse
of notation, refer to it again as $i$. We also write $\Om^n$ for $\Om^n(i_0)$.

Under $\Om^n$, by the definition of the approximations in \eqref{eq:approximations diffusive section}, $t$ cannot be a messaging time. Define $s$ as the last time before $t$ such that the load balancer received an update on the queue length in server $i$: 
\begin{equation}\label{eq:sigma const appr}
    s=\sup_m\{t^m_i: t^m_i<t\}.
\end{equation}

Denote the corresponding time interval by $J=(s,t]$, and note that $|J|=t-s\leq 1/r^n$. 

By \eqref{eq:approximations diffusive section} and \eqref{eq:actual ql diffusion section} we have 
\begin{equation}\label{eq:final 1}
    |\tilde{Q}^n_{i}(t)-Q^n_{i}(t) |=|D^n_i(s,t]-\tilde{D}^n_i(s,t]|\leq D^n_i(s,t]+\tilde{D}^n_i(s,t].
\end{equation}
Now, under any of the approximation algorithms we consider, the quantity $\tilde{D}^n_i(s,t]$ is at most the maximum number of emulated departures during $(s,t]$, corresponding to jobs with constant service requirements of $1/\mu^n_i$. Thus 
\begin{equation}\label{eq:final 2}
    \tilde{D}^n_i(s,t]\leq \mu_i^n(t-s)\leq \mu_i^n/r^n.
\end{equation}
As for $D^n_i(s,t]$, we have
\begin{align}\label{eq:final 3}
    D^n_i(s,t]=D^n_i(t)-D^n_i(s)=S^n_i(t-I^n_i(t))-S^n_i(s-I^n_i(s))&\leq S^n_i(t-I^n_i(s))-S^n_i(s-I^n_i(s)),
\end{align}
where we have used \eqref{eq:D vs S} and the fact that the idle time process $I^n_i$ and the potential service process $S_i^n$ have non-decreasing sample paths.

Combining \eqref{eq:const freq union bound}, \eqref{eq:final 1}, \eqref{eq:final 2} and \eqref{eq:final 3}, we conclude that under $\Om^n$ we have
\begin{align}\label{eq:final 4}
    S^n_i(t-I^n_i(s))-S^n_i(s-I^n_i(s))+\mu^n_i/r^n\geq D^n_i(s,t]+\tilde{D}^n_i(s,t]\geq \vert \tilde{Q}^n_{i}(t)-Q^n_i(t)\vert >\epsilon n^{1/2}.
\end{align}
Now, by the scaling in \eqref{eq: sclaed primitives}, we have:

$$S^n_i(t-I^n_i(s))=n^{1/2}\hat{S}^n_i(t-I^n_i(s))+\mu^n_i(t-I^n_i(s))$$
and 
$$S^n_i(s-I^n_i(s))=n^{1/2}\hat{S}^n_i(s-I^n_i(s))+\mu^n_i(s-I^n_i(s)).$$
Therefore
\begin{align}\label{eq:final 5}
    S^n_i(t-I^n_i(s))-S^n_i(s-I^n_i(s))&=n^{1/2}(\hat{S}^n_i(t-I^n_i(s))-\hat{S}^n_i(s-I^n_i(s)))+\mu^n_i(t-s)\cr
    &\leq n^{1/2} w_T(\hat{S}^n_i,1/r^n)+\mu^n_i/r^n,
\end{align}
where $w_T$ is the modulus of continuity and we have used the fact that 
$$t-I^n_i(s)-(s-I^n_i(s))=t-s\leq 1/r^n.$$

Combining \eqref{eq:final 4} and \eqref{eq:final 5}, dividing by $n^{1/2}$ and using the fact that $\mu^n_i=\bar{\mu}_in$, we obtain that under $\Om^n$ we have
\begin{align}\label{eq:final 6}
    w_T(\hat{S}^n_i,1/r^n)+2\bar{\mu}_i n^{1/2}/r^n >\epsilon .
\end{align}

Due to the $\mathcal{C}$-tightness of $\hat{S}^n_i$, and the fact that $r^n \in \omega(n^{1/2})$ and therefore $n^{1/2}/r^n$ and $1/r^n$ both converge to zero, we have that the left hand side of \eqref{eq:final 6} converges to zero as $n$ tends to infinity. Therefore $\mathbbm{P}(\Om^n)\rightarrow 0$ as $n \rightarrow \infty$, which, by \eqref{eq:const freq union bound}, concludes the proof. 

\qed 


\paragraph{Proof for DT-$x^n$.} 
Let $\tau^n$, $\Om^n$ and $t$ be defined as in \eqref{eq:tau rt} and \eqref{eq:const freq union bound}. Again, $t$ cannot be a messaging time by the definition of $\Om^n$. Let $s$ be the last messaging time before $t$, as defined in \eqref{eq:sigma const appr}.

Using Equation \eqref{eq:final 1}, we have
$$|\tilde{Q}^n_{i}(t)-Q^n_{i}(t) |=|D^n_i(s,t]-\tilde{D}^n_i(s,t]|\leq D^n_i(s,t] \vee \tilde{D}^n_i(s,t]\leq x^n \vee \tilde{D}^n_i(s,t],$$
where the last inequality is by the definition of the DT-$x^n$ communication pattern. Now, under the basic and MSR-${y^n}$ algorithms,  $\tilde{D}^n_i(s,t]=0$ and $\tilde{D}^n_i(s,t] \leq y^n-1$, respectively. Therefore, for these algorithms, we have 
$$\mathbbm{P}(\Om^n)\leq \mathbbm{P}(x^n \vee y^n>\epsilon n^{1/2}) \xrightarrow{n \rightarrow \infty}0,$$
where the convergence is due to the fact that $x^n, y^n\in o(n^{1/2})$.

\qed 
\paragraph{Proof for ET-$x^n$.} 
By the definition of the ET-$x^n$ communication pattern, the approximation error never exceeds $x^n$. Therefore
$$\mathbbm{P}(\Om^n)\leq \mathbbm{P}(x^n >\epsilon n^{1/2}) \xrightarrow{n \rightarrow \infty}0,$$
where the convergence is due to the fact that $x^n\in o(n^{1/2})$.

\qed 
\end{proof}
\section{Simulations}\label{sec:simulations}

In this section we present simulation results that support the theoretical findings of this paper. They also provide insights into the relationship between communication, approximation and performance under the different algorithms we proposed. Our main goal is to demonstrate that it is possible to achieve good performance using very sparse communication. We begin by describing the simulation setting and the metrics that we use. We proceed with studying the communication-approximation trade-off, followed by studying the performance given a communication constraint, with an emphasis on the performance of ET-$x$ combined with MSR.

\subsection{Simulation Setting}

We simulate a discrete time system with $K$ servers and a single load balancer. At each time slot $t \in \mathbbm{N}$, a job arrives to the load balancer with probability $0<\lambda<1$. The service time requirements of jobs are i.i.d.~Geometric($1/K$) random variables. Each server processes jobs using the FIFO service discipline and completes a unit of work at each time slot (if it has a job to process). Thus, at each time slot, the maximum amount of work that can be completed by the servers is $K$, and the average amount of work that enters the system equals $\lambda K$. We therefore refer to the parameter $\lambda$ as the \textit{load} on the system. \\

\noindent \textbf{Compared algorithms.} We simulate all of our proposed algorithms, namely, combinations of the approximation algorithms (basic, MSR-x and MSR) and communication patterns (RT-$r$, DT-$x$ and ET-$x$). 

Our benchmark for comparison are five conventional algorithms. At one extreme, we consider the optimal JSQ algorithm, which requires exact state information and at least one message per departure in the system. At the other extreme, we consider the state-independent Round Robin (RR) algorithm, where jobs are assigned to servers in a deterministic cyclic fashion. The third algorithm is Power of Choice (SQ(2)), where at each arrival time, two serves are chosen uniformly at random and the job is sent to the server with the shortest queue of the two. As described in the introduction, SQ(2) also requires at least one message per departure, but we consider it because of its popularity and the interest in comparing its performance to that of our proposed algorithms.

The forth and fifth algorithms are Join-the-Idle-Queue (JIQ) and Persistent-Idle (PI), which we will, for ease of exposition, discuss separately from RR and SQ(2). JIQ sends an incoming job to one of the idle servers, and if there are none, sends it to a server chosen uniformly at random. PI sends an incoming job to one of the idle servers, and if there are none,  sends it to the server to which it had sent the previous job. JIQ and PI are perceived as using a small amount of communication, but we will demonstrate that this is not the case in our setting, except for extremely high loads.

\begin{remark}
Note that the Random algorithm, where each job is assigned to a server uniformly at random, is also a state-independent candidate for comparison. However, our simulations indicate that it performs significantly worse than $RR$ across all loads, and therefore it is not shown.
\end{remark}

\noindent \textbf{Comparison method.} For fixed $K$ and $\lambda$, we simulate the different algorithms using \textit{the same input}, namely, the same job arrival times and service times requirements. Thus, the simulation of the different algorithms differs only in the load balancer routing decisions, which allows us to make a fair comparison.\\

\noindent \textbf{Comparison metrics.} We are interested in two metrics: communication amount and job completion times. 

\vspace{5pt}

\noindent \textit{Communication.} For the communication, we run long simulations of our proposed algorithms along with JSQ, and count how many messages pass in the system and compare it to how many messages JSQ uses (i.e., the number of departures). Since over a long period of time the number of departures roughly equals the number of arrivals, we would expect the long term communication rate of JSQ to equal $\lambda$. This allows us to determine the long term use of communication of our algorithms compared to the communication needed for exact state information. 

\vspace{5pt}

\noindent \textit{Job completion times.} For each job that enters the system, we measure the number of time slots that pass from its arrival to the time slot it finishes being serviced and leaves the system. We then look at the distribution of the job completion times (JCTs) over a long time period.\\

\noindent \textbf{Simulation parameters.} 
The parameters for the simulation are: $K,\lambda,x$, and $r$. For exposition purposes we focus on the following parameter choices. The number of servers is chosen to be $K=30$. The loads we consider are $\lambda\in \{0.5,0.8,0.95\}$, corresponding to moderate to high loads. The values for $x$, the maximal approximation error under DT-$x$ and ET-$x$, as well as the truncation value for MSR-x, are
$x\in\{2,3,4,5,6,7,8\}$. Finally, for fixed $K,\lambda,x$, whenever RT-$r$ is compared to DT-$x$ and/or ET-$x$, we choose $r=\frac{\lambda}{K x}$ for the communication amount to be comparable: Under DT-$x$, a message is sent every $x$ departures, and over a long time period the number of departures approximately equals the number of arrivals. Thus we expect the total message rate to be close to $\lambda/x$, which implies that each server sends messages at a rate of $\frac{\lambda}{K x}$.

\subsection{Communication vs. Maximal Approximation Error}
In this section we verify our findings from Section \ref{sec: The Communication-Approximation Link} regarding how much communication our proposed algorithms use for a given maximal approximation error guarantee. Most importantly, we wish to demonstrate the validity of Theorems \ref{thm:summary 2} and \ref{thm:summary 3}, namely, that ET-$x$ with MSR achieves a maximal approximation error of $y\in\{1,2,3,4,5,6,7\}$ ($y=x-1$), while using a fraction of the communication of JSQ that decays quadratically with $y$.

In the following experiments, we run long simulations of the different load balancing algorithms with different loads, using the same inputs, and measure the number of messages that were sent by the servers. We only show results concerning ET-$x$ combined with MSR or MSR-$x$ for the following reasons: (1) variations using RT-$r$ are not shown because their relative communication requirements are determined apriori using the value of $r$, (2) for DT-$x$, as Proposition \ref{prop:dt comm guarnatee} suggests, for a guaranteed maximal approximation error of $y=x-1$, the relative communication requirement of DT-$x$ with any approximation algorithm and any load is extremely close to $1/(y+1)$ and hence is not shown, and (3) ET-$x$ combined with the basic approximation algorithm is identical to DT-$x$ combined with the basic approximation algorithm, and so is not shown. 

Figure \ref{fig:et_msr_comm} depicts the results for ET-$x$ with MSR. We see that for all loads, the relative communication decreases quadratically with the approximation error guarantee. Moreover, we see that the upper bound we derived in Theorem \ref{thm:summary 3} is not tight and that the communication is reduced substantially more. Remarkably, for example, ET combined with MSR guarantees an error that does not exceed 2 while using less than 11$\%$ of the communication of JSQ, for all loads.

\begin{figure}[H]
\centering
\includegraphics[width=0.6\linewidth]{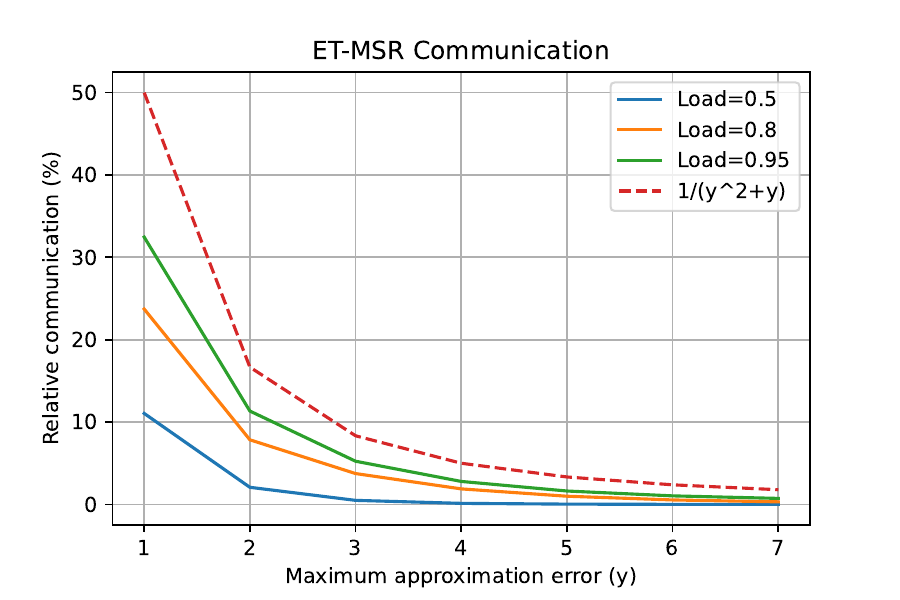}
\caption{The communication requirement of ET-$x$ with MSR for different loads and maximal approximation error $y$ ($=x-1$), relative to the communication required for full state information. The required communication decreases quadratically with $y$, and is lower than the upper bound derived in Theorem \ref{thm:summary 3}.  }
\label{fig:et_msr_comm}
\end{figure}

Figure \ref{fig:et_msrx_comm} depicts the results for ET-$x$ combined with MSR-$x$. We see that the relative communication requirements are far below the theoretical upper bound given in Proposition \ref{prop:et gurantees} for all loads, but are larger compared to ET-$x$ with MSR. Specifically, for a guaranteed maximal approximation error of $y=x-1$, the relative communication requirement of ET-$x$ with MSR-$x$ is much less than $1/(y+1)$. 

\begin{figure}[H]
\centering
\includegraphics[width=0.6\linewidth]{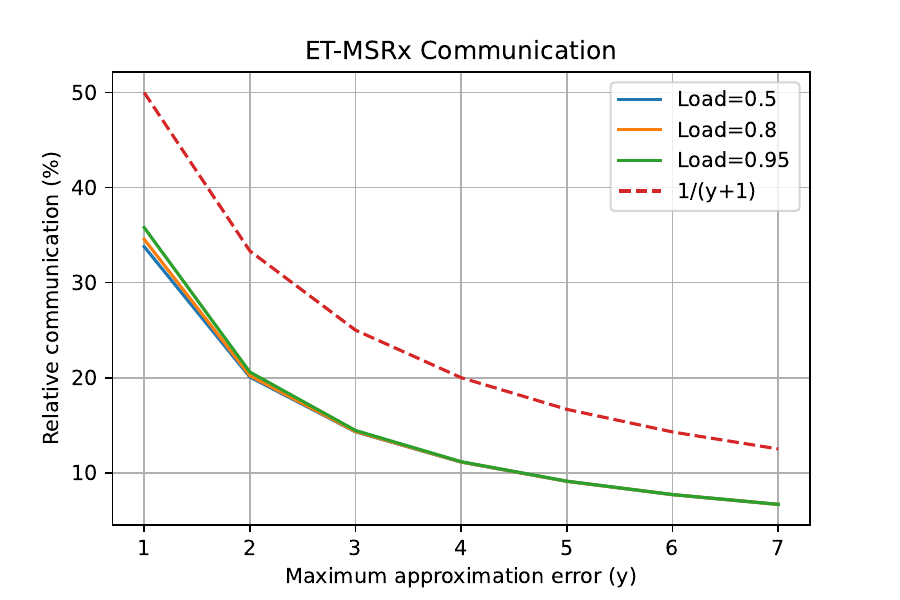}
\caption{The communication requirement of ET-$x$ with MSR-$x$ for different loads and maximal approximation error $y$ ($=x-1$), relative to the communication required for full state information. The required communication decreases is lower than the upper bound derived in Theorem \ref{thm:summary 1}, but higher than that of ET-$x$ with MSR. }
\label{fig:et_msrx_comm}
\end{figure}

\subsection{Communication vs. Performance}
In this section we focus on the best performance our algorithms achieve for different communication constraints and loads. Our main goal is to show that good performance can be achieved using sparse communication and that the ET-$x$ MSR combination is superior to the rest of the algorithms when the communication constraint is substantial. 

In the following experiments, we run long time simulations of all algorithms and all loads. For each load and algorithm, we measure the job completion times and the communication requirement compared to that of JSQ. We then consider different values for the relative communication constraint (e.g., the relative communication cannot exceed 50$\%$ of the communication JSQ uses). For each such constraint, we compare all the different algorithms that respect it using their Complement Cumulative Distribution Functions (CCDFs) of the job completion times. Luckily, in the vast majority of cases, the CCDFs mostly dominate one another (i.e., one has equal or lower values than the other for most percentiles), which allows for straightforward comparisons. The result is that for every load, there is a clear winning algorithm. Note that for all loads RT-$r$ with any approximation algorithm  is never the best performing algorithm.

Figure \ref{fig:best_0.5} depicts the results for a load of 0.5. We see that the ET-$x$ combined with MSR dominates when the communication constraint is 25$\%$ or less. We also see that it offers a significant improvement over Round Robin and even SQ(2) for constraints larger or equal to 15$\%$. For a constraint of 10$\%$ the performance is worse than that of Round Robin which uses no communication at all. 
\begin{figure}[H]
\centering
\includegraphics[width=0.6\linewidth]{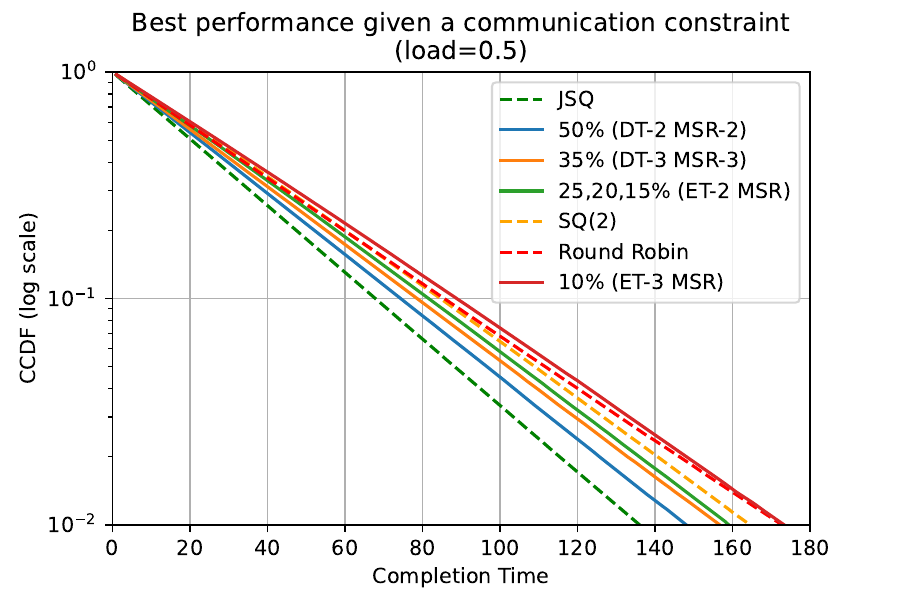}
\caption{The job completion time CCDF of the best performing algorithms for communication constraints of 10, 15, 20, 25, 35 and 50 percent of the communication required for full state information, compared to JSQ, SQ(2) and RR. ET-$x$ MSR performs best for sparse communication. DT-$x$ with MSR-$x$ is useful when more communication is allowed.  }
\label{fig:best_0.5}
\end{figure}
Figure \ref{fig:best_0.8} depicts the results for a load of 0.8. We see much more substantial gaps between the performance of JSQ and SQ(2) and SQ(2) and Round Robin than for 0.5 load. ET-$x$ combined with MSR dominates when the communication constraint is 25$\%$ and 10$\%$. Remarkably, it behaves almost identically to SQ(2) for $x=3$, while using 10$\%$ of the communication.

\begin{figure}[H]
\centering
\includegraphics[width=0.6\linewidth]{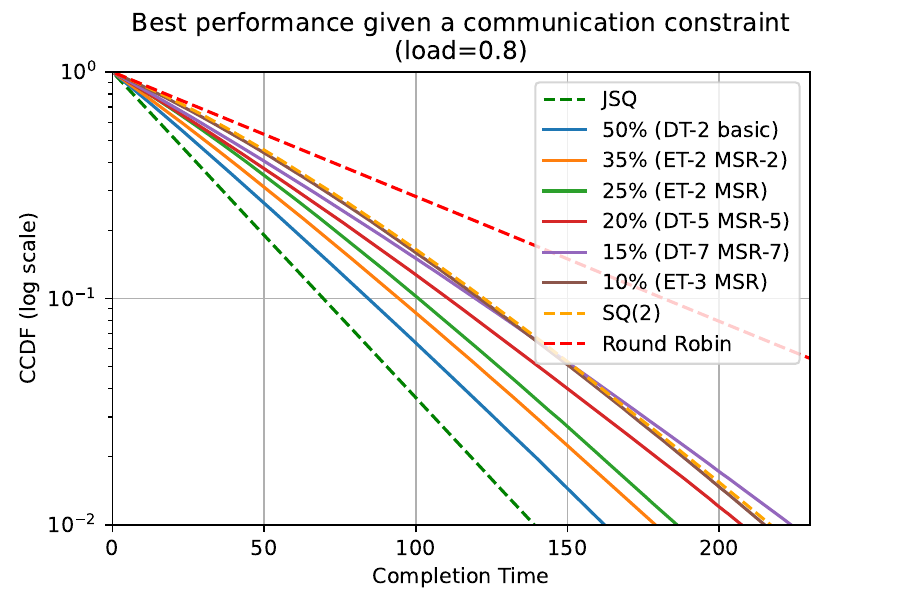}
\caption{The job completion time CCDF of the best performing algorithms for communication constraints of 10, 15, 20, 25, 35 and 50 percent of the communication required for full state information, compared to JSQ, SQ(2) and RR. ET-$x$ MSR performs best for very sparse communication. DT-$x$ with MSR-$x$ is useful when more communication is allowed. The performance of our algorithms is better than SQ(2) and RR for all constraints above 10 percent.}
\label{fig:best_0.8}
\end{figure}

We also see that there is room for improvement over Round Robin for even lower percentages of communication. We demonstrate this in Figure \ref{fig:et_msr_0.8}, which shows the performance only of ET-$x$ with MSR, which dominates this sparse communication regime. Exact values of the relative communication requirements for different $x$ values are displayed (these coincide with the values depicted previously in Figure \ref{fig:et_msr_comm}). We see that even when the communication is reduced by more than 99$\%$, we obtain a sizable improvement over Round Robin.

\begin{figure}[H]
\centering
\includegraphics[width=0.6\linewidth]{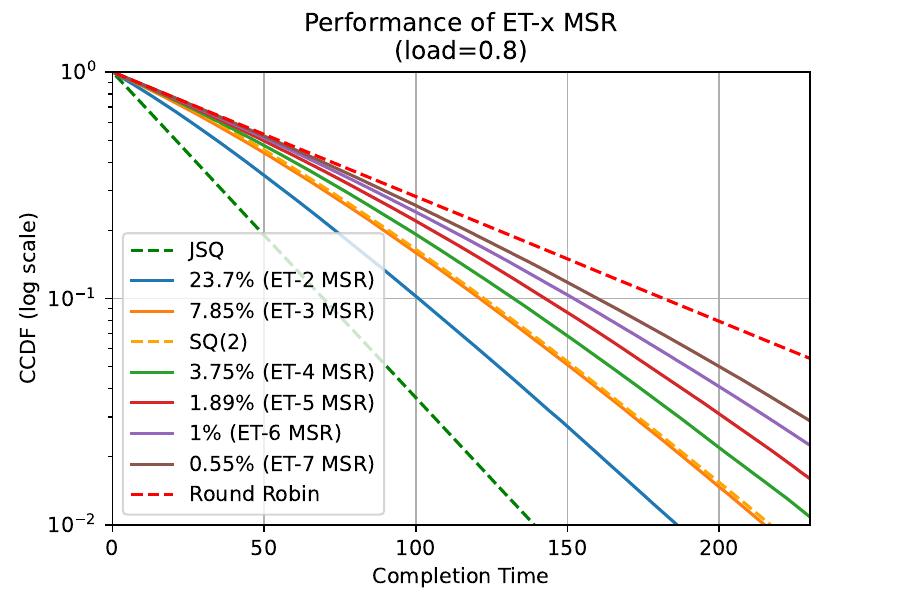}
\caption{The job completion time CCDF of ET-$x$ with MSR for $x\in\{2,\ldots,7\}$ compared to JSQ, SQ(2) and RR. The labels indicate the required relative communication. ET-$3$ with MSR is comparable to SQ(2) using only $7.85\%$ relative communication. Remarkably, ET-$x$ with MSR performs significantly better than RR even when using less than 2$\%$ relative communication.}
\label{fig:et_msr_0.8}
\end{figure}

Finally, Figure \ref{fig:best_0.95} depicts the results for a load of 0.95. We again see that ET-$x$ with MSR is the best for stricter communication constraints. Interestingly, for constraints larger or equal to 20$\%$, DT-$x$ with the basic approximation algorithm performs best, always better than SQ(2) and Round Robin. 

\begin{figure}[H]
\centering
\includegraphics[width=0.6\linewidth]{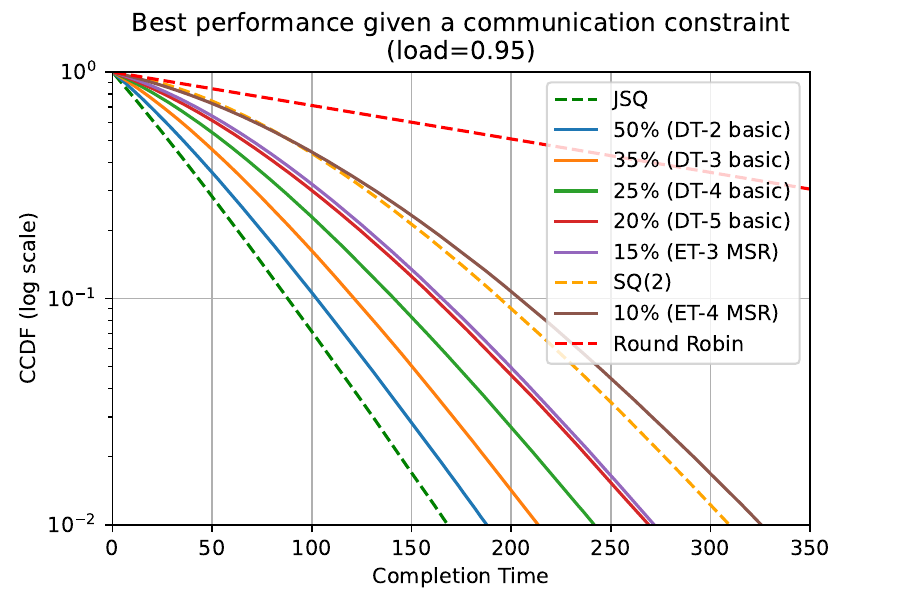}
\caption{The job completion time CCDF of the best performing algorithms for communication constraints of 10, 15, 20, 25, 35 and 50 percent of the communication required for full state information, compared to JSQ, SQ(2) and RR. ET-$x$ MSR performs best for very sparse communication. DT-$x$ with the basic approximation algorithm is useful when more communication is allowed. The performance of our algorithms is better than SQ(2) and RR for all constraints above 10 percent. }
\label{fig:best_0.95}
\end{figure}

As in the case of 0.8 load, we see a lot of room for improvement over Round Robin with even smaller communication constraints. Figure \ref{fig:et_msr_0.95} shows the performance only of ET-$x$ with MSR. We see that even when the communication is reduced by more than 99$\%$, we obtain a substantial improvement over Round Robin, more so than the improvement we observed for a load of 0.8.
\begin{figure}[H]
\centering
\includegraphics[width=0.6\linewidth]{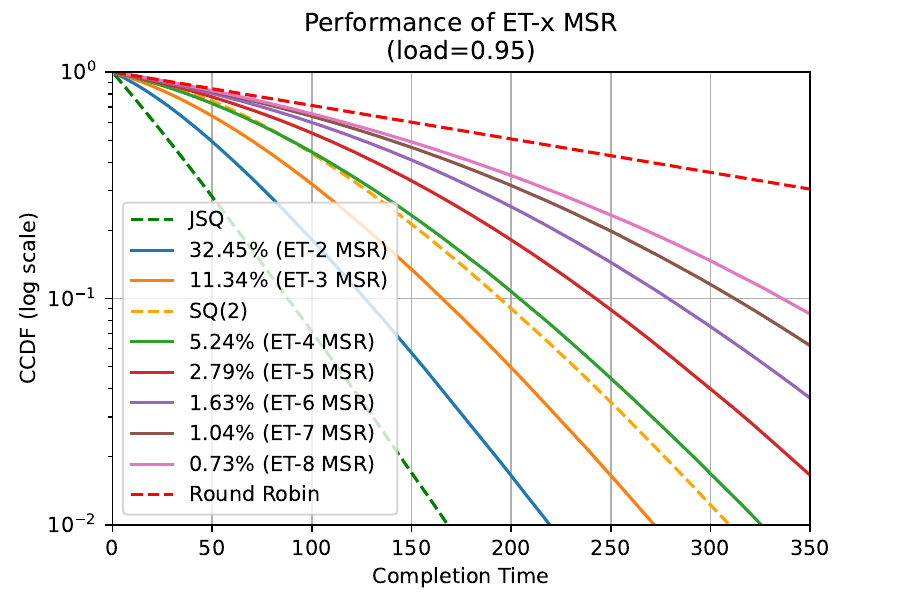}
\caption{The job completion time CCDF of ET-$x$ with MSR for $x\in\{2,\ldots,8\}$ compared to JSQ, SQ(2) and RR. The labels indicate the required relative communication. ET-$x$ with MSR performs significantly better than RR even when using less than 2$\%$ relative communication. }
\label{fig:et_msr_0.95}
\end{figure}

\subsection{Communication vs. Performance: JIQ and PI}
In this section we consider the idle-based algorithms JIQ and PI. We begin by demonstrating their required communication, which depends on the load. We then compare their performance to the algorithms we proposed, as well as to JSQ. 

\begin{figure}[H]
\centering
\includegraphics[width=0.6\linewidth]{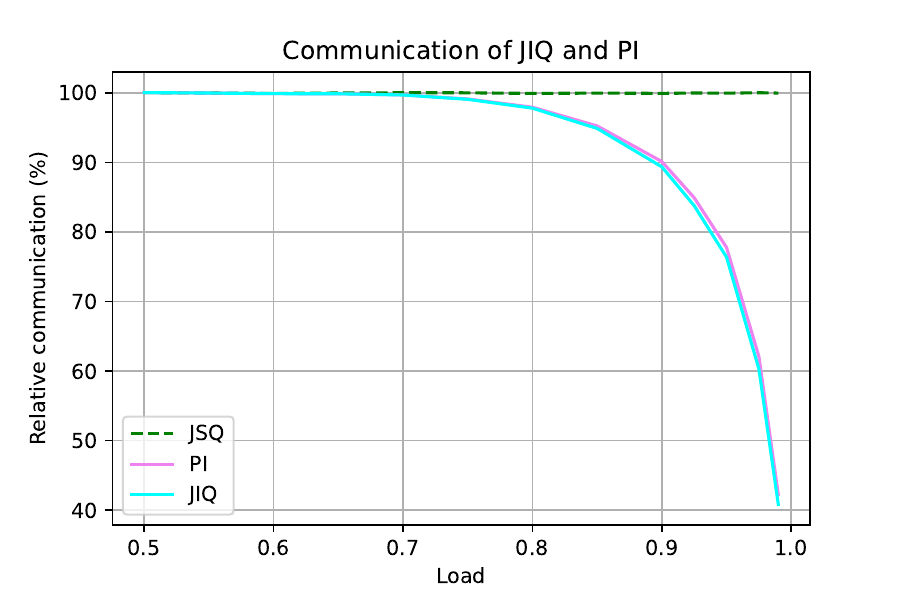}
\caption{The amount of required communication of JIQ and PI relative to that of JSQ as a function of the load in a setting with 30 servers. JIQ and PI can be regraded as using sparse communication only for extremely high loads. }
\label{fig:rev_jiq_pi_comm}
\end{figure}

Figure \ref{fig:rev_jiq_pi_comm} depicts the amount of required communication of JIQ and PI relative to that of JSQ as a function of the load in a setting with 30 servers. We can see that JIQ and PI require almost the same amount of communication for all loads and that their requirements are almost the same as that of JSQ for loads below $70\%$. In addition, their requirements drop below $50\%$ only for loads higher than approximately $98\%$. Based on these findings, as opposed to the sparse communication algorithms we propose, JIQ and PI are not suitable algorithm choices for most loads when the communication constraint is substantial. 

Given the above discussion, it would be more informative to compare the performance of JIQ and PI against our proposals only under higher loads, where their respective communication intensities are meaningfully lower than that of JSQ.  Figure \ref{fig:rev_jiq_pi_per_95} depicts the performance in terms of job completion times of JIQ and PI, as well as JSQ, Round Robin and the relevant approximation based algorithms we proposed, namely DT-2 basic and ET-2 MSR, for a load of $0.95$. JIQ and PI use around $70\%$ communication, DT-2 basic $50\%$ and ET-2 roughly $35\%$. We can see that while DT-2 uses less communication than JIQ and PI, it is competitive in terms of performance. In addition, the CCDF of JIQ and PI have larger tails.

\begin{figure}[H]
\centering
\includegraphics[width=0.6\linewidth]{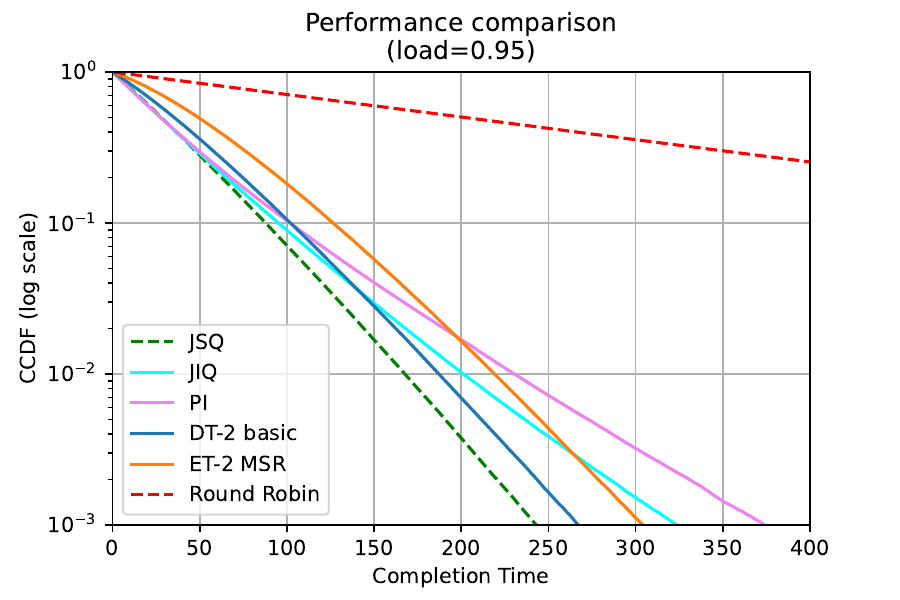}
\caption{The job completion time CCDF of JSQ, JIQ, PI, DT-2 basic, ET-2 MSR and Round Robin for a load of 0.95. DT-2 uses less communication than JIQ and PI and is competitive in terms of performance. In addition, the CCDF of JIQ and PI have larger tails. }
\label{fig:rev_jiq_pi_per_95}
\end{figure}

Figure \ref{fig:rev_jiq_pi_per_95} depicts the performance for a load of $0.95$. JIQ and PI use around $48\%$ communication, DT-2 basic $50\%$ and ET-2 roughly $35\%$. We can see that while DT-2 uses roughly the same amount of communication as JIQ and PI, it is better in terms of performance. ET-2 MSR uses less communication but performs well. In addition, the CCDF of JIQ and PI have much larger tails.

\begin{figure}[H]
\centering
\includegraphics[width=0.6\linewidth]{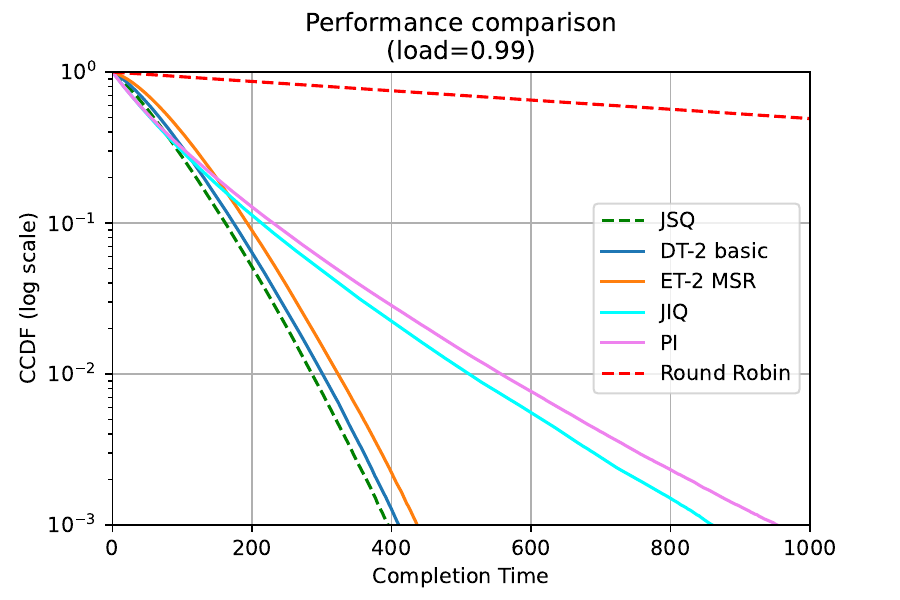}
\caption{The job completion time CCDF of JSQ, DT-2 basic, ET-2 MSR, JIQ, PI,  and Round Robin for a load of 0.95. DT-2 performs better than JIQ and PI. ET-2 MSR performs well, and both have much smaller tails than JIQ and PI.  }
\label{fig:rev_jiq_pi_per_99}
\end{figure}


\bibliography{mybib}
\bibliographystyle{apalike}

\end{document} 
